\newcommand{\eps}{\ensuremath{\varepsilon}}
\newcommand{\cL}{\mathcal{L}}
\newcommand{\cR}{\mathcal{R}}
\newcommand{\bE}{\ensuremath{\mathbb{E}}}
\newcommand{\bN}{\ensuremath{\mathbb{N}}}
\newcommand{\cX}{\ensuremath{\mathcal{X}}}
\newcommand{\cY}{\ensuremath{\mathcal{Y}}}
\newcommand{\tY}{\ensuremath{\widetilde{Y}}}
\newcommand{\bit}{\ensuremath{\{0, 1\}}}
\newcommand{\from}{\ensuremath{ \leftarrow }}
\newcommand{\dist}[2]{\ensuremath{ \Delta \left(#1; #2 \right) }}
\newcommand{\distCond}[3]{\ensuremath{ \Delta \left(#1; #2 \left| #3 \right. \right) }}
\newcommand{\supp}[1]{\ensuremath{\mathbf{supp}({#1})}}
\newcommand{\unif}[1]{\ensuremath{ U_{#1} }}
\newcommand{\minEnt}[1]{\ensuremath{H_\infty(#1)}}
\newcommand{\condMinEnt}[2]{\ensuremath{H_\infty(#1|#2)}}
\newcommand{\avgCondMinEnt}[2]{\ensuremath{\tilde{H}_\infty(#1|#2)}}
\newcommand{\blocks}[2]{\ensuremath{ #1_{1}, \ldots, #1_{#2} }}
\newcommand{\functionBlocks}[3]{\ensuremath{ #3(#1_{1}), \ldots, #3(#1_{#2}) }}
\newcommand{\givenBy}{\ensuremath{\sim}}
\newcommand{\closeTo}[1]{\ensuremath{\approx_{#1}}}
\newcommand{\concat}{\ensuremath{\circ}}
\newcommand{\mutualInfoCond}[3]{\ensuremath{I(#1 ; #2 | #3)}}
\newcommand{\indicate}[1]{\ensuremath{\mathbbm{1}\{{#1}\}}}
\newcommand{\ecc}{\ensuremath{\mathtt{EC}}}
\newcommand{\extractorParam}[5]{\ensuremath{ #1 : [ #2 , #3 \mapsto #4 \sim #5 ] }}
\newcommand{\sext}{\ensuremath{\mathtt{SExt}}}
\newcommand{\mac}{\ensuremath{\mathtt{MAC}}}
\newcommand{\samp}{\ensuremath{\mathtt{samp}}}
\newcommand{\ext}{\ensuremath{\mathtt{ext}}}
\newcommand{\fnmext}{\ensuremath{\mathbf{FNMExt}}}
\newcommand{\twonmext}{\ensuremath{\mathbf{2NMExt}}}
\newcommand{\nmExt}{\ensuremath{\mathbf{2NMExt}}}
\newcommand{\betterTwoNMExt}{\ensuremath{\mathbf{nmRaz}}}
\newcommand{\liExt}{\ensuremath{\mathbf{Li}}}
\newcommand{\razExt}{\ensuremath{\mathbf{Raz}}}
\newcommand{\treExt}{\ensuremath{\mathbf{Tre}}}
\newcommand{\crTreExt}{\ensuremath{\mathbf{crTre}}}
\newcommand{\outerExt}{\ensuremath{\mathbf{E}}}
\newcommand{\innerExt}{\ensuremath{\mathbf{C}}}
\newcommand{\outerOutputLength}{\ensuremath{m}}
\newcommand{\innerOutputLength}{\ensuremath{d}}
\newcommand{\treError}{\ensuremath{\eps_T}}
\newcommand{\razError}[1]{\ensuremath{2^{-(1.5)#1}}}
\newcommand{\liError}{\ensuremath{\eps_L}}
\newcommand{\collisionError}{\ensuremath{\eps_{Collision}}}
\newcommand{\fnmError}{\ensuremath{\eps_{fnm}}}
\newcommand{\twonmError}{\ensuremath{\eps_{tnm}}}
\newcommand{\outerError}{\ensuremath{\delta_{\outerExt}}}
\newcommand{\innerError}{\ensuremath{\delta_{\innerExt}}}
\newcommand{\modifiedTampering}{\ensuremath{T_{f, g}}}
\newcommand{\YLeft}{\ensuremath{Y_\ell}}
\newcommand{\YRight}{\ensuremath{Y_r}}
\newcommand{\YLeftTampered}{\ensuremath{g(Y)_\ell}}
\newcommand{\YLeftLength}{\ensuremath{n_\ell}}
\newcommand{\YRightLength}{\ensuremath{n_r}}
\newcommand{\poly}{\text{poly}}
\newcommand{\onote}[1]{{\color{magenta} \footnotesize(MO: #1)}}
\newtheorem{theorem}{Theorem}
\newtheorem{definition}{Definition}
\newtheorem{corollary}{Corollary}
\newtheorem{claim}{Claim}
\newtheorem{lemma}{Lemma}
\newtheorem{ques}{Question}
\newtheorem{question}{Question}
\newtheorem{remark}{Remark}
\title{Extractors: Low Entropy Requirements Colliding With Non-Malleability}
\begin{document}

\author[1]{Divesh Aggarwal}
\author[1]{Eldon Chung}
\author[1]{Maciej Obremski}
\affil[1]{\footnotesize National University of Singapore.\\\texttt{\href{mailto:divesh@comp.nus.edu.sg}{divesh@comp.nus.edu.sg}, \href{mailto:eldon.chung@u.nus.edu}{eldon.chung@u.nus.edu}}, \texttt{\href{mailto:obremski.math@gmail.com}{obremski.math@gmail.com}}}

% and
% Eldon Chung\thanks{National University of Singapore. \texttt{eldon.chung@u.nus.edu}
% \and
% Maciej Obremski\thanks{ National University of Singapore. \texttt{obremski.math@gmail.com}}
% \and
% }

% \author{Divesh Aggarwal\inst{1} \and
% Eldon Chung\inst{2} \and
% Maciej Obremski\inst{3}}
%
% \authorrunning{Aggarwal et al.}
% First names are abbreviated in the running head.
% If there are more than two authors, 'et al.' is used.
%
% \institute{National University of Singapore,
% \email{divesh@comp.nus.edu.sg}\\
% \and Centre for Quantum Technologies
% \email{eldon.chung@u.nus.edu}
% \and National University of Singapore
% \email{cqtmlo@nus.edu.sg}}

\maketitle
% for the long list of lines for the main proof \allowdisplaybreaks

\begin{abstract}
Two-source extractors are deterministic functions that, given two independent weak sources of randomness, output a (close to) uniformly random string of bits. Cheraghchi and Guruswami (TCC 2015) introduced two-source non-malleable  extractors that combine the properties of randomness extraction with tamper resilience. Two-source non-malleable  extractors have since then attracted a lot of attention, and have very quickly become fundamental objects in cryptosystems involving communication channels that cannot be fully trusted. Various applications of  two-source non-malleable extractors include in particular non-malleable codes, non-malleable commitments, non-malleable secret sharing, network extraction, and privacy amplification with tamperable memory.

The best known constructions of two-source non-malleable  extractors are due to Chattopadhyay, Goyal, and Li (STOC 2016), Li (STOC 2017), and Li (CCC 2019). All of these constructions require both sources to have min-entropy at least $0.99 n$, where $n$ is the bit-length of each source. 

In this work, we introduce collision-resistant randomness extractors. This allows us to design a compiler that, given a two-source non-malleable extractor, and a collision-resistant extractor, outputs a two-source non-malleable extractor that inherits the non-malleability property from the non-malleable extractor, and the entropy requirement from the collision-resistant extractor. Nested application of this compiler leads to a  dramatic improvement of the state-of-the-art mentioned above. We obtain a construction of a two-source non-malleable extractor where one source is required to have min-entropy greater than $0.8n$, and the other source is required to have only $\text{polylog} (n)$ min-entropy. Moreover, the other parameters of our construction, i.e., the output length, and the error remain comparable to prior constructions.  

 %   The known constructions of negligible error (non-malleable) two-source extractors can be broadly classified in three categories:
%\begin{itemize}
%\item Constructions where one source has min-entropy rate about $1/2$, the other source can have small min-entropy rate, but the extractor doesn't guarantee non-malleability.
 %   \item Constructions where one source is uniform, and the other can have small min-entropy rate, and the extractor guarantees non-malleability when the uniform source is tampered.
 %   \item Constructions where both sources have entropy rate very close to $1$ and the extractor guarantees non-malleability against the tampering of both sources. 
%\end{itemize}

%The main focus of this work is the question whether we can have one construction that subsumes all the above constructions. 

%We introduce a new notion of collision resistant extractors and in using it we obtain a strong two source non-malleable extractor where we require the first source to have $0.8$ entropy rate and the other source can have min-entropy polylogarithmic in the length of the source.  

%We show how the above extractor can be applied to obtain a non-malleable extractor with output rate $\frac 1 2$, which is optimal. We also show how, by using our extractor and extending the known protocol, one can  obtain a privacy amplification secure against memory tampering where the size of the secret output is almost optimal. 
% \keywords{Two-Source Non-Malleable Extractors  \and Privacy Amplification}
\end{abstract}

\section{Introduction}\label{sec:intro}

\paragraph{Two-source extractors.}
The problem of constructing efficient two-source extractors for low min-entropy sources with negligible error has been an important focus of research in pseudorandomness for more than 30 years, with fundamental connections to combinatorics and many applications in computer science. The first non-trivial construction was given by Chor and Goldreich~\cite{CG88} who showed that the inner product function is a low-error two-source extractor for $n$-bit sources with min-entropy $(1/2+\gamma)n$, where $\gamma>0$ is an arbitrarily small constant. A standard application of the probabilistic method shows that (inefficient) low-error two-source extractors exist for polylogarithmic min-entropy. While several attempts were made to improve the construction of~\cite{CG88} to allow for sources with smaller min-entropy, the major breakthrough results were obtained after almost two decades. Raz~\cite{Raz05} gave an explicit low-error two-source extractor where one of the sources must have min-entropy $(1/2+\gamma)n$ for an arbitrarily small constant $\gamma>0$, while the other source is allowed to have logarithmic min-entropy. In an incomparable result, Bourgain~\cite{Bou05} gave an explicit low-error two-source extractor for sources with min-entropy $(1/2-\gamma)n$, where $\gamma>0$ is a small constant. An improved analysis by Lewko~\cite{Lew19} shows that Bourgain's extractor can handle sources with min-entropy $4n/9$.

\paragraph{(Seeded) non-malleable extractors.} The problem of privacy amplification against active adversaries was first considered by Maurer and Wolf~\cite{MW97}. In a breakthrough result, Dodis and Wichs~\cite{DW09} introduced the notion of seeded non-malleable extractors as a natural tool towards achieving a privacy amplification protocol in a minimal number of rounds, and with minimal entropy loss. Roughly speaking, the output of a seeded non-malleable extractor with a uniformly random seed $Y$, and a source $X$ with some min-entropy independent of $Y$, should look uniformly random to an adversary who can tamper the seed, and obtain the output of the non-malleable extractor on a tampered seed. 

More precisely, we require that
\[
    \mathbf{nmExt}(X,Y),\mathbf{nmExt}(X,g(Y)), Y
\approx_\eps U_m, \mathbf{nmExt}(X,g(Y)), Y\;,
\]
where $X$ and $Y$ are independent sources with $X$ having sufficient min-entropy and $Y$ uniformly random, $g$ is an arbitrary tampering function with no fixed points, $U_m$ is uniform over $\bit^m$ and independent of $X, Y$, and $\approx_\eps$ denotes the fact that the two distributions are $\eps$-close in statistical distance (for small $\eps$). 

Prior works have also studied seeded extractors with weaker non-malleability guarantees such as look-ahead extractors~\cite{DW09} or affine-malleable extractors~\cite{AHL16}, and used these to construct privacy amplification protocols. 

\paragraph{Non-malleable two-source extractors.}

A natural strengthening of both seeded non-malleable extractors, and two-source extractors are two-source \emph{non-malleable} extractors.  Two-source non-malleable extractors were introduced by Cheraghchi and Guruswami~\cite{CG17}. Roughly speaking, a function $\nmExt:\bit^n\times\bit^n\to\bit^m$ is said to be a non-malleable extractor if the output of the extractor remains close to uniform (in statistical distance), even conditioned on the output of the extractor inputs correlated with the original sources.
In other words, we require that
\[
    \nmExt(X,Y),\nmExt(f(X),g(Y)),Y
    \approx_\eps U_m, \nmExt(f(X),g(Y)),Y \;.
\]
where $X$ and $Y$ are independent sources with enough min-entropy, $f, g$ are arbitrary tampering functions such that one of $f, g$ has no fixed points. 

The original motivation for studying efficient two-source non-malleable extractors stems from the fact that they directly yield explicit split-state non-malleable codes~\cite{DPW18} (provided the extractor also supports efficient preimage sampling). 

The first constructions of non-malleable codes~\cite{DKO13,ADL18} relied heavily on the (limited) non-malleability of the inner-product two-source extractor. Subsequent improved constructions of non-malleable codes in the split-state model relied on both the inner-product two-source extractor~\cite{ADKO15,AO20}, and on more sophisticated constructions of the two-source non-malleable extractors~\cite{CGL16,Li17,Li19}.  Soon after they were introduced, non-malleable extractors have found other applications such as non-malleable secret sharing~\cite{GK18,ADNOPRS19}.

\paragraph{Connections, and state-of-the-art constructions.} As one might expect, the various notions of extractors mentioned above are closely connected to each other. Li~\cite{Li12} obtained the first connection between seeded non-malleable extractors and two-source extractors based on inner products. This result shows that an improvement of Bourgain's result would immediately lead to better seeded non-malleable extractors, and a novel construction of seeded non-malleable extractors with a small enough min-entropy requirement and a small enough seed size would immediately lead to two-source extractors that only require small min-entropy. However,~\cite{Li12} could only obtain seeded non-malleable extractors for entropy rate above $1/2$. 

In yet another breakthrough result,~\cite{CGL16} obtained a sophisticated construction of seeded non-malleable extractors for polylogarithmic min-entropy. Additionally, they showed that similar techniques can also be used to obtain two-source non-malleable extractors. This immediately led to improved privacy amplification protocols and improved constructions of non-malleable codes in the split-state model. Building on this result, in a groundbreaking work, Chattopadhyay and Zuckerman~\cite{CZ19} gave a construction of two-source extractors with polylogarithmic min-entropy and polynomiallly small error. All of these results have subsequently been improved in~\cite{Li16,BDT17,Coh17,Li17,Li19}. We summarize the parameters of the best known constructions of seeded extractors, two-source extractors, seeded non-malleable extractors, and two-source non-malleable extractors alongside those of our construction in Table~\ref{table:main}. We note here that all prior constructions of two-source non-malleable extractors required both sources to have almost full min-entropy. A recent result~\cite{GSZ21} has not been included in this table since it constructs a weaker variant of a non-malleable two-source extractor (that does not fulfil the standard definition) that is sufficient for their application to network extraction. Even if one is willing to relax the definition to that in~\cite{GSZ21}, the final parameters of our two-source non-malleable extractor are better!

The research over the past few years has shown that non-malleable two-source extractors, seeded non-malleable extractors, two-source extractors, non-malleable codes, and privacy amplification protocols are strongly connected to each other in the sense that improved construction of one of these objects has led to improvements in the construction of others. Some results have made these connections formal by transforming a construction of one object into a construction of another object. For instance, in addition to the connections already mentioned, Ben-Aroya et al.\ \cite{BACDLTS18} adapt the approach of~\cite{CZ19} to show explicit \emph{seeded} non-malleable extractors with improved seed length lead to explicit low-error two-source extractors for low min-entropy.

Also,~\cite{AORSS20} showed that some improvement in the parameters of non-malleable two-source extractor constructions from~\cite{CGL16,Li17,Li19} leads to explicit low-error two-source extractors for min-entropy $\delta n$ with a very small constant $\delta>0$.

\begin{table}[]
    \centering 
    % \hspace*{-4em}
    \begin{tabular}{|l|l|l|l|l|l|l|}
    \hline
    Citation                                                                                        & Left Rate                                  & Right Rate                                                & Non-malleability                                          \\\hline
       \textbf{Seeded} \\\hline
        \begin{tabular}[c]{@{}l@{}}\cite{RRV99} Theorem 1\end{tabular}                                & $\text{polylog}(n) / n$                                 & $1$                                  & None                                                 \\\hline
    \begin{tabular}[c]{@{}l@{}}\cite{GUV09} Theorem 4.17\end{tabular}                             & $\log (n) / n$                                 & $1$                                  & None                                                 \\\hline
        \textbf{Seeded, Non-malleable} \\\hline
        \cite{Li12}                                                                                    & $1/2 - \gamma$                             & $1$                                      & Right source                                      \\\hline
    \cite{CRS14}                                                                                    & $1/2 + \gamma$                             & $1$                                      & Right source                                      \\\hline
    \begin{tabular}[c]{@{}l@{}}\cite{DLWZ14} Theorem 1.4\end{tabular}                             & $1/2 + \gamma$                             & $1$                                           & Right source               \\\hline
       \cite{CGL16} & $\log^2 n/n$                     & $1$                                    & Right-source                               \\\hline
    \begin{tabular}[c]{@{}l@{}}\cite{Li17} Theorem 6.2\end{tabular}                               & $\log(n)/n$                                 & $1$                                    & Right source       \\\hline
    \begin{tabular}[c]{@{}l@{}}\cite{Li19}\end{tabular}                              & $\log(n)/n$                               & $1$                                              & Right-source        \\\hline     
    \textbf{Two-source} \\\hline
    \cite{CG88}                                                                                     & $1/2$                                      & $1/2$                                                     & None                                               \\\hline
    \cite{Bou05}                                                                                    & $1/2 - \gamma$                             & $1/2 - \gamma$                                            & None                                               \\\hline
    \cite{Raz05}                                                                                                                & $\log(n)/n$                 & $1/2 + \gamma$                               & None                                      \\\hline

    \textbf{Two-source, Non-malleable} \\\hline
    \cite{CGL16} & $1-\frac{1}{n^\gamma}$                     & $1-\frac{1}{n^\gamma}$                                    & Two-sided                               \\\hline
      \begin{tabular}[c]{@{}l@{}}\cite{Li17}\end{tabular}                              & $(1-\gamma)$                               & $(1-\gamma)$                                              & Two-sided        \\\hline     
    \begin{tabular}[c]{@{}l@{}}\cite{Li19} Theorem 1.11\end{tabular}                              & $(1-\gamma)$                               & $(1-\gamma)$                                              & Two-sided        \\\hline     
    \color{red}{This Work }                                                                                                         &\color{red}{ $\text{polylog}(n) / n$     }              & \color{red}{ $4/5 + \gamma$}                    & \color{red}{Two-sided} \\\hline
\end{tabular}
\caption{In the table, we assume that the left source has length $n$, and $\gamma$ is a very small universal constant that has a different value for different results. Most of the constructions two-source non-malleable extractors including ours allow for $t$-time tampering at the cost of a higher min-entropy requirement. In particular (as described in Remark \ref{multitamper1}, \ref{multitamper2}, and \ref{multitamper3}), for our extractor we require the left source to have min-entropy rate $\text{polylog}(n)/n$, and the right source has min-entropy rate  $(1-\frac{1}{2t+3})$.}
\label{table:main}
\end{table}

Parameters for each extractor were chosen such that the error is $2^{-\kappa^c}$ and the output length is $\Omega(\kappa)$ for some constant $c$, where $\kappa$ is the amount of entropy in the left source.

\paragraph{Best of all worlds.}

Notice that the seeded non-malleable extractor, and the two-source extractors can be seen as special case of a two-source non-malleable extractor. With this view, the known constructions of negligible error (non-malleable) two-source extractors can be broadly classified in three categories:
\begin{itemize}
\item Constructions where one source has min-entropy rate about $1/2$, the other source can have small min-entropy rate, but the extractor doesn't guarantee non-malleability.
    \item Constructions where one source is uniform, and the other can have small min-entropy rate, and the extractor guarantees non-malleability when the uniform source is tampered.
    \item Constructions where both sources have entropy rate very close to $1$ and the extractor guarantees non-malleability against the tampering of both sources. 
\end{itemize}

The main focus of this work is the question whether we can have one construction that subsumes all the above constructions. 

\begin{question}
Is there an explicit construction of a two-source non-malleable extractor which requires two sources of length $n_1$ and $n_2$, and min-entropy requirement $c n_1$ (for some constant $c < 1$), and $\poly \log n_2$, respectively, that guarantees non-malleability against the tampering of both sources, and for which the error is negligible? In particular, can we obtain a construction with parameters suitable for application to privacy amplification with tamperable memory~\cite{AORSS20}?
\end{question}

In this work, we make progress towards answering this question.

\paragraph{Applications of two-source non-malleable extractors.} Two-source non-malleable  extractors have in the recent years attracted a lot of attention, and have very quickly become fundamental objects in cryptosystems involving communication channels that cannot be fully trusted.  As we discussed earlier, two-source non-malleable extractors have applications in the construction of non-malleable codes, and in constructing two-source extractors. The other primary applications of two-source non-malleable extractors include non-malleable secret sharing~\cite{GK18,ADNOPRS19}, non-malleable commitments~\cite{GPR16}, network extractors~\cite{GSZ21}, and privacy amplification~\cite{CKOS19,AORSS20}.  

In particular, in~\cite{AORSS20}, the authors introduce an extension of privacy amplification (PA) against active adversaries where, Eve as the active adversary is additionally allowed to \emph{fully corrupt} the internal memory of one of the honest parties, Alice and Bob, before the execution of the protocol.  Their construction required two-source non-malleable extractors with one source having a small entropy rate $\delta$ (where $\delta$ is a constant close to $0$). Since no prior construction of two-source non-malleable extractor satisfied these requirements, the authors constructed such extractors under computational assumptions and left the construction of the information-theoretic extractor with the desired parameters as an open problem. Our construction in this work resolves this open problem. We do not include here the details of the PA protocol due to space constraints. We refer the reader to~\cite{AORSS20} for the PA protocol.

\paragraph{Subsequent work.} Li, inspired by our work and that of~\cite{GSZ21}, in \cite{L23} gives a two-source non-malleable extractor construction
with $\frac{2}{3}$-rate entropy in one source and $\frac{\log(n)}{n}$-rate entropy in the other. Based on the proof sketch in~\cite{L23}, the key idea of the construction and proof seems similar, the fundamental difference being the use of an  correlation breaker with advice instead of a collision resistant extractor.

\paragraph{Our Contributions and Roadmap of the Paper.}

We build two-source non-malleable extractors, with one source having polylogarithmic min-entropy, and the other source having min-entropy rate $0.81$. We introduce collision-resistant extractors, and extend and improve efficiency of the privacy amplification protocol from~\cite{AORSS20}. The following is a roadmap of the paper.  
\begin{itemize}
    \item In Section~\ref{sec:overview}, we give an overview of our technical details. 
    \item In Section~\ref{sec:prelims}, we give mathematical preliminaries needed in the paper.
    
\item In Section~\ref{sec:generic-reduction}, we give a generic transformation that, takes in (1) a non-malleable two-source extractor which requires sources with high min-entropy, and (2) a two-source extractor which requires sources with smaller min-entropy and an additional collision-resistance property, and constructs a two-source non-malleable extractor with min-entropy requirement comparable to (but slightly worse) that of the two-source extractor used by the construction. 

\item In Section~\ref{sec:coll-res-generic}, we give a generic transformation that converts any seeded extractor (two-source extractor where one of the source is uniformly distributed)  to a collision-resistant seeded extractor with essentially the same parameters.

\item In Section~\ref{sec:coll-res-raz}, we show that the two-source extractor from~\cite{Raz05} is collision resistant.

\item In Section~\ref{sec:fullynmseeded}, we apply our generic transformation from Section~\ref{sec:coll-res-generic} to the seeded extractor from~\cite{RRV99} to obtain a collision-resistant seeded extractor. We then use the generic transformation from Section~\ref{sec:generic-reduction} along with the non-malleable extractor from~\cite{Li19} to obtain a two-source non-malleable extractor, where one of the source is uniform and the other has  min-entropy polylogarithmic in the length of the sources.

\item In Section~\ref{sec:nmraz}, we apply the generic transformation from Section~\ref{sec:generic-reduction} to the non-malleable extractor from Section~\ref{sec:fullynmseeded}, and the two-source extractor from~\cite{Raz05} to obtain a two-source non-malleable extractor where one source is required to have polylogarithmic min-entropy and the source is required to have min-entropy rate greater than $0.8$.

\item Applications:
\begin{itemize}
\item In Section~\ref{sec:ratehalfnm}, we use a generic transformation from~\cite{AKOOS21} to obtain a non-malleable two-source extractor where the length of the output is $1/2 - o(1)$ times the length of the input. Notice that via the probabilistic method, it can be shown that the output length of this construction is optimal. \footnote{The main drawback of this construction compared to the construction from Section~\ref{sec:nmraz} is that this is not a strong two-source non-malleable extractor, and hence cannot be used in most applications.}

\item In Section~\ref{App:PA}, we sketch the details of the privacy amplification protocol that uses our non-malleable two-source extractor.  We extend the protocol by \cite{AORSS20} to obtain a secret of optimal size while  maintaining security against a memory tampering adversary.
\end{itemize}
\end{itemize}

\section{Technical overview}
\label{sec:overview}

\subsection{Collision Resistant Extractors}
At the core of our non-malleable extractor compiler is a new object we call a \emph{collision resistant extractor}. An extractor is an object that takes as input two sources of randomness $X$ and $Y$ (in case of the seeded extractors $Y$ but uniform) and guarantees that, as long as $X$ and $Y$ are independent and have sufficient min-entropy, the output $\ext(X,Y)$ will be uniform (even given $Y$ \footnote{This property is often referred to as strong extraction \label{footstrong}}). A \emph{collision resistant extractor} $\innerExt$ has the added property that for all fixed-point-free functions $f$ (i.e. $f(x)\neq x$ for all $x$) the probability that $\innerExt(X,Y)=\innerExt(f(X),Y))$ is negligible \footnote{This notion might somewhat resemble various non-malleability notions, however in case of the non-malleability one would expect $\innerExt(f(X),Y))$ to be independent of $\innerExt(X,Y)$, here we only expect that those two outputs don't collide}. 

Readers might notice the resemblance to the collision resistant hashing families and the leftover hash lemma. The leftover hash lemma states that if the probability that $h(x_0,Y)=h(x_1,Y)$ is sufficiently small then $h(.,.)$ is an extractor. Obremski and Skorski (\cite{OS18}) showed that the inverse is almost true --- there exists a `core' of inputs on which every extractor has to fulfill the small collision probability property. This inverse leftover hash lemma is sadly not constructive and not efficient (the description of the core might be exponential), and thus we are unable to use it to obtain an efficient \emph{collision resistant extractor}.

We show that Raz's extractor (\cite{Raz05}) is a \emph{collision resistant extractor} with essentially the same parameters. We obtain this result by carefully modifying the original proof. The proof techniques are similar and we do not discuss the details in this section. 

We also show a generic transform that turns any seeded extractor (a two-source extractor where one source is uniform) into a \emph{collision resistant extractor} with a slight increase in the size of the seed.

\subsubsection{General Compiler for Seeded Extractors}
%Given any strong\footnote{Strong seeded extractor guarantees that $\ext(X,S)$ is uniform even given $S$, where $S$ is a short uniform seed.} seeded extractor $\ext(X,S)$, 

We first construct a collision-resistant extractor $h$ with a short output based on the Nisan-Widgerson generator \cite{NW94} or Trevisan's extractor \cite{RRV99}. Given the input $X$ and the seed $Z$, function $h$ will output $\hat{X}(Z_1)\concat\hat{X}(Z_2)\concat \cdots \concat\hat{X}(Z_t)$ where $\ecc$ is an error-correcting code of appropriate minimum distance, and $a \circ b$ denotes the concatenation of $a$ and $b$, $\hat{X} = \ecc(X)$, and $Z=Z_1\concat Z_2\concat \cdots \concat Z_t$, and $\hat{X}(Z_i)$ denotes $Z_i$-th bit of $\hat{X}$. Proof that this is an extractor follows directly from Nisan-Widgerson generator properties, while the collision resistance follows from the large distance of the error-correcting code.

We can now use any seeded extractor and the collision resistant extractor mentioned above to obtain a collision resistant seeded extractor with output size comparable to the seeded extractor.  Consider seeded extractors that take as input a random source $X$ and a short but uniform source $S$ and output $\ext(X,S)$ which is uniform (even given $S$ \footref{footstrong}).
%\footnote{This property is often referred to as strong extraction}
Let us require on input a slightly longer uniform seed $S\concat Z$ (where $\concat$ denotes concatenation), and consider the following extractor: $\innerExt(X, S\concat Z)=\ext(X,S)\concat h(X,Z)$, where $h$ is either a collision resistant hash function or a collision resistant extractor. 

The proof follows quite easily. Function $h$ ensures that collisions indeed happen with negligible probability, the only thing left to show is that $\innerExt(X,S\concat Z)$ is uniform. 
First notice that by the definition the seeded extractor $\ext(X,S)$ is uniform, so we only have to show that $h(X,Z)$ is uniform even given $\ext(X,S)$. Observe that $Z$ is uniform and independent given $X,S$, so it suffices to show that $X$ has some remaining entropy given $\ext(X,S),S$, then $h(X,Z)$ will be uniform (either by leftover hash lemma, if $h$ is a collision resistant hash function, or by the definition of collision resistant extractor). This last step can be ensured simply by setting $\ext$ to  extract fewer bits than the entropy of $X$, thus a slight penalty in the parameters. 
Also notice that $h$ above can be a fairly bad extractor in terms of the rate or the output size and seed size. We can make the output and the seed of $h$ very small and thus the parameters of $\innerExt$ will be dominated by the parameters of $\ext$.

% We also design an appropriate function $h$ based on the Nisan-Widgerson generator or Trevisan's extractor. Given the input $X$ and the seed $Z$ function $h$ will output $\ecc(X)_{|Z_1}\concat\ecc(X)_{|Z_2}\concat...\concat\ecc(X)_{|Z_t}$ where $\ecc(X)$ is an error-correcting code of appropriate minimum distance, and $Z=Z_1\concat Z_2\concat...\concat Z_t$, and $\ecc(X)_{|Z_i}$ denotes $Z_i$-th bit of $\ecc(X)$. Proof that this is an extractor follows directly from Nisan-Widgerson generator properties, while the collision resilience follows from the large distance of the error-correcting code.
%We also design an appropriate function $h$ based on the Nisan-Widgerson generator or Trevisan's extractor. Given the input $X$ and the seed $Z$ function $h$ will output $\hat{X}(Z_1)\concat\hat{X}(Z_2)\concat...\concat\hat{X}(Z_t)$ where $\ecc$ is an error-correcting code of appropriate minimum distance, $\hat{X} = \ecc(X)$, and $Z=Z_1\concat Z_2\concat...\concat Z_t$, and $\hat{X}(Z_i)$ denotes $Z_i$-th bit of $\hat{X}$. Proof that this is an extractor follows directly from Nisan-Widgerson generator properties, while the collision resistance follows from the large distance of the error-correcting code.

\subsection{Our Non-Malleable Extractor Compiler}

Our compiler takes as an input two objects, one is a collision resistant extractor (as discussed in the previous section), the other object is a strong two-source non-malleable extractor. A right-strong~\footnote{Notice that unlike many results in the literature, we need to distinguish between left strong and right strong for our extractor since the construction is inherently not symmetric.} non-malleable extractor gives the guarantee that $\ext(X,Y)$ is uniform even given $\ext(f(X),g(Y))$ and $Y$ (or $X$ in case of a left-strong non-malleable extractor) for any tampering functions $f,g$ where at least one of them are fixed-point-free. When we refer to a non-malleable extractor as strong without specifying if it's left-strong or right-strong we mean that the non-malleable extractor is both left-strong and right-strong. 
%Our compiler combines the best properties of the two extractors- non-malleability of $\outerExt$ and good entropy properties of $\innerExt$. 
The construction is as follows: For a collision resistant extractor $\innerExt$, and a strong non-malleable extractor $\outerExt$ we consider following extractor:
\begin{equation}
    \twonmext(X, Y_\ell \concat Y_r) := \outerExt( Y_\ell \concat Y_r,  \innerExt(X, Y_\ell) ) \;.
\end{equation}
We will show that $\twonmext$ inherits the best of both worlds --- strong non-malleability of $\outerExt$ and the good entropy requirements of $\innerExt$.

There are two main issues to handle:
\paragraph{Issue of the independent tampering.}
Notice that the definition of the non-malleable extractor guarantees that $\ext(X,Y)$ is uniform given $\ext(X',Y')$ only if the sources are tampered independently (i.e. $X'$ is a function of only $X$, and $Y'$ is a function of only $Y$).

To leverage the non-malleability of $\outerExt$, we need to ensure that the tampering $X\rightarrow X'$ and $Y_\ell\concat Y_r \rightarrow Y'_\ell\concat Y'_r$ translates to the independent tampering of $Y_\ell\concat Y_r \rightarrow Y'_\ell \concat Y'_r$ and $\innerExt(X,Y_\ell) \rightarrow \innerExt(X',Y'_\ell)$. The problem is that both tamperings depend on $Y_\ell$. To alleviate this issue we will simply reveal $Y_\ell$ and $Y'_\ell$ (notice that $Y'_\ell$ can depend on $Y_r$ thus revealing $Y_\ell$ alone is not sufficient). Once $Y_\ell=y_\ell$ and $Y'_\ell=y'_\ell$ are revealed (and therefore fixed) the tampering $y_\ell\concat Y_r \rightarrow y'_\ell \concat Y'_r$ and $\innerExt(X,y_\ell) \rightarrow \innerExt(X',y'_\ell)$ becomes independent since right tampering depends only on $X$, which is independent of $Y_\ell\concat Y_r$ and remains independent of $Y_r$ even after we reveal $Y_\ell$ and $Y'_\ell$ (this extra information only lowers the entropy of $Y_r$).

\paragraph{Issue of the fixed points (or why we need collision resistance).} Non-malleable extractors guarantee that $\ext(X,Y)$ is uniform given $\ext(X',Y')$ if and only if $(X,Y) \neq (X',Y')$. 

The issue in our compiler is clear: If $Y_\ell \concat Y_r$ do not change, and $X$ is tampered to be $X'\neq X$ but $\innerExt(X',Y_\ell)=\innerExt(X,Y_\ell)$ then 
% \begin{equation}
%     \twonmext(X, Y_\ell \concat Y_r) = \outerExt( Y_\ell \concat Y_r,  \innerExt(X, Y_\ell) ) = \outerExt( Y_\ell \concat Y_r,  \innerExt(X', Y_\ell) ) =\twonmext(X', Y_\ell \concat Y_r)\;.
% \end{equation}
\begin{align*}
    &\twonmext(X, Y_\ell \concat Y_r) = \outerExt( Y_\ell \concat Y_r,  \innerExt(X, Y_\ell) ) \\
    &= \outerExt( Y_\ell \concat Y_r,  \innerExt(X', Y_\ell) ) =\twonmext(X', Y_\ell \concat Y_r)\;.
\end{align*}
To mitigate this problem, we require $\innerExt$ to be collision resistant, which means the probability that $\innerExt(X,Y_\ell)=\innerExt(X',Y_\ell)$ is negligible thereby resolving this issue. It is also possible to use $\innerExt$ without the collision resilience property, this gives a weaker notion of non-malleable extractor as was done in~\cite{GSZ21} .%--- for further discussion please see Section \ref{GoyalVariant}.

\paragraph{Is $\twonmext$ strong?} Here we briefly argue that if $\outerExt$ is strong (i.e. both left and right strong) then $\twonmext$ will also be strong. To argue that compiled extractor is  left-strong, we notice that revealing $X$ on top of $Y_\ell$ and $Y'_\ell$ (which we had to reveal to maintain independence of tampering) translates to revealing $\innerExt(X,Y_\ell)$ which reveals right input of $\outerExt$ (revealing of $Y_\ell$ and $Y'_\ell$ is irrelevant since $Y_r$ maintains high enough entropy). As for the right-strongness, revealing $Y_r$ on top of $Y_\ell$ and $Y'_\ell$ translates to revealing of the left input of $\outerExt$, notice that $\innerExt(X,Y_\ell)$ remains uniform given $Y_\ell$ by the strong extraction property of $\innerExt$. 

$ $\\
For our construction, we will apply the compiler twice. First, we will use a collision resistant seeded extractor and the Li's extractor \cite{L19}. This gives us a strong non-malleable extractor $\fnmext$ for the first source with poly-logarithmic entropy, and the second source being uniform. We will refer to this object as a \emph{fully non-malleable seeded extractor}. We emphasize that this object is stronger than the seeded non-malleable extractor since it guarantees non-malleability for both sources. Then, we will then apply our compiler to Raz's extractor \cite{Raz05} and $\fnmext$ which will produce an extractor $\betterTwoNMExt$ that is a strong non-malleable extractor for the first source with poly-logarithmic entropy and the second source with entropy rate\footnote{Entropy rate is a ratio of min-entropy of the random variable to its length: $\frac{\minEnt{X}}{|X|}$}  $0.8$.

\subsubsection{Compiling Seeded Extractor with Li's Extractor}
In this section we will apply our compiler to the collision resistant seeded extractor $\crTreExt$ and strong non-malleable extractor $\liExt$ from $\cite{L19}$, yielding the following construction: 
\begin{equation}
    \fnmext(X,Y_\ell\concat Y_r)=\liExt(Y_\ell\concat Y_r, \crTreExt(X,Y_\ell)).
\end{equation}
The extractor $\liExt(\text{0.99}, \text{0.99})$ requires both sources to have a high entropy rate of $~99\%$\footnote{This is a simplification, formally speaking there exist a constant $\delta$ such that sources are required to have entropy rate above $1-\delta$. The reader may think of $\delta=0.01$.}, while the extractor $\crTreExt(\text{poly-log}, \text{uniform})$ requires first source to have poly-logarithmic entropy, and the second source to be uniform. 
Let us analyse the entropy requirements of the extractor $\fnmext$:
Since part of the construction is $\crTreExt(X,Y_\ell)$ we require $Y_\ell$ to be uniform, which means that whole $Y_\ell \concat Y_r$ has to be uniform. On the other hand $X$ has to only have a poly-logarithmic entropy. The output of $\crTreExt(X,Y_\ell)$ will be uniform which will fulfill the $0.99$ entropy rate requirement of $\liExt$. There is a small caveat: While $Y_\ell \concat Y_r$ is uniform one has to remember that we had to reveal $Y_\ell$ and $Y'_\ell$ to ensure independent tampering, therefore we only have to make sure that $Y_\ell$ is very short so $Y_\ell \concat Y_r$ will have over $0.99$ entropy rate even given $Y_\ell$ and $Y'_\ell$. This is possible since $\crTreExt$ requires only a very short seed length. Thus we get that $\fnmext(\text{poly-log},\text{uniform})$ requires first source to have poly-logarithmic entropy, while the second source is uniform, and non-malleability is guaranteed for both sources.

\subsubsection{Compiling Raz's Extractor with the Above}
Now we will compile Raz's extractor \cite{Raz05} with above obtained $\fnmext$. The result will be:
\begin{equation}
    \betterTwoNMExt(X,Y_\ell\concat Y_r)=\fnmext(Y_\ell\concat Y_r, \razExt(X,Y_\ell)).
\end{equation}
As we discussed above $\fnmext(\text{poly-log},\text{uniform})$ requires first source to have poly-logarithmic entropy, while the second source has to be uniform, $\razExt(\text{poly-log},0.5)$ requires first source to have poly-logarithmic entropy while the second source has to have over $0.5$ entropy rate. Therefore we require $Y_\ell$ to have an entropy rate above $0.5$ and it is sufficient if $X$ has poly-logarithmic entropy. As for requirements enforced by $\fnmext$, since the output of $\razExt$ will be uniform we only have check if $Y_\ell\concat Y_r$ has poly-logarithmic entropy given $Y_\ell$ and $Y'_\ell$. Given that $Y'_\ell$ can not lower the entropy of $Y_r$ by more than its size $|Y'_\ell|$ we have two equations:
\begin{align*}
    &\minEnt{Y_r} > |Y_\ell| \\
    &\minEnt{Y_\ell} > 0.5 |Y_\ell|
\end{align*}
which implies
\begin{align*}
    &\minEnt{Y_\ell\concat Y_r} > 2|Y_\ell| \\
    &\minEnt{Y_\ell\concat Y_r} > |Y_r| + 0.5 |Y_\ell|
\end{align*}
which asserts that $\frac{\minEnt{Y_\ell\concat Y_r}}{|Y_\ell\concat Y_r|} > 0.8$. Therefore $\betterTwoNMExt(\text{poly-log}, 0.8)$ requires first source to have poly-logarithmic entropy, while second source has to have entropy rate above $0.8$.

Finally notice that $\razExt$ has a relatively short output (shorter than both inputs) but that is not a problem since $\fnmext$ can have its first input much longer than the second input. We can adjust the output size of $\crTreExt$ to accommodate the input size requirements of $\liExt$ (this extractor requires both inputs to have the same length). We stress however that taking into consideration all inputs requirements both in terms of entropy and in terms of sizes is not trivial and our construction is tuned towards seeded-extractors and the Raz's extractor.

\section{Preliminaries}
\label{sec:prelims}

\subsection{Random Variables, Statistical Distance and Entropy}
For any set $S$, we denote by $\unif{S}$  the uniform distribution over the set $S$. For any positive integer $m$, we shorthand $\unif{\bit^m}$ by $\unif{m}$. For any random variable $X$, we denote the support of $X$ by $\supp{X}$. Also, for any random variable $X$ and event $E$, we denote by $X|_E$ the random variable $X'$ such that for all $x \in \supp{X}$, $\Pr[X'=x] = \Pr[X=x|E]$. 

% The min-entropy of a random variable $X$ is defined to be $-\log_2 \max_{x \in \supp{X}} \Pr[X=x]$. 

\begin{definition}[Statistical Distance]
    Let $X, Y \in S$ be random variables. The \emph{statistical distance} between $X$ and $Y$ is defined by
    \begin{equation*}
\dist{X}{Y}:=        \frac{1}{2} \sum_{a \in S} \lvert \Pr[X = a] - \Pr[Y = a] \rvert
    \end{equation*}
    or equivalently,
    \begin{equation*}
    \dist{X}{Y}:=    \max_{A \subseteq S} \lvert \Pr[X \in S] - \Pr[Y \in S] \rvert.
    \end{equation*}
    
We shorthand the statement $\dist{X}{Y} \leq \eps$ by $X \closeTo{\eps} Y$ and we sometimes write this as $X$ is $\eps$-close to $Y$. 
\end{definition}

For any random variables $A, B, C$, and event $E$, we shorthand $\dist{A,C}{B,C}$ by $\distCond{A}{B}{C}$, and $\dist{A|_E}{B|_E}$ by $\distCond{A}{B}{E}$ i.e.,
    \begin{equation*}
        \distCond{A}{B}{C} = \dist{A, C}{B, C}\;,
    \end{equation*}
    and 
    \[
      \distCond{A}{B}{E} = \dist{A|_E}{B|_E}\;.
    \]

%Intuitively $\distCond{A}{B}{C}$ is the statistical distance between $A$ and $B$ but weighted according to the probability that $C$ takes a certain value. 
The following lemma is immediate from the definitions and triangle inequality.
\begin{lemma}
    Let $A, B, C$ be random variables such that $A, B \in S$ and $\supp{C} = T$ with $T = T_1 \cup T_2, T_1 \cap T_2 = \emptyset$. Then:

    \begin{enumerate}
        \item $\distCond{A}{B}{C} \le \sum_{c \in T} \Pr[C = c] \distCond{A}{B}{C = c}$ 
        \item $\distCond{A}{B}{C} \le \Pr[C \in T_1] \distCond{A}{B}{C \in T_1} + \Pr[C \in T_2] \distCond{A}{B}{C \in T_2}$
    \end{enumerate}
\end{lemma}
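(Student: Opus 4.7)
Both parts reduce to unfolding the definition of statistical distance on joint distributions, so the argument is bookkeeping rather than a substantive inequality.

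For Part 1, I would expand
\[
\distCond{A}{B}{C} \;=\; \dist{(A, C)}{(B, C)} \;=\; \frac{1}{2} \sum_{a \in S,\, c \in T} \bigl| \Pr[A=a, C=c] - \Pr[B=a, C=c] \bigr|,
\]
factor each joint probability as $\Pr[A=a, C=c] = \Pr[C=c]\Pr[A=a \mid C=c]$ (and analogously for $B$), and pull $\Pr[C=c]$ out of the absolute value. What remains inside the sum over $a$ is exactly $2\,\distCond{A}{B}{C=c}$, yielding
\[
\distCond{A}{B}{C} \;=\; \sum_{c \in T} \Pr[C=c]\, \distCond{A}{B}{C=c}.
\]
So Part 1 actually holds with equality, and the stated $\le$ is loose.

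For Part 2, I would start from the equality just obtained and partition the sum along $T = T_1 \sqcup T_2$. Each piece is then matched with $\Pr[C \in T_i]\,\distCond{A}{B}{C \in T_i}$ by repeating the Part 1 manipulation inside the probability space conditioned on $C \in T_i$, where $C$ has renormalized mass $\Pr[C = c] / \Pr[C \in T_i]$ on $T_i$. Adding the two contributions gives the claimed bound, again as an equality.

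\textbf{Main obstacle.} There is no real mathematical difficulty; the only point that deserves explicit care is the reading of $\distCond{A}{B}{C \in T_i}$. For the identification in the previous paragraph to go through, this quantity must be taken as the joint statistical distance of $(A, C)$ versus $(B, C)$ after conditioning on the event $C \in T_i$ (i.e.\ keeping $C$ inside the tuple). If one instead applied the earlier definition $\distCond{A}{B}{E} = \dist{A|_E}{B|_E}$ verbatim with $E = \{C \in T_i\}$, dropping $C$, one would in general get a strictly smaller quantity and flip the direction of the bound. I would therefore state this convention up front, after which the proof is a one-line rearrangement of the definition.
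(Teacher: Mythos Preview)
Your proposal is correct and matches the paper's (essentially absent) proof, which only notes that the lemma is ``immediate from the definitions and triangle inequality''; your sharper observation that both parts are in fact equalities is accurate. You are also right to flag the reading of $\distCond{A}{B}{C\in T_i}$: for the stated direction to hold---and to match the paper's later use, where the full tuple including $Y$ is conditioned on events in $Y$---it must be interpreted as the distance between $(A,C)|_{C\in T_i}$ and $(B,C)|_{C\in T_i}$, not between the marginals $A|_{C\in T_i}$ and $B|_{C\in T_i}$.
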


We will need the following standard lemmas.
\begin{lemma}[Lemma 10 of \cite{ADKO15} ]\label{lma:XORLemma}
    Let $\blocks{X}{m}$ be binary random variables and for any non-empty $\tau \subseteq [m]$, $\lvert \Pr[\bigoplus_{i \in \tau} X_i = 0] - \frac{1}{2} \rvert \leq \eps$, then $\dist{\blocks{X}{m}}{\unif{m}} \leq \eps \cdot 2^{\frac{m}{2}}$.
\end{lemma}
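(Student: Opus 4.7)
The plan is to prove this via Fourier analysis on the hypercube $\bF_2^m$, which is the standard Vazirani XOR argument. First, I would associate to the joint distribution $(\blocks{X}{m})$ the probability mass function $p : \bit^m \to \mathbb{R}$ given by $p(x) = \Pr[(\blocks{X}{m}) = x]$, and for each $\tau \subseteq [m]$ define the Fourier coefficient
\[
    \hat{p}(\tau) \;:=\; \sum_{x \in \bit^m} p(x)\, (-1)^{\bigoplus_{i \in \tau} x_i} \;=\; \bE\!\left[(-1)^{\bigoplus_{i \in \tau} X_i}\right].
\]
A direct calculation gives $\hat{p}(\emptyset) = 1 = \hat{u}(\emptyset)$, where $u(x) := 2^{-m}$ is the uniform mass function, while for nonempty $\tau$ we have
\[
    \hat{p}(\tau) \;=\; 2\Pr\!\Big[\bigoplus_{i \in \tau} X_i = 0\Big] - 1,\qquad \hat{u}(\tau) = 0,
\]
so the hypothesis yields $|\hat{p}(\tau)| \leq 2\eps$ for every nonempty $\tau$.

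Next, I would set $q := p - u$ and apply Parseval's identity. Since $\hat{q}(\emptyset) = 0$ and $\hat{q}(\tau) = \hat{p}(\tau)$ for $\tau \neq \emptyset$, Parseval (in the convention $\sum_x q(x)^2 = 2^{-m}\sum_\tau |\hat{q}(\tau)|^2$) gives
\[
    \sum_{x \in \bit^m} q(x)^2 \;=\; \frac{1}{2^m} \sum_{\emptyset \neq \tau \subseteq [m]} |\hat{p}(\tau)|^2 \;\leq\; \frac{2^m - 1}{2^m} \cdot (2\eps)^2 \;\leq\; 4\eps^2.
\]

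Finally, I would translate this $L^2$ estimate into an $L^1$ (statistical-distance) bound using Cauchy--Schwarz:
\[
    2\,\dist{\blocks{X}{m}}{\unif{m}} \;=\; \sum_{x \in \bit^m} |q(x)| \;\leq\; \sqrt{2^m}\cdot\sqrt{\textstyle\sum_x q(x)^2} \;\leq\; 2\eps\cdot 2^{m/2},
\]
which yields exactly $\dist{\blocks{X}{m}}{\unif{m}} \leq \eps \cdot 2^{m/2}$. I do not expect a serious obstacle here: this is the textbook Vazirani argument, and the only subtlety is fixing the normalization conventions for the Fourier transform so that Parseval carries the correct factor of $2^m$, and observing that the $\tau = \emptyset$ coefficients of $p$ and $u$ cancel so that the trivial term $\hat{p}(\emptyset) = 1$ never enters the bound.
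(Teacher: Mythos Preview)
Your argument is correct: this is exactly the standard Vazirani XOR-lemma proof via Fourier/Parseval plus Cauchy--Schwarz, and all the normalizations are handled properly. Note, however, that the paper does not give its own proof of this lemma; it is stated in the preliminaries as Lemma~10 of \cite{ADKO15} without argument, so there is nothing in the paper to compare your approach against.
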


\begin{lemma}\label{lma:randDistinguisher}
    Let $X, Y$ be random variables. Further let $f_I$ be a family of functions $f$ indexed by set $I$ and let $S$ be a random variable supported on $I$ that is independent of both $X$ and $Y$. Then $f_S$ can be thought of as a randomised function such that $f_S(x) = f_s(x)$ with probability $\Pr[S = s]$.

    Then it holds that:
    \begin{equation*}
        \dist{f_S(X)}{f_S(Y)} \leq \dist{X}{Y}.
    \end{equation*}
\end{lemma}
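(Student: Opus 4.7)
The plan is to reduce this to the standard data-processing inequality for statistical distance applied to a deterministic function, by absorbing the randomness of $S$ into the input.

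First I would re-interpret the randomized map $f_S$ as a deterministic map on a larger space. Define $g : (\supp{X} \cup \supp{Y}) \times I \to \mathrm{range}(f_I)$ by $g(x, s) := f_s(x)$. Since $S$ is independent of $X$ (resp.\ $Y$), the distribution of $g(X, S)$ equals the distribution of $f_S(X)$, and likewise for $Y$. So it suffices to show $\dist{g(X, S)}{g(Y, S)} \le \dist{X}{Y}$.

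Next I would invoke two elementary facts. First, applying a deterministic function can only decrease statistical distance: for any function $g$ and random variables $A, B$, $\dist{g(A)}{g(B)} \le \dist{A}{B}$, which follows directly from the set-based definition of $\Delta$ by observing that $\Pr[g(A)\in T] = \Pr[A \in g^{-1}(T)]$ and taking the maximum over $T$. Second, because $S$ is independent of both $X$ and $Y$,
\[
\dist{(X, S)}{(Y, S)} = \tfrac{1}{2}\sum_{x, s} \Pr[S=s]\,\bigl|\Pr[X=x] - \Pr[Y=x]\bigr| = \dist{X}{Y}.
\]
Combining these, $\dist{g(X, S)}{g(Y, S)} \le \dist{(X, S)}{(Y, S)} = \dist{X}{Y}$, which yields the claim.

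There is essentially no obstacle here: both ingredients (data processing and augmentation by an independent variable) are standard one-line facts about statistical distance. The only subtlety worth stating carefully is that $S$ is independent of $X$ and $Y$ \emph{separately} — we are not told that $(X, Y)$ is jointly independent of $S$ — but since the statement only compares marginals of $f_S(X)$ and $f_S(Y)$, the two separate independence assumptions are exactly what each half of the argument needs.
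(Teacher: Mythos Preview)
Your argument is correct. The paper does not actually supply a proof of this lemma; it is stated as a standard fact in the preliminaries and used freely thereafter. Your reduction---pairing the input with the independent randomness $S$ and then invoking the deterministic data-processing inequality together with $\dist{(X,S)}{(Y,S)}=\dist{X}{Y}$---is the canonical way to justify it, and your remark that only the separate independence of $S$ from $X$ and from $Y$ is needed (not joint independence from $(X,Y)$) is exactly right.
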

%\begin{proof}
 %   Let $X, Y, f_I$ and $S$ be defined as above. Further, let $a^f_1, a^f_2, \ldots$ denote the sequence of values in the pre-image of $a$ under $f$. Now note that the set of sequences partition the domain of $f$. 
  %  \begin{align*}
 %       \dist{f_S(X)}{f_S(Y)} &= \frac{1}{2} \sum_a \lvert \Pr[f_S(X) = a] -  \Pr[f_S(Y) = a] \rvert\\
 %                             &\leq \frac{1}{2} \sum_s \Pr[S = s] \sum_a \lvert \Pr[f_s(X) = a] -  \Pr[f_s(Y) = a] \rvert\\
  %                            &= \frac{1}{2} \sum_s \Pr[S = s] \sum_a \lvert \Pr[X \in f_s^{-1}(a)] -  \Pr[Y \in f_s^{-1}(a)] \rvert\\
   %                           &= \frac{1}{2} \sum_s \Pr[S = s] \sum_a \lvert \sum_{i}\Pr[X = a^f_i] - \sum_{i}\Pr[Y = a^f_i] \rvert\\
    %                          &\leq \frac{1}{2} \sum_s \Pr[S = s] \sum_a \sum_i \lvert \Pr[X = a^f_i] - \Pr[Y = a^f_i] \rvert\\
     %                         &= \frac{1}{2} \sum_s \Pr[S = s] \sum_b \lvert \Pr[X = b] - \Pr[Y = b] \rvert\\
      %                        &= \sum_s \Pr[S = s] \dist{X}{Y}\\
       %                       &= \dist{X}{Y}
%    \end{align*}
%\end{proof}

\begin{lemma}[Lemma 4 of \cite{DDV10}, Lemma 9 of \cite{ADKO15}]\label{lma:reveal}
    Let $A, B$ be independent random variables and consider a sequence $\blocks{V}{i}$ of random variables, where for some function $\phi$, $V_i = \phi_i(C_i) = \phi(\blocks{V}{i-1}, C_i)$ with each $C_i \in \{A, B\}$. Then $A$ and $B$ are independent conditioned on $V_1, \ldots, V_i$. That is, $\mutualInfoCond{A}{B}{\blocks{V}{i}} = 0$.
\end{lemma}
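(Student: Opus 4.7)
The plan is to prove this by induction on $i$, the number of revealed variables. The base case $i=0$ is just the hypothesis that $A$ and $B$ are independent, i.e.\ $\mutualInfoCond{A}{B}{\emptyset} = I(A;B) = 0$. For the inductive step I would assume $A \indep B$ conditioned on any realization $(v_1, \ldots, v_{i-1})$ of $(V_1, \ldots, V_{i-1})$ with positive probability, and then show that further conditioning on $V_i = v_i$ preserves this independence for any realization $v_i$ with positive conditional probability.

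The key observation is that once we fix $V_1=v_1, \ldots, V_{i-1}=v_{i-1}$, the definition $V_i = \phi(v_1, \ldots, v_{i-1}, C_i)$ makes $V_i$ a deterministic function of $C_i$ alone. Since $C_i \in \{A, B\}$, without loss of generality assume $C_i = A$; the other case is symmetric. Then, conditioning on the additional event $\{V_i = v_i\}$ is equivalent to conditioning on the event $\{A \in \phi(v_1, \ldots, v_{i-1}, \cdot)^{-1}(v_i)\}$, which is a restriction that involves only the variable $A$. I would then invoke the standard fact: if $A \indep B$ under some distribution, then conditioning on any event determined by $A$ alone preserves independence (this follows directly by writing out the joint conditional probability as a product). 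Applying this to the conditional distribution given $(v_1, \ldots, v_{i-1})$, we obtain $A \indep B$ given $(v_1, \ldots, v_{i-1}, v_i)$.

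Since this holds for every realization $(v_1, \ldots, v_i)$ with positive probability, we conclude $\mutualInfoCond{A}{B}{\blocks{V}{i}} = 0$, completing the induction. The only subtlety worth checking is the step where we argue that conditioning on a function of $A$ alone preserves $A \indep B$; this is where the whole argument lives, and it rests on nothing more than the factorization of the joint density, so I do not expect any obstacle. The induction framework is essential because $V_i$ is only guaranteed to be a function of $C_i$ \emph{after} the previous $V_j$'s are fixed, so one cannot in general decompose the whole sequence as functions of $A$ and $B$ separately without conditioning step by step.
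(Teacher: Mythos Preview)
Your induction argument is correct and is exactly the standard proof of this fact. Note that the paper itself does not supply a proof of this lemma; it is quoted from~\cite{DDV10,ADKO15} as a preliminary, and the argument in those references is precisely the step-by-step conditioning you describe.
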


\begin{definition}\label{defn:biasedAgainstTests}
    Call a sequence of variables $\blocks{Z}{N}$ $(k, \eps)$-biased against linear tests if for any non-empty $\tau \subseteq [N]$ such that $\lvert \tau \rvert \leq k$, $\lvert \Pr[\bigoplus_{i \in \tau} Z_i = 0] - \frac{1}{2} \rvert \leq \eps$.
\end{definition}

\begin{lemma}[Theorem 2 of \cite{AGHP02}]\label{lma:linearTests}
    Let $N = 2^t - 1$ and let $k$ be an odd integer. Then it is possible to construct $N$ random variables $Z_i$ with $i \in [N]$ which are $(k, \eps)$-biased against linear tests using a seed of size at most $2 \lceil \log(1/\eps) + \log \log N + \log k \rceil + 1$ bits.
\end{lemma}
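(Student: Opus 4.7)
The plan is to give an explicit construction that combines a binary linear code of large distance with small-bias field arithmetic over $\bF_{2^m}$ for a carefully chosen $m$. I would set $m = \lceil \log(1/\eps) + \log\log N + \log k \rceil$ so that $2^m$ is at least $k \log N / \eps$ up to absolute constants, and fix $d = \Theta(k \log N)$. Using the BCH bound---naturally stated for odd designed distance $k+1$, which is where the odd-$k$ hypothesis in the statement enters---I would take $v_1,\ldots,v_N \in \bit^{d+1}$ to be the columns of the parity-check matrix of an explicit binary linear code of length $N$ and minimum distance at least $k+1$. The property I will exploit is that any non-empty $\bF_2$-sum $\sum_{i\in\tau} v_i$ with $|\tau|\leq k$ is a non-zero vector in $\bit^{d+1}$. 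From this I build polynomials $f_i(X) = \sum_{j=0}^{d} v_i[j]\, X^j \in \bF_2[X]$, and take the seed to be $(\alpha,\beta,b) \in \bF_{2^m} \times \bF_{2^m} \times \bit$ sampled uniformly and independently, a total of $2m+1$ bits matching the stated bound. Finally I set
$$Z_i \;:=\; \langle f_i(\alpha),\, \beta \rangle \;\oplus\; b,$$
where the inner product is taken over $\bF_2$ after identifying $\bF_{2^m}$ with $\bF_2^m$.

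The bias analysis is then a single root-counting line. Fix a non-empty $\tau \subseteq [N]$ with $|\tau|\leq k$ and write
$$\bigoplus_{i\in\tau} Z_i \;=\; \langle f_\tau(\alpha),\, \beta \rangle \;\oplus\; (|\tau| \bmod 2)\cdot b, \qquad f_\tau \;:=\; \sum_{i\in\tau} f_i \in \bF_2[X].$$
When $|\tau|$ is odd, the independent random bit $b$ alone makes the XOR perfectly uniform, so the bias is zero. When $|\tau|$ is even, the code's distance property guarantees that $f_\tau$ is a non-zero polynomial of degree at most $d$; the standard root bound over $\bF_{2^m}$ gives $\Pr_\alpha[f_\tau(\alpha)=0] \leq d/2^m$, and conditioning on the event $f_\tau(\alpha)\neq 0$ makes $\langle f_\tau(\alpha),\beta\rangle$ uniform in $\bit$ by independence of $\beta$, contributing no bias at all. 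Combining the two cases, the overall bias is bounded by $d/2^{m+1} \leq \eps$ by the choice of $m$.

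The main obstacle is not the algebra---which is essentially a one-line Schwartz--Zippel / root-counting argument---but the quantitative choice of code parameters: one has to realize a binary linear code of length $N$ with odd minimum distance $k+1$ and redundancy exactly $d+1 = O(k\log N)$ (explicit BCH codes achieve this), and then track the implicit constant inside the $O(\cdot)$ together with the choice of $m$ tightly enough that $2m+1$ lines up with $2\lceil \log(1/\eps) + \log\log N + \log k \rceil + 1$ bit for bit. This accounting is precisely what forces the three $\log$ terms to be added inside a single ceiling rather than paid separately, and it is the one place the odd-$k$ hypothesis earns its keep.
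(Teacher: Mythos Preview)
This lemma is not proved in the paper; it is quoted as Theorem~2 of~\cite{AGHP02} and used as a black box in the analysis of Raz's extractor (Section~\ref{sec:coll-res-raz}). There is therefore no in-paper argument to compare against.

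That said, your construction is essentially the AGHP construction itself, written out in one shot rather than as a composition: expanding $\langle f_i(\alpha),\beta\rangle = \sum_j v_i[j]\,\langle \alpha^j,\beta\rangle$ shows that you are applying the ``powering'' small-bias generator $j\mapsto \langle \alpha^j,\beta\rangle$ to the columns of a BCH parity-check matrix, which is exactly how AGHP obtain $k$-wise $\eps$-bias from ordinary $\eps$-bias. Two minor remarks. First, the phrase ``odd designed distance $k+1$'' is a slip: with $k$ odd, $k+1$ is even, whereas BCH designed distances are odd; the odd-$k$ hypothesis in the cited theorem is a bookkeeping convenience in AGHP's parameter accounting, not the structural hook you describe. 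Second, the auxiliary bit $b$ is unnecessary for the analysis: for odd $|\tau|$ the polynomial $f_\tau$ is still non-zero by the code's distance property, so the same root-counting bound $d/2^{m+1}$ applies uniformly to all non-empty $\tau$ of size at most $k$. The bit $b$ is harmless and does line up with the ``$+1$'' in the stated seed length, but it is not where the argument lives.
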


\subsection{Min-entropy}

\begin{definition}[Min-entropy]
	Given a distribution $X$ over $\cX$, the \emph{min-entropy of $X$}, denoted by $\minEnt{X}$, is defined as
	\begin{equation*}
		\minEnt{X}=-\log\left(\max_{x\in\cX} \Pr[X=x]\right).
	\end{equation*}
\end{definition}

\begin{definition}[Average min-entropy]
	Given distributions $X$ and $Z$, the \emph{average min-entropy of $X$ given $Z$}, denoted by $\avgCondMinEnt{X}{Z}$, is defined as
	\begin{equation*}
		\avgCondMinEnt{X}{Z}=-\log\left(\mathbb{E}_{z\leftarrow Z}\left[\max_{x\in\cX} \Pr[X=x|Z=z]\right]\right).
	\end{equation*}
\end{definition}

\begin{lemma}[\cite{DORS08}]\label{lem:avgminH}
    Given arbitrary distributions $X$ and $Z$ such that $|\supp{Z}|\leq 2^\lambda$, we have
    \begin{equation*}
        \avgCondMinEnt{X}{Z}\geq \minEnt{X,Z}-\lambda\geq \minEnt{X}-\lambda\;.
    \end{equation*}
\end{lemma}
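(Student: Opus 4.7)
The plan is to unpack both inequalities directly from the definitions, exploiting the fact that max of joint probabilities is bounded by the marginal max, and that the expectation in the definition of average min-entropy involves a sum over $\supp{Z}$.

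For the rightmost inequality $\minEnt{X,Z} \geq \minEnt{X}$ (the $-\lambda$ on each side cancels), I would observe that for every fixed pair $(x,z)$ one has $\Pr[X=x,Z=z] \leq \Pr[X=x]$, so taking the maximum over $(x,z)$ on the left and over $x$ on the right gives $\max_{x,z}\Pr[X=x,Z=z] \leq \max_x \Pr[X=x]$; negating the logarithm flips the inequality to yield the claim.

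For the main inequality $\avgCondMinEnt{X}{Z} \geq \minEnt{X,Z} - \lambda$, I would rewrite the expectation in the definition of $\avgCondMinEnt{X}{Z}$ as
\begin{equation*}
\mathbb{E}_{z\from Z}\left[\max_{x\in\cX}\Pr[X=x\mid Z=z]\right] = \sum_{z\in\supp{Z}} \max_{x\in\cX}\Pr[X=x, Z=z],
\end{equation*}
and then bound each summand by $\max_{x,z}\Pr[X=x,Z=z] = 2^{-\minEnt{X,Z}}$. Since there are at most $2^\lambda$ terms in the sum by the hypothesis on $|\supp{Z}|$, the total is at most $2^\lambda \cdot 2^{-\minEnt{X,Z}}$; taking $-\log$ yields the bound.

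There is no real obstacle here: the statement is a standard consequence of the definitions of min-entropy and average min-entropy, and the only ``step'' is choosing the right reformulation of the expectation (pushing $\Pr[Z=z]$ inside the max to turn conditional probabilities into joint probabilities) so that the support-size bound can be applied term-by-term.
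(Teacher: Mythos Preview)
Your proof is correct and is exactly the standard argument (as in \cite{DORS08}); the paper itself does not prove this lemma but simply cites it. There is nothing to add.
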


\begin{lemma}[\cite{MW97}]\label{lem:fixavgminH}
    For arbitrary distributions $X$ and $Z$, it holds that
    \begin{equation*}
        \Pr_{z\leftarrow Z}[\minEnt{X|Z=z}\geq \avgCondMinEnt{X}{Z}-s]\geq 1-2^{-s}.
    \end{equation*}
\end{lemma}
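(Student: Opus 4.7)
The plan is to recognise this as a one-line Markov inequality after unwinding the definitions. The quantity $\avgCondMinEnt{X}{Z}$ is defined on the log scale, while Markov's inequality naturally lives on the linear scale, so the main work is just translating between the two and verifying the constants line up.

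Concretely, I would introduce the non-negative random variable
\begin{equation*}
    p(z) := \max_{x \in \cX} \Pr[X = x \mid Z = z] = 2^{-\minEnt{X \mid Z = z}},
\end{equation*}
so that by the definition of average conditional min-entropy we have $\mathbb{E}_{z \leftarrow Z}[p(Z)] = 2^{-\avgCondMinEnt{X}{Z}}$. The event $\minEnt{X \mid Z = z} \geq \avgCondMinEnt{X}{Z} - s$ is then equivalent to $p(z) \leq 2^{s} \cdot \mathbb{E}[p(Z)]$, so it suffices to show
\begin{equation*}
    \Pr_{z \leftarrow Z}\!\left[\, p(z) > 2^{s} \cdot \mathbb{E}[p(Z)] \,\right] < 2^{-s}.
\end{equation*}

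This is immediate from Markov's inequality applied to the non-negative random variable $p(Z)$ with threshold $t = 2^s$, since $\Pr[p(Z) > t \cdot \mathbb{E}[p(Z)]] < 1/t$. Taking the complementary event gives the desired bound of $1 - 2^{-s}$.

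There is no real obstacle here; the only thing to be careful about is that $\minEnt{X \mid Z = z}$ is a monotonically decreasing function of $p(z)$, so the inequality direction flips correctly when passing from the linear scale to the logarithmic one, and that Markov's inequality only requires non-negativity of $p(Z)$, which is clear since it is a probability.
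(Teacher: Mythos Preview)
Your argument is correct and is the standard proof of this fact. The paper does not actually give its own proof of this lemma; it is simply cited from~\cite{MW97}, so there is nothing to compare against. One cosmetic point: Markov's inequality yields $\Pr[p(Z) \geq t \cdot \mathbb{E}[p(Z)]] \leq 1/t$ with non-strict inequalities, not the strict $<$ you wrote, but since the statement to be proved has $\geq 1 - 2^{-s}$ this is exactly what is needed.
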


\begin{definition}[$(n, k)$-sources]
    We say that a random variable $X$ is an \emph{$(n, k)$-source} if $\supp{X} \subseteq \bit^n$ and $\minEnt{X} \ge k$. Additionally, we say that $X$ is a \emph{flat $(n, k)$-source} if for any $a \in \supp{X}$, $\Pr[X = a] = 2^{-k}$, i.e., $X$ is uniform over its support. 
\end{definition}
$X \givenBy (n, k)$ denotes the fact that $X$ is an $(n, k)$-source. Further, we call $X$ \emph{$(n, k)$}-flat if $X \givenBy (n, k)$ and is flat. We say that $X$ is $\eps$-close to a flat distribution if there exists a set $S$ such that $X \closeTo{\eps} \unif{S}$.

\begin{definition}[$\eps$-smooth min-entropy]
    A random variable $X$ is said to have $\eps$-smooth min-entropy at least $k$ if there exists $Y$ such that $\dist{X}{Y} \leq \eps$, and

    \begin{equation*}
     \minEnt{Y} \ge k\;.
    \end{equation*}
\end{definition}

% mention all the other extractors later
\subsection{Extractors}
\begin{definition}[(Strong) Two-Source Extractor, Collision Resistance]\label{defn:twosourceExt}
    Call $E:\bit^{n_1} \times \bit^{n_2} \to \bit^m$ a \emph{two-source extractor} for input lengths $n_1, n_2$, min-entropy $k_1, k_2$, output length $m$, and error $\eps$ if for any two independent sources $X, Y$ with $X \givenBy (n_1, k_1)$, $Y \givenBy (n_2, k_2)$, the following holds:
    \begin{equation*}
        \dist{E(X, Y)}{\unif{m}} \leq \eps
    \end{equation*}
    If $n_2= k_2$, we call such an extractor \emph{seeded}. We use $\extractorParam{E}{(n_1, k_1)}{(n_2, k_2)}{m}{\eps}$ to denote the fact that $E$ is such an extractor.

    Additionally, we call the extractor $E$ \emph{right strong}, if:
    \begin{equation*}
        \distCond{E(X, Y)}{\unif{m}}{Y} \leq \eps \;,
    \end{equation*}
    and we call the extractor $E$ \emph{left strong}, if:
    \begin{equation*}
        \distCond{E(X, Y)}{\unif{m}}{X} \leq \eps \;.
    \end{equation*}
    
    We call an extractor $E$ \emph{strong} if it is both left strong and right strong.
    
 %   That is to say, the extractor's output is nearly independent of its second input.
The extractor is said to be $\collisionError$-collision resistant if $\Pr_{X, Y}[E(X, Y) = E(f(X), Y)] \leq \collisionError$ for all fixed-point-free functions $f$.
\end{definition}

\begin{definition}[Two Source Non-malleable Extractor]\label{defn:twosourceNMext}
    Call $\extractorParam{E}{(n_1, k_1)}{(n_2, k_2)}{m}{\eps}$ a \emph{two source non-malleable extractor} if additionally for any pair of functions $f : \bit^{n_1} \to \bit^{n_1}$, $g : \bit^{n_2} \to \bit^{n_2}$ such at least one of $f, g$ is fixed-point-free\footnote{A function $f$ is said to be fixed-point-free if for any $x$, $f(x) \neq x$}, the following holds:

    \begin{equation*}
        \distCond{E(X, Y)}{\unif{m}}{E(f(X), g(Y))} \leq \eps
    \end{equation*}

 Additionally, we call the extractor $E$ a \emph{right strong non-malleable two-source extractor} if:
    \begin{equation*}
        \distCond{E(X, Y)}{\unif{m}}{E(f(X), g(Y)),Y} \leq \eps \;,
    \end{equation*}
    and 
    we call the extractor $E$ a \emph{left strong non-malleable two-source extractor} if:
    \begin{equation*}
        \distCond{E(X, Y)}{\unif{m}}{E(f(X), g(Y)),X} \leq \eps \;,
    \end{equation*}
%    Additionally, call $E$ a \emph{fully non-malleable seeded extractor} if $n_2 = k_2$.
\end{definition}

\begin{definition}[(Fully) Non-malleable Seeded Extractor]
    Call $\extractorParam{E}{(n_1, k_1)}{(n_2, n_2)}{m}{\eps}$ a \emph{non-malleable seeded extractor} if additionally for some fixed-point-free function $g : \bit^{n_2} \to \bit^{n_2}$, the following holds:
    
    \begin{equation*}
        \distCond{E(X, Y)}{\unif{m}}{E(X, g(Y))} \leq \eps
    \end{equation*}

    A natural strengthening of a non-malleable seeded extractor is to consider a pair of tampering functions on both its inputs rather than on just the seed. Thus call a $E$ a \emph{fully non-malleable seeded extractor} if additionally for some pair of fixed-point-free functions $g : \bit^{n_2} \to \bit^{n_2}$, and $f : \bit^{n_1} \to \bit^{n_1}$, the following holds:
    
    \begin{equation*}
        \distCond{E(X, Y)}{\unif{m}}{E(f(X), g(Y))} \leq \eps
    \end{equation*}
\end{definition}

One useful thing to note is that the extractor remains non-malleable even if the functions $f, g$ are randomised with shared coins (independent of $X$ and $Y$).
\begin{lemma}\label{lma:randTamp}
    Let $E$ be a two source non-malleable extractor for $(n, k)$-sources $X, Y$ with output length $m$ and error $\eps$. Let $f_S, g_S$ random functions over the shared randomness of $S$ independent of $X$ and $Y$ such that for all $s \in \supp{S}$, at at least one of $f_s$ or $g_s$ is fixed-point-free. Then
    \begin{equation*}
        \distCond{E(X, Y)}{\unif{m}}{E(f_S(X), g_S(Y))} \leq \eps
    \end{equation*}
\end{lemma}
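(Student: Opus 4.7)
The plan is to reduce the randomized tampering case to the deterministic case by conditioning on the value of the shared coin $S$ and averaging. The key observation is that since $S$ is independent of $(X,Y)$, fixing $S=s$ does not change the joint distribution of $(X,Y)$, but it does fix the tampering functions to the deterministic pair $(f_s, g_s)$, to which the hypothesis directly applies.

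More concretely, I would first unfold the definition of conditional statistical distance so that the target inequality reads
\[
\dist{\bigl(E(X,Y),\, E(f_S(X), g_S(Y))\bigr)}{\bigl(\unif{m},\, E(f_S(X), g_S(Y))\bigr)} \le \eps,
\]
with $\unif{m}$ independent of everything else. Then I would apply the first part of the lemma on conditional statistical distance (the convex-combination bound) with $C = S$, yielding
\[
\dist{\bigl(E(X,Y),\, E(f_S(X), g_S(Y))\bigr)}{\bigl(\unif{m},\, E(f_S(X), g_S(Y))\bigr)} \le \sum_{s \in \supp{S}} \Pr[S=s]\cdot \distCond{E(X,Y)}{\unif{m}}{E(f_s(X), g_s(Y)),\, S=s}.
\]

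For each fixed $s$, because $S$ is independent of $X,Y$ (and of $\unif{m}$), conditioning on $S=s$ leaves the joint law of $(X,Y,\unif{m})$ intact, so the inner term equals $\distCond{E(X,Y)}{\unif{m}}{E(f_s(X), g_s(Y))}$ with deterministic tampering functions. By hypothesis at least one of $f_s, g_s$ is fixed-point-free, so the non-malleability guarantee of $E$ (Definition~\ref{defn:twosourceNMext}) bounds this quantity by $\eps$. Summing over $s$ with weights $\Pr[S=s]$ gives the claim.

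The only subtle step is the identification of the conditional distance under $S=s$ with the deterministic-tampering distance; this is immediate from independence of $S$ and $(X,Y)$, but it is the point where randomized tampering is reduced to deterministic tampering, so I would spell it out briefly. Nothing else here requires work beyond invoking the definitions, so the proof should be short.
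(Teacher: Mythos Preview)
Your proposal is correct and follows essentially the same approach as the paper: condition on $S=s$, use independence of $S$ from $(X,Y)$ to reduce to deterministic tampering $(f_s,g_s)$, invoke the non-malleability guarantee, and average. The paper carries this out by explicitly expanding the statistical distance as a sum over outcomes and applying the triangle inequality, whereas you package the same computation as an appeal to the convex-combination lemma; one minor point is that the lemma as stated keeps $C$ on both sides, so to match it literally you would first add $S$ to the conditioning (which can only increase the distance, by data processing) before decomposing, but this is a cosmetic adjustment and your argument is sound as written.
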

\begin{proof}
    Let $X \givenBy (n, k)$ and $Y \givenBy (n, k)$ be independent sources. Let $f_S, g_S : \bit^n \to \bit^n$ be fixed-point-free random functions over the randomness of $S$ which is independent of $X$ and $Y$.    
    \begin{align*}
        &\distCond{E(X, Y)}{\unif{m}}{E(f_S(X), g_S(Y))} \\&= \sum_{a, b} \lvert \Pr[E(X, Y) = a, E(f_S(X), g_S(Y)) = b] - \Pr[\unif{m}, E(f_S(X), g_S(Y)) = b] \rvert\\
        &= \sum_{a, b} \lvert \sum_s \Pr[S = s] \Pr[E(X, Y) = a, E(f_S(X), g_S(Y)) = b | S = s]\\
        & \hspace*{2em} - \sum_s \Pr[S = s] \Pr[\unif{m}, E(f_S(X), g_S(Y)) = b | S = s]\rvert\\
        &= \sum_{a, b} \sum_s \Pr[S = s] \lvert \Pr[E(X, Y) = a, E(f_S(X), g_S(Y)) = b | S = s] - \Pr[\unif{m}, E(f_S(X), g_S(Y)) = b | S = s]\rvert\\
        &= \sum_{a, b} \sum_s \Pr[S = s] \lvert \Pr[E(X, Y) = a, E(f_s(X), g_s(Y)) = b] - \Pr[\unif{m}, E(f_s(X), g_s(Y)) = b]\rvert\\
        &= \sum_s \Pr[S = s] \sum_{a, b} \lvert \Pr[E(X, Y) = a, E(f_s(X), g_s(Y)) = b] - \Pr[\unif{m}, E(f_s(X), g_s(Y)) = b]\rvert\\
        &= \sum_s \Pr[S = s] \distCond{E(X, Y)}{\unif{m}}{E(f_s(X), g_s(Y))}\\
        &\leq \sum_s \Pr[S = s] \eps = \eps
    \end{align*}
    Note that $S$ is independent of $X$ and $Y$ and thus $E(X, Y)$ is independent of $S$. Now for a fixed $s$, $f_s$ and $g_s$ are fixed functions. So the last inequality follows as $E$ is a two source non-malleable extractor.
\end{proof}

\begin{lemma}\label{lma:avgCaseMinExtraction}
    If $\extractorParam{\ext}{(n, k)}{(d, d)}{m}{\eps}$ is a strong seeded extractor, then for any $X, W$ such that $\supp{X} \subseteq \bit^n$ and $\avgCondMinEnt{X}{W} \geq k + \log(1/\eta)$ with $\eta > 0$, it holds that:

    \begin{equation*}
        \distCond{\ext(X, \unif{d})}{\unif{m}}{\unif{d}, W} \leq \eps + \eta 
    \end{equation*}
\end{lemma}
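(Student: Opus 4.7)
The plan is to reduce to the worst-case strong extractor guarantee by conditioning on the value of $W$. The key tool is Lemma~\ref{lem:fixavgminH}, which converts the average-case min-entropy hypothesis into a worst-case min-entropy statement that holds with high probability over $W$.

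First, I would apply Lemma~\ref{lem:fixavgminH} with parameter $s = \log(1/\eta)$. This partitions $\supp{W}$ into a ``good'' set $T_1 = \{w : \minEnt{X \mid W = w} \geq k\}$ and a ``bad'' set $T_2 = \supp{W} \setminus T_1$, with $\Pr[W \in T_2] \leq \eta$ by the hypothesis $\avgCondMinEnt{X}{W} \geq k + \log(1/\eta)$.

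Next, for each fixed $w \in T_1$, the random variable $X \mid W = w$ is an $(n, k)$-source. Since the seed $\unif{d}$ is drawn independently of both $X$ and $W$, the random variable $X \mid W = w$ remains independent of $\unif{d}$. Applying the strong extractor property of $\ext$ to this source then yields
\[
\distCond{\ext(X, \unif{d})}{\unif{m}}{\unif{d}, W = w} \leq \eps
\]
for every $w \in T_1$. For $w \in T_2$, I would simply use the trivial bound of $1$ on the statistical distance.

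Finally, I would combine these using the partitioning version of the triangle inequality (item~2 of the Lemma immediately following the definition of statistical distance), where the ``conditioning'' random variable is $W$ (and the shared-over-both-sides $\unif{d}$ just gets absorbed into the side information):
\[
\distCond{\ext(X, \unif{d})}{\unif{m}}{\unif{d}, W} \leq \Pr[W \in T_1]\cdot \eps + \Pr[W \in T_2]\cdot 1 \leq \eps + \eta.
\]
There is no real obstacle here; the only point that requires a little care is verifying that $\unif{d}$, which appears on both sides of the statistical distance, can be treated as part of the conditioning so that the strong-extractor bound applied pointwise in $w$ gives the right quantity — this is immediate from the definition of $\distCond{\cdot}{\cdot}{\cdot}$ once one unwinds it. The same argument would go through if $W$ had unbounded support, by integrating against its distribution rather than summing over a partition.
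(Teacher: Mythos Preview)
Your proposal is correct and follows essentially the same approach as the paper: both arguments use Markov's inequality (the paper inlines it, you invoke Lemma~\ref{lem:fixavgminH}) to isolate a bad set of $w$-values of probability at most $\eta$, apply the strong extractor guarantee pointwise on the good set, and combine via the partitioning bound on statistical distance.
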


\begin{proof}
    Let $ext$, $X$ and $W$ be defined as above. Then, given that $\avgCondMinEnt{X}{W} \geq k + \log(1/\eta)$, 
    it follows from Markov's inequality that there exists a ``bad'' set $B$ such that $\Pr[W \in B] \leq \eta$, and 
    for all $w \notin B$, $\condMinEnt{X}{W = w} \geq k$.
    Then,
    \begin{align*}
        \distCond{\ext(X, \unif{d})}{\unif{m}}{\unif{d}, W}
        &\leq \distCond{\ext(X, \unif{d})}{\unif{m}}{\unif{d}, W \in B} \Pr[W \in B]\\
        &+ \distCond{\ext(X, \unif{d})}{\unif{m}}{\unif{d}, W \notin B} \Pr[W \notin B]\\
        &\leq 1 \cdot \Pr[W \in B] + \distCond{\ext(X, \unif{d})}{\unif{m}}{\unif{d}, W \notin B}\\
        &= \Pr[W \in B] + \sum_{w \notin B} \distCond{\ext(X, \unif{d})}{\unif{m}}{\unif{d}, W = w}\\
        &\leq \eta + \eps\;.
    \end{align*}
\end{proof}

We will need the following constructions of extractors. 
\begin{lemma}[Theorem 6.9 of \cite{L19}]\label{lma:li}
    There exists a constant $0 < \gamma < 1$ and an explicit two-source non-malleable extractor $\extractorParam{\liExt}{(n, (1-\gamma)n)}{(n, (1-\gamma)n)}{\Omega(n)}{\liError}$ such that $\liError = 2^{-\Omega(n \frac{\log \log n}{\log n})}$.
\end{lemma}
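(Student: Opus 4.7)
The plan is to follow the general template developed by Chattopadhyay, Goyal, and Li and refined by Li himself: reduce the construction of a two-source non-malleable extractor with high entropy rate to (i) an \emph{advice generator} which, on input $(X,Y)$, produces a short string that is almost surely different from the string produced on $(f(X),g(Y))$, and (ii) a two-source \emph{advice correlation breaker}, which on input two sources and distinct advice strings, produces an output that is uniform even conditioned on the tampered output. Concretely, the extractor $\liExt(X,Y)$ will first compute advice $\alpha = \mathtt{Adv}(X,Y)$ of length $\poly\log(n)$, and then output $\mathtt{ACB}(X,Y,\alpha)$; analogously the tampered side produces $\alpha'$ and $\mathtt{ACB}(X',Y',\alpha')$. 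The reduction is that whenever $\alpha\neq\alpha'$, the correlation breaker guarantees the desired non-malleability, so the overall error is bounded by $\Pr[\alpha=\alpha']+\mathtt{err}(\mathtt{ACB})$.

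First I would build the advice generator. The idea is: use a strong seeded extractor applied to $X$ with a short slice of $Y$ as seed (and symmetrically with $X$ and $Y$ swapped) to produce two short strings; concatenate them with small slices of $X$ and $Y$. A standard case analysis shows that at least one of these components must differ after tampering, using the fact that at least one of $f,g$ is fixed-point-free and exploiting the entropy rate $1-\gamma$ (so small slices still have almost all their entropy). The advice length can be kept to $\poly\log(n)$ while driving the collision probability down to $2^{-\Omega(n\log\log n/\log n)}$ by iterating a flip-flop/alternating-extraction style construction of the right depth.

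Next I would construct the two-source advice correlation breaker. The strategy is alternating extraction across the two sources, driven bit-by-bit (or slot-by-slot) by the advice string $\alpha$: at each round $i$, the parties exchange extracted blocks, and the next extractor seed is chosen based on $\alpha_i$ versus $\alpha'_i$. The key invariant, proved by induction on rounds using Lemma~\ref{lma:avgCaseMinExtraction} and standard entropy-preservation arguments, is that once the advice strings diverge at some coordinate $i^*$, all subsequent rounds on the untampered side have fresh entropy in the source not yet touched by the adversary, so the tampered output becomes a deterministic function of a bounded leakage and loses its correlation with the real output. Finally we extract $\Omega(n)$ bits from $X$ using the last alternating-extraction output as a strong seed; the error accumulates as $O(\log n)$ applications of extractors with error $\liError$, giving the stated bound.

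The main obstacle, and the reason the bound is $2^{-\Omega(n\log\log n/\log n)}$ rather than something like $2^{-\Omega(n)}$, is controlling the entropy loss across $\Theta(\log n/\log\log n)$ rounds of alternating extraction while keeping the advice short enough that the collision probability of the advice generator matches the correlation breaker's error. Balancing these two parameters forces the characteristic $\log\log n/\log n$ factor. A second delicate point is the fixed-point analysis: one must carefully argue, using a subsource partitioning of $(X,Y)$, that the advice generator's failure probability is genuinely small, even when $f$ has many fixed points on typical inputs (so long as $g$ does not, and vice versa). Pushing through this case split with the right entropy accounting, while keeping the correlation breaker oblivious to which side was tampered, is the heart of the proof and where most of the technical work in \cite{L19} is concentrated.
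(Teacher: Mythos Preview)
This lemma is not proved in the paper: it is stated in the preliminaries as a direct citation of Theorem~6.9 of~\cite{L19} and is used as a black box in the compiler of Section~\ref{sec:fullynmseeded}. There is therefore no proof in the paper to compare your proposal against.

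That said, your sketch is a faithful high-level summary of the approach in~\cite{L19}: advice generation plus an advice correlation breaker built from alternating extraction, with the $\log\log n/\log n$ exponent arising from balancing the number of rounds against per-round entropy loss. If your goal is simply to justify invoking the lemma, a one-line citation suffices; if you actually intend to reprove it, be aware that the details you flag as ``the heart of the proof'' (the fixed-point case analysis and the entropy bookkeeping over $\Theta(\log n/\log\log n)$ rounds) are substantial and would require importing several pages of machinery from~\cite{L19} that the present paper deliberately avoids.
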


\begin{lemma}[Theorem 2 of \cite{RRV99}]\label{lma:tre}
    For every $n, k$ there exists an explicit strong seeded extractor $\extractorParam{\treExt}{(n, k)}{(d, d)}{\Omega(k)}{\eps}$ such that $d = O(\log^2(n)\log(1/\eps))$.
\end{lemma}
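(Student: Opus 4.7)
The plan is to implement Trevisan's extractor via a Nisan--Wigderson style combinatorial design composed with a good error-correcting code, as in~\cite{RRV99}. First I would fix a binary code $\ecc:\bit^n\to\bit^{\bar n}$ of relative distance $1/2-\eta$ for a small constant $\eta$, with $\bar n=\poly(n)$, and a combinatorial design $\design=\{S_1,\ldots,S_m\}\subseteq[d]$ where each $|S_i|=\ell:=\log \bar n$ and pairwise intersections satisfy $|S_i\cap S_j|\le a$. The extractor is
\begin{equation*}
\treExt(x,z) := \hat{x}(z|_{S_1})\concat\hat{x}(z|_{S_2})\concat\cdots\concat\hat{x}(z|_{S_m}),
\end{equation*}
where $\hat{x}:=\ecc(x)$ and $z|_{S_i}\in\bit^\ell$ is interpreted as an index into $\hat{x}$.

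The main step is a reconstruction argument. Suppose toward a contradiction that for some source $X$ with $\minEnt{X}\ge k$ one has $\distCond{\treExt(X,Z)}{\unif{m}}{Z}>\eps$. By a Yao-style hybrid / next-bit-predictor reduction there exist an index $i\in[m]$ and a function $P$ with $\Pr[P(Z,\hat{X}(Z|_{S_1}),\ldots,\hat{X}(Z|_{S_{i-1}}))=\hat{X}(Z|_{S_i})]\ge \tfrac{1}{2}+\eps/m$. Because $|S_j\cap S_i|\le a$, once the bits of $z$ lying outside $S_i$ are fixed each earlier block $\hat{X}(z|_{S_j})$ depends only on the $\le 2^a$ configurations of the bits inside $S_i$; so a short advice string listing the relevant $2^a$ values per earlier block (total $\approx m\cdot 2^a$ bits) combined with $P$ yields a predictor for $\hat{X}$ at a uniformly random coordinate with bias $\eps/m$. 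Applying Johnson-bound list-decoding of $\ecc$ converts this into a list of size $\poly(m/\eps)$ containing $X$, so $X$ admits a description of length roughly $m\cdot 2^a + O(\log(m/\eps))$ bits, which must be at least $k$.

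For parameters, the Nisan--Wigderson construction yields designs with $d=O(\ell^2/a)$; choosing $a=\Theta(\log(m/\eps))$ so that the advice total is comparable to $k$ keeps $m=\Omega(k)$, and substituting $\ell=\log\bar n=O(\log n)$ gives $d=O(\log^2(n)\log(1/\eps))$ as claimed. The strong property follows with no extra work since the reconstruction is carried out after conditioning on the seed $Z$. The main obstacle will be the simultaneous tuning of $\bar n,\ell,a,m$ so that (i) the design is explicitly constructible within the claimed seed length, (ii) the advice is short enough to contradict $\minEnt{X}\ge k$, and (iii) the output length remains linear in $k$; the quantitatively delicate step is the Johnson-bound list-decoding at radius $1/2-\Omega(\eps/m)$, which forces the code's distance defect $\eta$ and the code blow-up $\bar n/n$ to be chosen carefully in terms of $\eps$ and $m$.
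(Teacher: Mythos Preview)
The paper does not prove this lemma; it is simply quoted as Theorem~2 of~\cite{RRV99} and used as a black box. So there is no ``paper's own proof'' to compare against. Your sketch is a faithful outline of the standard Trevisan/Raz--Reingold--Vadhan analysis from the cited reference: encode with a good binary code, sample bits via a combinatorial design, and run the reconstruction argument (hybrid $\to$ next-bit predictor $\to$ list-decoding contradiction against $\minEnt{X}\ge k$). One small point: you phrase the design condition as a uniform bound $|S_i\cap S_j|\le a$, which is the original Nisan--Wigderson formulation; \cite{RRV99} (and the brief recap of the construction later in this paper, in the proof of Lemma~\ref{lma:CRSeededExt}) actually uses \emph{weak} designs, where only $\sum_{j<i}2^{|S_i\cap S_j|}\le \rho(m-1)$ is required. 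This relaxation is what lets one push the seed length down to $d=O(\log^2 n\cdot\log(1/\eps))$ while keeping $m=\Omega(k)$; with the stronger pairwise bound the parameter tuning you describe in the last paragraph is tighter than necessary and does not quite reach the stated seed length for all regimes of $k$.
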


\begin{lemma}[Theorem 1 of \cite{Raz05}]\label{lma:raz}
    For any $n_1, n_2, k_1, k_2, m$ and any $0 < \delta < \frac{1}{2}$ such that:
    \begin{enumerate}
        \item $k_1 \geq 5\log(n_2 - k_2)$
        \item $n_2 \geq 6 \log n_2 + 2\log n_1$,
        \item $k_2 \geq (\frac{1}{2} + \delta) \cdot n_2 + 3\log n_2 + \log n_1$,
        \item $m = \Omega(\min \{ n_2, k_1 \})$,
    \end{enumerate}
    there exists a strong two-source extractor $\extractorParam{\razExt}{(n_1, k_1)}{(n_2, k_2)}{m}{\eps}$, such that $\eps = 2^{-\frac{3m}{2}}$.
\end{lemma}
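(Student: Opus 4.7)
The plan is to follow the Fourier-analytic template that underlies Raz's original proof. First I would use Lemma~\ref{lma:XORLemma} to reduce the task to bounding, for every non-empty $\tau \subseteq [m]$, the bias $|\Pr[\bigoplus_{i \in \tau} \razExt(X, Y)_i = 0] - \tfrac{1}{2}|$ by $2^{-2m}$; the XOR-lemma loss of $2^{m/2}$ then delivers the advertised $\eps = 2^{-3m/2}$. A standard convexity argument lets me assume the sources are flat, i.e., $X \givenBy (n_1, k_1)$ and $Y \givenBy (n_2, k_2)$ are uniform over their supports, and strongness will follow by conditioning on a typical value of the source being revealed.

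Second, I would specify the construction explicitly. The scheme can be cast as a bilinear-type map $\razExt(x, y) = A(x) \cdot y$ over $\mathbb{F}_2$, where $A(x) : \mathbb{F}_2^{n_2} \to \mathbb{F}_2^m$ is a linear map determined by $x$, augmented by a prefix-code / somewhere-random structure so that different output coordinates read (nearly) disjoint portions of $y$. Under this description the bias of a $\tau$-combination becomes a character sum $\bE_X\big[\bE_Y[(-1)^{\langle \ell_\tau(X), Y\rangle}]\big]$, where $\ell_\tau(x) \in \mathbb{F}_2^{n_2}$ is the $\tau$-combination of the rows of $A(x)$. For $\ell \neq 0$, $\bE_Y[(-1)^{\langle \ell, Y\rangle}]$ is a Fourier coefficient of (the density of) $Y$, and condition~3 is calibrated to make this quantity small in $L^2$-average over $\ell$.

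The heart of the argument, and the step I expect to be the main obstacle, is showing that only a negligible fraction of $x \in \supp{X}$ send $\ell_\tau$ into the set of ``bad'' linear forms on which $Y$ is unusually biased, or for which $\ell_\tau(x) = 0$. I would control this by analyzing the algebraic structure of $x \mapsto \ell_\tau(x)$, which for a well-chosen $A(\cdot)$ is close to injective on low-dimensional sets. The resulting bad set of $x$ has cardinality at most polynomial in $n_2 - k_2$, and condition~1, $k_1 \ge 5\log(n_2 - k_2)$, is precisely calibrated so that $|\supp{X}| = 2^{k_1}$ dominates this count; condition~2 ensures there is enough room inside $y$ to make the construction well-defined in the first place.

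Combining everything, the good $x$'s contribute at most $2^{-\Omega(\min\{n_2, k_1\})}$ by the Fourier bound on $Y$, while the bad $x$'s contribute at most $2^{-\Omega(k_1)}$ by the counting argument; choosing $m = \Omega(\min\{n_2, k_1\})$ as in condition~4 pushes both terms below $2^{-2m}$, yielding $\eps = 2^{-3m/2}$ after the XOR-lemma step. Strongness would then follow by re-running the computation after fixing a typical outcome of the source to be revealed (averaging over typical $Y$ for right-strongness, and over typical $X$ for left-strongness), losing only a standard $2^{-\Omega(k)}$ term that is absorbed into the final error.
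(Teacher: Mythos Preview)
The paper does not prove this lemma; it is quoted verbatim as Theorem~1 of~\cite{Raz05}. However, the paper does expose Raz's construction and proof technique when establishing the collision-resistant variant (Lemma~\ref{lma:CRRaz}), so one can compare your plan against the actual argument.

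Your description of the construction is off. Raz's extractor is not a bilinear map $A(x)\cdot y$; rather, $y$ is used as the seed for the AGHP small-bias generator (Lemma~\ref{lma:linearTests}) to produce a family of $\{0,1\}$-valued random variables $Z_{(i,x)}(y)$, indexed by $i\in[m]$ and $x\in\{0,1\}^{n_1}$, that are $(t',\eps)$-biased for linear tests of size $t'\ge tm$. The $i$-th output bit is then $Z_{(i,x)}(y)$. There is no ``$A(x)$ with nearly disjoint rows'' structure to exploit, and the Fourier-on-$Y$ picture you propose does not match what is actually available.

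More importantly, the core of Raz's analysis is not a bad-set-of-$x$ counting argument. After the XOR-lemma reduction (which you have right), one bounds $\big(\bE_{X,Y}[(-1)^{Z_\tau(X,Y)}]\big)^t$ by Jensen, moving the $t$-th power inside the average over $Y$ and expanding over $t$-tuples $(x_1,\dots,x_t)\in\supp{X}^t$. Tuples with all-distinct entries are handled by the $(t',\eps)$-bias property, contributing $2^{n_2}\eps$; tuples with a repeated value are handled by a counting bound of the form $\binom{2^{k_1}}{t/2}t^t$. One then optimizes $t$ (roughly $t\approx (n_2-k_2)/k_1$), and condition~1 is calibrated so that the second term is $2^{-\Omega(k_1)}$ after taking $t$-th roots. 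Your assertion that ``the bad set of $x$ has cardinality at most polynomial in $n_2-k_2$'' has no clear source and does not correspond to any step in the real proof; without the moment method the dependence on $n_2-k_2$ cannot be controlled in the way you claim.
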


%\section{Collision Resilience for Extractors}
%In this section we will present constructions for two source extractors that are collision resilient.
%\dnote{Moved the definition to the definition of the two-source extractor.}
%\begin{definition}
 %  Call an extractor $\extractorParam{E}{(n_1, k_1)}{(n_2, k_2)}{m}{\eps}$ \emph{collision resilient with collision probability $\collisionError$} if for any fixed-point-free function $f$, $\Pr_{X, Y}[E(X, Y) = E(f(X), Y)] \leq \collisionError$ if $X \givenBy (n_1, k_2)$ and $Y \givenBy (n_2, k_2)$.
%\end{definition}

%For our purposes we will modify the extractors in Lemma \ref{lma:raz} and Lemma \ref{lma:tre} so that they are collision resilient.

\subsubsection{Rejection Sampling for Extractors}
In this section we present two lemmas that use rejection sampling to lower the entropy requirement for strong two-source extractors and their collision resistance.

We first define a sampling algorithm $\samp$ that given
a flat distribution $Y' \givenBy (n, k)$, tries to approximate some distribution $Y \givenBy (n, k - \delta)$ (with $supp(Y) \subseteq supp(Y')$). Letting $d = \max_{y \in supp(Y)} \left \{ \frac{\Pr[Y = y]}{\Pr[Y' = y]} \right \}$:

\begin{equation*}
    \samp(y) = \begin{cases}
        y, & w.p.\ \frac{\Pr[Y = y]}{d \cdot \Pr[Y' = y]}\\
        \bot, &else 
    \end{cases}
\end{equation*}

\begin{lemma}\label{lma:rejectionSample}
    The probability $\samp(Y') = y$ is $\frac{\Pr[Y = y]}{d}$ and furthermore, the probability that $\samp(Y') \neq \bot$ is $\frac{1}{d}$. Consequently, the distribution $\samp(Y')$ conditioned on the event that $\samp(Y') \neq \bot$ is identical to $Y$.
\end{lemma}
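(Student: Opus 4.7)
The plan is to do a direct unfolding of the definition of $\samp$ together with the law of total probability. First I would fix an arbitrary $y \in \supp{Y}$ and compute
\[
  \Pr[\samp(Y') = y] \;=\; \sum_{y' \in \supp{Y'}} \Pr[Y' = y'] \cdot \Pr[\samp(y') = y].
\]
Observing that $\samp(y')$ only ever outputs $y'$ or $\bot$, every term with $y' \neq y$ vanishes, leaving a single surviving term equal to $\Pr[Y' = y] \cdot \frac{\Pr[Y = y]}{d \cdot \Pr[Y' = y]} = \frac{\Pr[Y = y]}{d}$. This gives the first claim. Before doing so, I would briefly note that $d \geq 1$ and more importantly $\frac{\Pr[Y = y]}{d \cdot \Pr[Y' = y]} \leq 1$ by the very definition of $d$ as a maximum, so that the branching probability in $\samp$ is a genuine probability; and that $\supp{Y} \subseteq \supp{Y'}$ ensures the ratio is well-defined.

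Next I would sum over $y \in \supp{Y}$, together with the fact that $\samp$ only ever produces outcomes in $\supp{Y} \cup \{\bot\}$, to compute
\[
  \Pr[\samp(Y') \neq \bot] \;=\; \sum_{y \in \supp{Y}} \Pr[\samp(Y') = y] \;=\; \frac{1}{d} \sum_{y \in \supp{Y}} \Pr[Y = y] \;=\; \frac{1}{d}.
\]
The third claim then follows by Bayes' rule: for any $y \in \supp{Y}$,
\[
  \Pr[\samp(Y') = y \mid \samp(Y') \neq \bot] \;=\; \frac{\Pr[\samp(Y') = y]}{\Pr[\samp(Y') \neq \bot]} \;=\; \frac{\Pr[Y = y]/d}{1/d} \;=\; \Pr[Y = y],
\]
which is exactly the distribution of $Y$.

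Honestly, there is no real obstacle here; the entire lemma is bookkeeping once the definition of $\samp$ is written out. The only point deserving care is the well-definedness check for the branching probability and the implicit assumption $\supp{Y} \subseteq \supp{Y'}$, both of which were flagged in the setup preceding the definition of $\samp$.
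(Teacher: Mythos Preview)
Your proposal is correct and follows essentially the same approach as the paper: compute $\Pr[\samp(Y')=y]$ directly from the definition, sum over $y$ to obtain $\Pr[\samp(Y')\neq\bot]=1/d$, and conclude via Bayes' rule. Your added well-definedness checks (that the branching probability is at most $1$ and that $\supp{Y}\subseteq\supp{Y'}$) are a nice touch the paper omits, but the argument is otherwise identical.
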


\begin{proof}
    Letting $\samp$ and $d$ be defined as above, then:
    \begin{equation*}
        \Pr[\samp(Y') = y] = \frac{1}{d} \frac{\Pr[Y = y]}{\Pr[Y' = y]} \cdot \Pr[Y' = y] = \frac{\Pr[Y = y]}{d}
    \end{equation*}

    Then it follows that:
    \begin{equation*}
        \Pr[\samp(Y') \neq \bot] = \sum_{y} \Pr[\samp(Y) = y] = \sum_{y} \frac{\Pr[Y = y]}{d} = \frac{1}{d}
    \end{equation*}

    Thus, conditioned on the event that $\samp(Y') \neq \bot$, $\samp(Y')$ is the distribution $Y$.
    \begin{align*}
        \Pr[\samp(Y') = y | \samp(Y') \neq \bot] &= \frac{\Pr[\samp(Y') \neq \bot |\samp(Y') = y ] \Pr[\samp(Y') = y]}{\Pr[\samp(Y') \neq \bot]} \\&= \Pr[Y = y]
    \end{align*}
\end{proof}

\paragraph{Lowering the Entropy Requirement for Strong Two-Source Extractors.}
\begin{lemma}\label{lma:extReduction}
    Let $\extractorParam{\ext}{(n_1, k_1)}{(n_2, k_2)}{m}{\eps}$ be a strong two-source extractor using input distributions $X$ and $Y'$. Then letting $Y \givenBy (n_2, k_2 - \delta)$:
    
    \begin{equation*}
        \distCond{\ext(X, Y)}{\unif{m}}{Y} \leq 2^\delta \eps
    \end{equation*}
\end{lemma}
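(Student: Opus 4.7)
The plan is to use the rejection sampling machinery already developed in Lemma~\ref{lma:rejectionSample}, together with the convexity of statistical distance. Since $\distCond{\ext(X, Y)}{\unif{m}}{Y}$ is linear in the distribution of $Y$ (it equals $\mathbb{E}_{y \leftarrow Y}[\dist{\ext(X, y)}{\unif{m}}]$), and since every $(n_2, k_2-\delta)$-source is a convex combination of flat $(n_2, k_2-\delta)$-sources, I would first reduce to the case where $Y$ is flat with $\lvert \supp{Y} \rvert = 2^{k_2 - \delta}$.

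Given such a flat $Y$, I would construct a flat $(n_2, k_2)$-source $Y'$ whose support $T$ is any superset of $\supp{Y}$ of size $2^{k_2}$ (this is possible because $k_2 \leq n_2$). Then $\Pr[Y' = y] = 2^{-k_2}$ for all $y \in T$, and the ratio $d = \max_{y \in \supp{Y}} \Pr[Y = y] / \Pr[Y' = y]$ equals exactly $2^{-(k_2 - \delta)}/2^{-k_2} = 2^{\delta}$. In particular, for every $y$ we have $\Pr[Y = y] \leq 2^\delta \Pr[Y' = y]$ (equality on $\supp{Y}$, and the right side dominates trivially off $\supp{Y}$).

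The last step is to invoke the strong extractor guarantee for the source $Y'$ and bound pointwise:
\begin{align*}
\distCond{\ext(X, Y)}{\unif{m}}{Y}
&= \sum_y \Pr[Y = y]\,\dist{\ext(X, y)}{\unif{m}} \\
&\leq 2^\delta \sum_y \Pr[Y' = y]\,\dist{\ext(X, y)}{\unif{m}} \\
&= 2^\delta\,\distCond{\ext(X, Y')}{\unif{m}}{Y'} \;\leq\; 2^\delta \eps.
\end{align*}
Combined with convexity across the flat decomposition of a general $Y$, this yields the claimed bound. There is no real obstacle here; the only point to be careful about is ensuring $\supp{Y} \subseteq \supp{Y'}$ so that the pointwise domination $\Pr[Y=y] \leq 2^\delta \Pr[Y'=y]$ holds everywhere, which is why the reduction to flat $Y$ (with support size at most $2^{k_2}$) is useful before choosing $Y'$.
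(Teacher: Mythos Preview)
Your argument is correct. The underlying idea is the same as the paper's---build an $(n_2,k_2)$-source $Y'$ with $\supp{Y}\subseteq\supp{Y'}$ so that $\Pr[Y=y]\le 2^\delta\Pr[Y'=y]$ pointwise, and transfer the strong-extractor bound from $Y'$ to $Y$---but your packaging is more direct. The paper runs a contradiction argument: it posits a distinguisher for $(\ext(X,Y),Y)$ versus $(U_m,Y)$, then uses the rejection sampler $\samp$ of Lemma~\ref{lma:rejectionSample} to turn it into a distinguisher for $(\ext(X,Y'),Y')$ versus $(U_m,Y')$ with advantage $>\eps$. You instead exploit the identity $\distCond{\ext(X,Y)}{U_m}{Y}=\bE_{y\leftarrow Y}\big[\dist{\ext(X,y)}{U_m}\big]$ (valid because $X$ and $U_m$ are independent of $Y$), reduce by convexity to flat $Y$, and then the pointwise domination gives the factor $2^\delta$ in one line. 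This is cleaner and avoids the distinguisher detour entirely; in particular, despite what your opening sentence says, you never actually use Lemma~\ref{lma:rejectionSample}, so you can drop that reference.
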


\begin{proof}
    Assume by contradiction that there exists a distribution
    $Y \givenBy (n, k -\delta)$ for which $\distCond{\ext(X, Y)}{\unif{m}}{Y} > 2^\delta \eps$, i.e. there exists a distinguisher $A : \bit^m \to \bit$ such that $\lvert \Pr[A(\ext(X, Y), Y) = 1] - \Pr[A(\unif{m}, Y) = 1] \rvert > 2^\delta \eps$. We want to use this fact to 
    create a distinguisher $D$ that distinguishes $\ext(X, Y')$
    from $\unif{m}$ \emph{for some distribution $Y'\givenBy (n, k)$}. Note that 
    $Y$ can be expressed as a convex combination
    of $(n, k - \delta)$ flat distributions,  i.e. $Y = \sum_i \alpha_i Y_i$.
    We define $Y'$ in the following way:
    $Y'$ is a convex combination of flat 
    distributions $Y'_i$ where each $Y'_i$
    is some $(n, k)$ flat distribution such that
    $supp(Y_i) \subseteq supp(Y'_i)$. We first note
    that for all $y \in supp(Y)$:
    \begin{equation*}
        \frac{Pr[Y = y]}{Pr[Y' = y]} = \frac{\sum_i \alpha_i Pr[Y_i = y] }{ \sum_i \alpha_i Pr[Y'_i = y] }
        \leq \frac{2^{-k + \delta}}{2^{-k}} \leq 2^\delta
    \end{equation*}
    Furthermore, note that $Y'$ has min-entropy $k$. To see this, note that for any $y \in supp(Y)$:
    \begin{equation*}
        \Pr[Y' = y] = \sum_i \alpha_i \Pr[Y'_i = y]
        \leq \sum_i \alpha_i 2^{-k}
        \leq 2^{-k}
    \end{equation*}
    
    Let 
    $\samp$ be a rejection sampler that on input distribution $Y$, samples for $Y'$.
    Now, $D$ is defined as follows:
    \begin{equation*}
        D(Z, Y') = \begin{cases}
                        A(Z, Y') &,\ \text{if }\samp(Y') \neq \bot\\
                        1 &,\ w.p.\ \frac{1}{2},\ \text{if }\samp(Y') = \bot\\
                        0 &,\ else
                    \end{cases}
    \end{equation*}
    Note that by Lemma \ref{lma:rejectionSample}, $\Pr[\samp(Y') \neq \bot] \geq \frac{1}{2^\delta}$ and $\samp(Y')$ is identical to $Y$ conditioned on the event that $\samp(Y') = \bot$. Then the advantage that $D$ distinguishes between $\ext(X, Y')$ and $\unif{m}$ given $Y'$ is given as:
    \begin{align*}
        &\lvert \Pr[D(\ext(X, Y'), Y') = 1] - \Pr[D(\unif{m}, Y') = 1] \rvert\\
        &\geq \Pr[\samp(Y) \neq \bot] \lvert \Pr[A(\ext(X, Y), Y) = 1]
         - \Pr[A(\unif{m}, Y) = 1] \rvert\\
        &>\frac{1}{2^\delta} 2^\delta \eps = \eps
    \end{align*}
    Which in turn implies that $\distCond{\ext(X, Y')}{\unif{m}}{Y'} > \eps$, which implies the desired contradiction.
\end{proof}

% \begin{proof}
%     Assume by contradiction that $\distCond{\ext(X, Y)}{\unif{m}}{Y} > 2^\delta \eps$. Then there exists a distinguisher $A : \bit^m \to \bit$ such that $\lvert \Pr[A(\ext(X, Y), Y) = 1] - \Pr[A(\unif{m}, Y) = 1] \rvert > 2^\delta \eps$.

%     The idea is to use $A$ along with Lemma \ref{lma:rejectionSample} to build a distinguisher $D$ that distinguishes between $\ext(X, Y')$ and $\unif{m}$ with probability greater than $\eps$ when given $Y'$. Define $D$ as follows:
%     \begin{equation*}
%         D(Z, Y') = \begin{cases}
%                         A(Z, Y') &,\ \text{if }\samp(Y') \neq \bot\\
%                         1 &w.p.\ \frac{1}{2},\ \text{if }\samp(Y') = \bot\\
%                         0 &else
%                     \end{cases}
%     \end{equation*}

%     Note that by Lemma \ref{lma:rejectionSample}, $\Pr[\samp(Y') \neq \bot] \geq \frac{1}{2^\delta}$ and $\samp(Y')$ is identical to $Y$ conditioned on the event that $\samp(Y') = \bot$. Then the advantage that $D$ distinguishes between $\ext(X, Y')$ and $\unif{m}$ given $Y'$ is given as:
%     \begin{align*}
%         &\lvert \Pr[D(\ext(X, Y'), Y') = 1] - \Pr[D(\unif{m}, Y') = 1] \rvert\\
%         &= \Pr[\samp(Y) \neq \bot] \lvert \Pr[A(\ext(X, Y), Y) = 1]
%          - \Pr[A(\unif{m}, Y) = 1] \rvert\\
%         &>\frac{1}{2^\delta} 2^\delta \eps = \eps
%     \end{align*}
%     Which in turn implies that $\distCond{\ext(X, Y')}{\unif{m}}{Y'} > \eps$, which implies the desired contradiction.
% \end{proof}

\paragraph{Lowering the Entropy Requirement for Collision Resistance in Extractors.}
\begin{lemma}\label{lma:colReduction}
    Let $\extractorParam{\ext}{(n_1, k_1)}{(n_2, k_2)}{m}{\eps}$ be a strong two-source extractor using input distributions $X$ and $Y'$ that has collision probability $\collisionError$. Then letting $Y \givenBy (n_2, k_2 - \delta)$ and $f$ be any fixed-point-free function:
    
    \begin{equation*}
        \Pr[\ext(X, Y) = \ext(f(X), Y)] \leq 2^\delta \collisionError
    \end{equation*}
\end{lemma}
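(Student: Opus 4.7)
The plan is to mimic the proof of Lemma \ref{lma:extReduction} and invoke the rejection sampler from Lemma \ref{lma:rejectionSample}. Write $Y$ as a convex combination $Y = \sum_i \alpha_i Y_i$ of $(n_2, k_2-\delta)$-flat distributions, and for each $i$ pick an $(n_2, k_2)$-flat distribution $Y_i'$ with $\supp{Y_i} \subseteq \supp{Y_i'}$. Then set $Y' = \sum_i \alpha_i Y_i'$. Exactly as in the previous lemma, $Y'$ is an $(n_2, k_2)$-source and for every $y \in \supp{Y}$ we have $\Pr[Y=y]/\Pr[Y'=y] \leq 2^\delta$, so the rejection sampler $\samp$ with scaling constant $d = 2^\delta$ satisfies $\Pr[\samp(Y') \neq \bot] = 2^{-\delta}$, and conditioned on $\samp(Y') \neq \bot$ its output is distributed exactly as $Y$.

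Next, I would use $\samp$ to couple the collision event on $Y$ to the collision event on $Y'$. Since $\samp$ uses fresh randomness independent of $X$, and since on the event $\{\samp(Y') \neq \bot\}$ we have $\samp(Y') = Y'$, we can write
\begin{align*}
\Pr[\ext(X,Y) = \ext(f(X),Y)] \cdot \Pr[\samp(Y') \neq \bot]
&= \Pr\bigl[\ext(X,\samp(Y')) = \ext(f(X),\samp(Y')) \,\wedge\, \samp(Y') \neq \bot\bigr] \\
&\leq \Pr[\ext(X,Y') = \ext(f(X),Y')] \\
&\leq \collisionError,
\end{align*}
where the last inequality uses that $X$ and $Y'$ are the distributions against which $\ext$ is assumed $\collisionError$-collision-resistant and $f$ is fixed-point-free by hypothesis (the property of $f$ does not depend on the source, so it carries over unchanged). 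Rearranging and using $\Pr[\samp(Y') \neq \bot] = 2^{-\delta}$ yields the desired bound $2^\delta \collisionError$.

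There is no real obstacle here beyond setting up $Y'$ correctly; the two small things to be careful about are (i) that $Y'$ genuinely has min-entropy at least $k_2$ (inherited verbatim from the argument in Lemma \ref{lma:extReduction}) and (ii) that the rejection sampler's randomness is taken to be independent of $X$, so that the joint distribution $(X, \samp(Y'))$ conditioned on $\{\samp(Y') \neq \bot\}$ coincides with $(X, Y)$ and thus also $(f(X), \samp(Y'))$ conditioned on the same event coincides with $(f(X), Y)$. Once these are in place the inequality follows from a one-line calculation as above.
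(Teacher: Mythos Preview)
Your proposal is correct and follows essentially the same approach as the paper: both construct $Y'$ by lifting each flat component of $Y$ to a larger flat $(n_2,k_2)$-distribution, then use the rejection sampler from Lemma~\ref{lma:rejectionSample} to relate the collision event under $Y$ to the one under $Y'$ at a multiplicative cost of $2^{\delta}$. The only cosmetic difference is that the paper phrases the argument by contradiction whereas you carry out the chain of inequalities directly; your version is arguably cleaner, and your explicit remark that the sampler's coins are independent of $X$ is a point the paper leaves implicit.
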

\begin{proof}
    For the sake of contradiction, $Y$ be any $(n, k - \delta)$ distribution for which
    the collision probability is at least 
    $2^{\delta} \cdot \collisionError$.
    
    Note that 
    $Y$ can be expressed as a convex combination
    of $(n, k - \delta)$ flat distributions, i.e. $Y = \sum_i \alpha_i Y_i$.
    We define $Y'$ in the following way:
    $Y'$ is a convex combination of flat 
    distributions $Y'_i$ where each $Y'_i$
    is some $(n, k)$ flat distribution such that
    $supp(Y_i) \subseteq supp(Y'_i)$. We first note
    that for all $y \in supp(Y)$:
    \begin{align*}
        \frac{Pr[Y = y]}{Pr[Y' = y]} &= \frac{\sum_i \alpha_i Pr[Y_i = y] }{ \sum_i \alpha_i Pr[Y'_i = y] } \\
        &\leq \frac{2^{-k + \delta}}{2^{-k}} \leq 2^\delta
    \end{align*}
    Furthermore, note that $Y'$ has min-entropy $k$. To see this, note that for any $y \in supp(Y)$:
    \begin{equation*}
        \Pr[Y' = y] = \sum_i \alpha_i \Pr[Y'_i = y]
        \leq \sum_i \alpha_i 2^{-k}
        \leq 2^{-k}
    \end{equation*}

    Let $\samp$ be a rejection sampler that on input distribution
    $Y'$, samples for $Y$. By the collision resilience property of $\ext$, it follows that:
    \begin{align*}
        \collisionError &\geq \Pr[ \ext(X, Y') = \ext(f(X), Y') ]\\
                        &\geq \Pr[ \ext(X, Y) = \ext(f(X), Y) | \samp(Y) \neq \bot ] \Pr[ \samp(Y) \neq \bot ]\\
                        &= \Pr[ \ext(X, Y) = \ext(f(X), Y)] 2^{-\delta}
    \end{align*}
\end{proof}

\section{A Generic Construction of a Two-Source Non-Malleable Extractor}
\label{sec:generic-reduction}
In this section we present a generic construction that transforms a non-malleable two-source extractor $\mathbf{E}$ into another non-malleable two-source extractor with a much smaller entropy rate requirement via a two-source extractor.

\begin{theorem}
\label{thm:main_reduction}
 For any integers $n_1, n_2, n_3, n_4, k_1, k_2, k_3, k_4, m$ and $\outerError, \innerError, \collisionError > 0$, $n_4 < n_1$, given an efficient construction of
 \begin{itemize}
    \item a strong non-malleable extractor  $\extractorParam{\outerExt}{(n_1, k_1)}{(n_2, k_2)}{m}{\outerError}$,
    \item a right strong two-source extractor $\extractorParam{\innerExt}{(n_3, k_3)}{(n_4, k_4)}{n_2}{\innerError}$ that is $\collisionError$-collision resistant,
\end{itemize}
then for any integers $k_1^*, k_2^*$, $\eps, \tau > 0$ that satisfy the following conditions, there is an efficient construction of a left and right strong non-malleable two-source extractor $\extractorParam{\twonmext}{(n_3, k_1^*)}{(n_1, k_2^*)}{m}{\eps}$. \[k_1^* \ge k_3 \;,\]
\[k_2^* \ge \log 1/\tau + \max\left(k_4 + (n_1 - n_4), k_1 + 2 n_4 \right)   \;,\]
and 
\[
\eps \le 3\tau + 3 \outerError + 2 \innerError + 2 \sqrt{\collisionError}\;.
\]
\end{theorem}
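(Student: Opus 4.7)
The plan is to set $\twonmext(X, Y) := \outerExt(Y,\, \innerExt(X, Y_\ell))$ with $Y = Y_\ell \concat Y_r$ and $|Y_\ell| = n_4$, and then to reduce the given tampering of $(X, Y)$ to an \emph{independent} tampering of the two inputs of $\outerExt$, so that the non-malleability of $\outerExt$ applies after a small amount of controlled leakage. Strongness will fall out of the two-sided strongness of $\outerExt$ at the end.

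Fix tampering functions $f, g$ with at least one fixed-point-free, and decompose $g(Y) = g_\ell(Y) \concat g_r(Y)$. The decoupling step is to reveal the pair $(Y_\ell, g_\ell(Y))$, which carries at most $2 n_4$ bits. By Lemma~\ref{lem:avgminH} and Lemma~\ref{lem:fixavgminH}, the hypothesis $k_2^* \ge \log(1/\tau) + k_1 + 2n_4$ guarantees that, except with probability $\tau$ over the revealed pair $(y_\ell, y'_\ell)$, the conditional min-entropy of $Y_r$ is at least $k_1$. Symmetrically, the hypothesis $k_2^* \ge \log(1/\tau) + k_4 + (n_1 - n_4)$ ensures that $Y_\ell$ alone has conditional min-entropy at least $k_4$ with probability $1-\tau$, so by the right-strongness of $\innerExt$, $\innerExt(X, Y_\ell)$ is $\innerError$-close to $\unif{n_2}$ given $Y_\ell$ and is a valid right input to $\outerExt$.

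With $(y_\ell, y'_\ell)$ now fixed and $Y_r \perp X$, the two outer tamperings decouple: the left input $y_\ell \concat Y_r$ is tampered via the deterministic map $y_r \mapsto y'_\ell \concat g_r(y_\ell, y_r)$ (a function of $Y_r$ alone), while the right input $\innerExt(X, y_\ell)$ is tampered to $\innerExt(f(X), y'_\ell)$ (a function of $X$ alone). Because the joint distribution of $(\innerExt(X, y_\ell), \innerExt(f(X), y'_\ell))$ is independent of $Y_r$, it can be realized as a randomized tampering of the right input of $\outerExt$ using fresh randomness independent of both inputs of $\outerExt$, placing us in the setting of Lemma~\ref{lma:randTamp}. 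The remaining subtlety is the fixed-point requirement of $\outerExt$'s non-malleability: whenever $y_\ell \neq y'_\ell$, or $y_\ell = y'_\ell$ but $g_r(y_\ell, \cdot)$ is not the identity, the left tampering is fixed-point-free. The only delicate sub-case is $y_\ell = y'_\ell$ together with the left tampering having a fixed point, in which case $g$ is not fixed-point-free globally, so $f$ must be; here I would appeal to the $\collisionError$-collision resistance of $\innerExt$: since $\bE_{Y_\ell}[\Pr_X[\innerExt(X, Y_\ell) = \innerExt(f(X), Y_\ell)]] \le \collisionError$, Markov's inequality gives that, outside a set of $Y_\ell$-measure at most $\sqrt{\collisionError}$, the conditional collision probability is at most $\sqrt{\collisionError}$, so the right tampering of $\outerExt$ is fixed-point-free except with total probability $2\sqrt{\collisionError}$.

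Summing the contributions (each entropy-loss step costs $\tau$, each invocation of $\innerExt$'s extraction costs $\innerError$, each application of $\outerExt$'s non-malleability or strongness costs $\outerError$, and the collision analysis costs $2\sqrt{\collisionError}$) gives the claimed bound $\eps \le 3\tau + 3\outerError + 2\innerError + 2\sqrt{\collisionError}$. The two strongness properties follow from the two-sided strongness of $\outerExt$: revealing $X$ on top of $(y_\ell, y'_\ell)$ amounts to revealing the right input of $\outerExt$, which is covered by right-strongness of $\outerExt$; revealing $Y$ amounts to revealing the left input of $\outerExt$, while $\innerExt(X, Y_\ell)$ stays $\innerError$-close to uniform given $Y_\ell$ (using $X \perp Y_r$), which is covered by left-strongness of $\outerExt$. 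The main obstacle I anticipate is the careful handling of the fixed-point case together with the per-$y_\ell$ use of collision resistance (the square-root in $2\sqrt{\collisionError}$ is the telltale sign of this Markov-type conversion), and the tight entropy bookkeeping across the two rounds of revelation.
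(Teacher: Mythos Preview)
Your proposal is correct and follows essentially the same approach as the paper. The construction, the decoupling step of revealing $(Y_\ell, g(Y)_\ell)$, the Markov conversion of collision resistance into the $2\sqrt{\collisionError}$ term, and the derivation of two-sided strongness from that of $\outerExt$ all match the paper's argument. The only organizational difference is that the paper fixes up front which of $f,g$ is fixed-point-free (treating the $f$ case in full and noting the $g$ case is simpler) and then partitions the support of $Y$ into $\cY_1 = \{y : g(y)_\ell \neq y_\ell\}$, $\cY_{0,0}$, $\cY_{0,1}$, handling each piece by a separate claim; your per-$(y_\ell, y'_\ell)$ case analysis with the observation ``if the left tampering has a fixed point then $g$ cannot be globally fixed-point-free, so $f$ must be'' is an equivalent way to route the argument.
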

\begin{proof}
Our construction is as follows: Given inputs $x \in \bit^{n_3}$ and $y = y_\ell \concat y_r$, where $y_\ell \in \bit^{n_4}$, and $y_r \in \bit^{n_1-n_4}$ our extractor is defined as:
\begin{equation}
    \twonmext(x, y) := \outerExt( y_\ell \concat y_r,  \innerExt(x, y_\ell) ) \;.
\end{equation}

Let $f:\bit^{n_3} \to \bit^{n_3}$ and $g:\bit^{n_1} \to \bit^{n_1}$. For any $y \in \bit^{n_1}$, by $g(y)_\ell$ we denote the $n_4$ bit prefix of $g(y)$. We assume that $f$ does not have any fixed points. The proof for the case when $g$ not having any fixed points is similar (in fact, simpler) as we explain later. 

\paragraph{Right strongness.} We first prove that our non-malleable extractor is right strong. 

\begin{claim}\label{clm:CloseToUnif}
 Let $\tY$ be a random variable with min-entropy $k_2^* - \log 1/\tau$ and is independent of $X$.   Consider the randomized function $\modifiedTampering$ that given $a, b, c$, samples $\innerExt(f(X), c)$ conditioned on $\innerExt(X, b) = a$, i.e., \[\modifiedTampering : a, b, c \mapsto \innerExt(f(X), c)_{| \innerExt(X, b) = a}\;.\] Then:

     \begin{equation}
    \distCond{
        \begin{array}{c}
            \innerExt(X, \tY_\ell)\\\outerExt( \tY_\ell \concat \tY_r,  \innerExt(X, \tY_\ell) )\\
            \outerExt( g(\tY),  \innerExt(f(X), g(\tY)_\ell) )            
        \end{array}
    }{
        \begin{array}{c}
            \unif{d}\\\outerExt( \tY_\ell \concat \tY_r,  \unif{d} )\\
            \outerExt( g(\tY),  \modifiedTampering(\unif{\innerOutputLength}, \tY_\ell, g(\tY)_\ell) )
        \end{array}
    }{
        \begin{array}{c}
            \tY_r\\ \tY_\ell\\
            g(\tY)_\ell
        \end{array}
    } \leq \innerError \;.
    \end{equation}
\end{claim}
\begin{proof}
 We have that $\minEnt{X} \geq k_1^* \ge k_3$ and $\minEnt{\tY_\ell} \geq k_2^* - \log 1/\tau - |\tY_r| = k_2^* - \log 1/\tau -  (n_1 - n_4) \ge k_4$, and $X, \tY_\ell$ are independently distributed. It follows that $\distCond{\innerExt(X, \tY_\ell)}{\unif{\innerOutputLength}}{\tY_\ell} \leq \innerError$. Then,  Lemma~\ref{lma:randDistinguisher} implies that
    \[        \distCond{\innerExt(X, \tY_\ell)}{\unif{\innerOutputLength}}{\tY_\ell, \tY_r, g(\tY)_\ell} \le \innerError \;.
    \]

Observing that since $\tY_r$ is independent of $\innerExt(f(X), g(\tY)_\ell), \innerExt(X, \tY_\ell)$ given $\tY_\ell, g(\tY)_\ell$, we have that the tuple $\innerExt(X, \tY_\ell), \tY_\ell, \tY_r,  \modifiedTampering(\innerExt(X, \tY_\ell), \tY_\ell, g(\tY)_\ell)$ is identically distributed as $\innerExt(X, \tY_\ell), \tY_\ell, \tY_r,  \innerExt(f(X), g(\tY)_\ell)$.   Again applying Lemma \ref{lma:randDistinguisher}, we get the desired statement.
 \end{proof}
    
    Now, let $\cY_0$ be the set of $y$ such that $g(y)_\ell = y_\ell$, and $\cY_1$ be the set of all $y$ such that $g(y)_\ell \neq y_\ell$ (in other words, $\cY_0$ contains all the fixed-points of $g$, and $\cY_1$ is the complement set). Also, let $\cY_{0,0}$ be the set of all $y \in \cY_0$ such that $\Pr[C(X,y_\ell) = C(f(X), y_\ell)] \le \sqrt{\collisionError}$, and $\cY_{0,1} = \cY_0 \setminus \cY_{0,0}$.  
    
    \begin{claim}\label{cl:Y1}
        If $\Pr[Y \in \cY_1] \ge \tau$, then 
        \begin{equation}
    \distCond{
        \begin{array}{c}
            \outerExt( \tY_\ell \concat \tY_r,  \innerExt(X, \tY_\ell) )\\
            \outerExt( g(\tY),  \innerExt(f(X), g(\tY)_\ell) ) 
        \end{array}
    }{
        \begin{array}{c}
           U_m\\
            \outerExt( g(\tY),  \innerExt(f(X), g(\tY)_\ell) )
        \end{array}
    }{  
        \begin{array}{c}
        \tY_\ell\\ \tY_r
        \end{array}
    } \leq \innerError + \outerError  \;,
    \end{equation}
    where $\tY = Y|_{Y \in \cY_1}$. 
    \end{claim}
    \begin{proof}
        Notice that conditioned on $Y$ being in $\cY_1$, $g$ does not have a fixed point. Thus, since $U_{n_2}$ is independent of $\tY_r$ given $\tY_\ell, g(\tY)_\ell$, and $H_\infty(U_{n_2}) = n_2 \ge k_2$, $H_\infty(\tY_r|\tY_\ell, g(\tY_r)) \ge k_2^* - \log 1/\tau - 2n_4 \ge k_1$, by the definition of a strong non-malleable extractor, we have that
        
    \[
    \distCond{
        \begin{array}{c}
            \outerExt( \tY_\ell \concat \tY_r,  U_{n_2} )\\
            \outerExt( g(\tY),  T_{f,g}(U_{n_2}, \tY_\ell, g(\tY)_\ell) )
        \end{array}
    }{
        \begin{array}{c}
           U_m\\
 \outerExt( g(\tY),  T_{f,g}(U_{n_2}, \tY_\ell, g(\tY)_\ell) )
        \end{array}
    }{
        \begin{array}{c}
        \tY_\ell\\ \tY_r
        \end{array}
    } \le \outerError \;. 
    \]
    Furthermore, from Claim~\ref{clm:CloseToUnif} and Lemma~\ref{lma:randDistinguisher}, we get that
    \[
    \distCond{
        \begin{array}{c}
            \outerExt( \tY_\ell \concat \tY_r,  U_{n_2} )\\
            \outerExt( g(\tY),  T_{f,g}(U_{n_2}, \tY_\ell, g(\tY)_\ell) )
        \end{array}
    }{
        \begin{array}{c}
        \outerExt( \tY_\ell \concat \tY_r,  \innerExt(X, \tY_\ell ))\\
            \outerExt( g(\tY),  \innerExt(f(X), g(\tY)_\ell )
        \end{array}
    }{
        \begin{array}{c}
        \tY_\ell\\ \tY_r
        \end{array}
    } \le \innerError \;. 
    \]
    The desired statement follows from triangle inequality.
    \end{proof}
    
    Similarly, we prove the following claim.
    
  \begin{claim}
        If $\Pr[Y \in \cY_{0,0}] \ge \tau$, then 
        \begin{equation}
    \distCond{
        \begin{array}{c}
            \outerExt( \tY_\ell \concat \tY_r,  \innerExt(X, \tY_\ell) )\\
            \outerExt( g(\tY),  \innerExt(f(X), g(\tY)_\ell) ), 
        \end{array}
    }{
        \begin{array}{c}
           U_m\\
            \outerExt( g(\tY),  \innerExt(f(X), g(\tY)_\ell) )
        \end{array}
    }{
        \begin{array}{c}
        \tY_\ell\\ \tY_r
        \end{array}
    } \leq \outerError + 2\innerError + \sqrt{\collisionError} \;,
    \end{equation}
    where $\tY = Y|_{Y \in \cY_{0,0}}$. 
    \end{claim}
    \begin{proof}
        Notice that  the probability that $\innerExt(X, \tY) = \innerExt(f(X), g(\tY)_\ell)$ is at most $\sqrt{\collisionError}$. Thus, by Claim~\ref{clm:CloseToUnif}, the probability that $U_{n_2} = T_{f,g}(U_{n_2}, \tY_\ell, g(\tY)_\ell)$ is at most $\sqrt{\collisionError} + \innerError$.   Also, since $U_{n_2}$ is independent of $\tY_r$ given $\tY_\ell, g(\tY)_\ell$, and $H_\infty(U_{n_2}) = n_2 \ge k_2$, $H_\infty(\tY_r|\tY_\ell, g(\tY_r)) \ge k^* - \log 1/\tau - 2n_4 \ge k_2$, by the definition of a strong non-malleable extractor, we have that
        
    \begin{equation*}
    \distCond{
        \begin{array}{c}
            \outerExt( \tY_\ell \concat \tY_r,  U_{n_2} )\\
            \outerExt( g(\tY),  T_{f,g}(U_{n_2}, \tY_\ell, g(\tY)_\ell) )
        \end{array}
    }{
        \begin{array}{c}
           U_m\\
 \outerExt( g(\tY),  T_{f,g}(U_{n_2}, \tY_\ell, g(\tY)_\ell) )
        \end{array}
    }{\begin{array}{c}
        \tY_\ell\\ \tY_r
        \end{array}} \le {\begin{array}{c}
             \outerError + \\
             \innerError + \\
              \sqrt{\collisionError}
        \end{array}}   \;. 
    \end{equation*}
    
    Furthermore, from Claim~\ref{clm:CloseToUnif} and Lemma~\ref{lma:randDistinguisher}, we get that
    \[
    \distCond{
        \begin{array}{c}
            \outerExt( \tY_\ell \concat \tY_r,  U_{n_2} )\\
            \outerExt( g(\tY),  T_{f,g}(U_{n_2}, \tY_\ell, g(\tY)_\ell) )
        \end{array}
    }{
        \begin{array}{c}
        \outerExt( \tY_\ell \concat \tY_r,  \innerExt(X, \tY_\ell ))\\
            \outerExt( g(\tY),  \innerExt(f(X), g(\tY)_\ell )
        \end{array}
    }{\begin{array}{c}
        \tY_\ell\\ \tY_r
        \end{array}} \le \innerError \;. 
    \]
    The desired statement follows from triangle inequality.
    \end{proof}
    We now show that $Y \in \cY_{0,1}$ with small probability.
    
    \begin{claim}
    \label{claim:Y01}
       \[ \Pr[Y \in \cY_{0,1}] \le \tau + \sqrt{\collisionError}\;.\]
    \end{claim}
    \begin{proof}
        If $\Pr[Y \in \cY_0] < \tau$, then the statement trivially holds. So, we assume $\Pr[Y \in \cY_0] \ge \tau$. Let $\tY = Y|_{Y \in \cY_0}$  Then $H_\infty(\tY) \ge k_2^* - \log 1/\tau - (n_1 - n_4) \ge k_4$. Since $\mathbf{C}$ is collision-resistant, we have that 
        \begin{align*}
            \collisionError &\ge \Pr[\innerExt(X, \tY_\ell) = \innerExt(f(X), g(\tY)_\ell)] 
             \Pr[\tY \in \cY_{0,1}] \cdot \sqrt{\collisionError} \\
             &\ge \Pr[Y \in \cY_{0,1}] \cdot \sqrt{\collisionError} \;.
        \end{align*}
    \end{proof}
    We now conclude the proof of right strongness of our non-malleable extractor as follows. We shorthand         $\twonmext( X, Y), Y, \twonmext( f(X),  g(Y))$ by $\phi(X,Y)$, and $U_m, Y, \twonmext( f(X),  g(Y))$ by $\psi(X,Y)$.
    \begin{align*}
        \dist{
            \phi( X, Y)
    }{
        \psi(X,Y)
    } &\le \Pr[Y \in \cY_{0,1}]+ \Pr[Y \in \cY_1] \cdot \dist{
            \phi( X, Y)|_{Y \in \cY_1}
    }{
        \psi(X,Y)|_{Y \in \cY_1}
    } \\
    &\;\;\;\;\;\;\;\;\; + \Pr[Y \in \cY_{0,0}] \cdot \dist{
            \phi( X, Y)|_{Y \in \cY_{0,0}}
    }{
        \psi(X,Y)|_{Y \in \cY_{0,0}}
    } \\
    &\le (\tau + \outerError + \innerError) + (\tau+2\outerError + \innerError+\sqrt{\collisionError}) + (\tau+\sqrt{\collisionError}) \\
    &= 3\tau + 3 \outerError + 2 \innerError + 2 \sqrt{\collisionError} \;. 
    \end{align*}
    
    Note that we assumed that $f$ does not have fixed points. On the other hand, if $g$ does not have fixed points then a simpler proof works that does not need to partition the domain into $\cY_{0,0}, \cY_{0,1}, \cY_1$. Since the first source for the non-malleable extractor $\outerExt$, we can conclude the statement similar to Claim~\ref{cl:Y1} with $Y$ instead of $\tY$. 
    
    \paragraph{Left strongness.} 
 The proof of left strongness is nearly the same (the statistical distance statements include $X$ instead of $Y_r$), but we include it here for completeness. 

\begin{claim}\label{clm:CloseToUnif_2}
 Let $\tY$ be a random variable with min-entropy $k^* - \log 1/\tau$ and is independent of $X$.   Consider the randomized function $S$ that given $a, b$, samples $X$ conditioned on $\innerExt(X, b) = a$, i.e., \[S : a, b \mapsto X|_{ \innerExt(X, b) = a}\;.\] Then:

     \begin{equation}
    \distCond{
        \begin{array}{c}
            \innerExt(X, \tY_\ell) \\ X
        \end{array}
    }{
            \begin{array}{c}
            \unif{d},\\S(U_{n_2}, \tY_\ell)
             \end{array}
    }{
            \begin{array}{c}
        \tY_\ell\\ \tY_r
        \end{array}
    } \leq \innerError \;.
    \end{equation}
\end{claim}
\begin{proof}
 We have that $\minEnt{X} \geq k_1^* \ge k_3$ and $\minEnt{\tY_\ell} \geq k_2^* - \log 1/\tau - |\tY_r| = k_2^* - \log 1/\tau -  (n_1 - n_4) \ge k_4$, and $X, \tY_\ell$ are independently distributed. It follows that $\distCond{\innerExt(X, \tY_\ell)}{\unif{\innerOutputLength}}{\tY_\ell} \leq \innerError$. Then, using Lemma~\ref{lma:randDistinguisher} and 
observing that since $\tY_r$ is independent of $X$ given $\tY_\ell$, we have that  $\innerExt(X, \tY_\ell), \tY_\ell, \tY_r,  S(\innerExt(X, \tY_\ell), \tY_\ell)$ is identically distributed as $\innerExt(X, \tY_\ell), \tY_\ell, \tY_r, X$, we get the desired statement.
 \end{proof}
    
    Now, let $\cY_0$ be the set of $y$ such that $g(y)_\ell = y_\ell$, and $\cY_1$ be the set of all $y$ such that $g(y)_\ell \neq y_\ell$. Also, let $\cY_{0,0}$ be the set of all $y \in \cY_0$ such that $\Pr[C(X,y_\ell) = C(f(X), y_\ell)] \le \sqrt{\collisionError}$, and $\cY_{0,1} = \cY_0 \setminus \cY_{0,0}$.  
    
    \begin{claim}\label{cl:Y1_2}
        If $\Pr[Y \in \cY_1] \ge \tau$, then 
        \begin{equation}
    \distCond{
        \begin{array}{c}
            \outerExt( \tY_\ell \concat \tY_r,  \innerExt(X, \tY_\ell) )
        \end{array}
    }{
        \begin{array}{c}
           U_m
        \end{array}
    }{
        \begin{array}{c}
        
        \tY_\ell\\ g(\tY)_\ell\\ X\\ \outerExt( g(\tY),  \innerExt(f(X), g(\tY)_\ell) )
        \end{array}
    } \leq 2\innerError + \outerError  \;,
    \end{equation}
    where $\tY = Y|_{Y \in \cY_1}$. 
    \end{claim}
    \begin{proof}
        Notice that conditioned on $Y$ being in $\cY_1$, $g$ does not have a fixed point. Thus, since $U_{n_2}$ is independent of $\tY_r$ given $\tY_\ell, g(\tY)_\ell$, and $H_\infty(U_{n_2}) = n_2 \ge k_2$, $H_\infty(\tY_r|\tY_\ell, g(\tY_r)) \ge k^* - \log 1/\tau - 2n_4 \ge k_1$, by the definition of a strong non-malleable extractor, we have that
        
    \[
    \distCond{
        \begin{array}{c}
            \outerExt( \tY_\ell \concat \tY_r,  U_{n_2} )
        \end{array}
    }{
        \begin{array}{c}
           U_m
        \end{array}
    }{U_{n_2}, \outerExt( g(\tY), \innerExt(f(S(U_{n_2}, \tY_\ell)), g(\tY)_\ell)), \tY_\ell, g(\tY)_\ell} \le \outerError \;,
    \]
    Furthermore, by applying Claim~\ref{clm:CloseToUnif} and Lemma~\ref{lma:randDistinguisher} twice, we get that
     \[
    \distCond{
            \begin{array}{c}
                 U_m, S(U_{n_2}, \tY_\ell)  \\
                 \outerExt( g(\tY), \innerExt(f(S(U_{n_2}, \tY_\ell)), g(\tY)_\ell))
            \end{array}
           }{
           \begin{array}{c}
           U_m, X\\ \outerExt( g(\tY), \innerExt(f(X), g(\tY)_\ell))
           \end{array}
           }{\begin{array}{c}
                 \tY_\ell  \\
                 g(\tY)_\ell
            \end{array}} \le \innerError \;. 
    \]
    and 
     \[
    \distCond{
    \begin{array}{c}
           \outerExt(\tY_\ell \concat \tY_r, U_{n_2}) , S(U_{n_2}, \tY_\ell) \\ \outerExt( g(\tY), \innerExt(f(S(U_{n_2}, \tY_\ell)), g(\tY)_\ell))
           \end{array}
           }{
           \begin{array}{c}
           \outerExt(\tY_\ell \concat \tY_r, \innerExt(X, \tY_\ell)), X \\ \outerExt( g(\tY), \innerExt(f(X), g(\tY)_\ell))
           \end{array}}{
            \begin{array}{c}
                 \tY_\ell  \\
                 g(\tY)_\ell
            \end{array}
           } \le \innerError \;. 
    \]
    The desired statement follows from triangle inequality.
    \end{proof}
    
    Similarly, we prove the following claim.
    
  \begin{claim}
        If $\Pr[Y \in \cY_{0,0}] \ge \tau$, then 
   \begin{equation}
    \distCond{
        \begin{array}{c}
            \outerExt( \tY_\ell \concat \tY_r,  \innerExt(X, \tY_\ell) )
        \end{array}
    }{
        \begin{array}{c}
           U_m
        \end{array}
    }{
        \begin{array}{c}
        \tY_\ell\\ g(\tY)_\ell\\ X \\ \outerExt( g(\tY)  \innerExt(f(X), g(\tY)_\ell) )
        \end{array}
    } \leq 
    \outerError +  
         3\innerError +
         \sqrt{\collisionError}
    % \begin{array}{c}
    %      \outerError +  \\
    %      3\innerError +\\ 
    %      \sqrt{\collisionError}
    % \end{array}
       \;,
    \end{equation}
    where $\tY = Y|_{Y \in \cY_{0,0}}$. 
    \end{claim}
    \begin{proof}
        Notice that  the probability that $\innerExt(X, \tY) = \innerExt(f(X), g(\tY)_\ell)$ is at most $\sqrt{\collisionError}$. Thus, by Claim~\ref{clm:CloseToUnif}, the probability that $U_{n_2} = \innerExt(S(U_{n_2}, \tY_\ell), g(\tY)_\ell)$ is at most $\sqrt{\collisionError} + \innerError$.   Also, since $U_{n_2}$ is independent of $\tY_r$ given $\tY_\ell, g(\tY)_\ell$, and $H_\infty(U_{n_2}) = n_2 \ge k_2$, $H_\infty(\tY_r|\tY_\ell, g(\tY_r)) \ge k^* - \log 1/\tau - 2n_4 \ge k_1$, by the definition of a strong non-malleable extractor, we have that
        
    \[
    \distCond{
        \begin{array}{c}
            \outerExt( \tY_\ell \concat \tY_r,  U_{n_2} )
        \end{array}
    }{
        \begin{array}{c}
           U_m
        \end{array}
    }{
    \begin{array}{c}
    U_{n_2}\\
    \outerExt( g(\tY), \innerExt(f(S(U_{n_2}, \tY_\ell)), g(\tY)_\ell))\\
    \tY_\ell\\
    g(\tY)_\ell
    \end{array}}
    \le \outerError + \innerError + \sqrt{\collisionError} \;. 
    \]
    Furthermore, by applying Claim~\ref{clm:CloseToUnif} and Lemma~\ref{lma:randDistinguisher} twice, we get that
     \[
    \distCond{
            \begin{array}{c}
                 U_m\\ S(U_{n_2}, \tY_\ell)\\
                 \outerExt( g(\tY), \innerExt(f(S(U_{n_2}, \tY_\ell)),
                 g(\tY)_\ell))
            \end{array}
             }{
             \begin{array}{c}
             U_m\\ X\\
             \outerExt( g(\tY), \innerExt(f(X), g(\tY)_\ell))\\
             \end{array}
              }{\tY_\ell, g(\tY)_\ell} \le \innerError \;. 
    \]
    and 
     \[
    \distCond{
    \begin{array}{c}
           \outerExt(\tY_\ell \concat \tY_r, U_{n_2}) , S(U_{n_2}, \tY_\ell) \\ \outerExt( g(\tY), \innerExt(f(S(U_{n_2}, \tY_\ell)), g(\tY)_\ell))
           \end{array}
           }{
           \begin{array}{c}
           \outerExt(\tY_\ell \concat \tY_r, \innerExt(X, \tY_\ell)), X \\ \outerExt( g(\tY), \innerExt(f(X), g(\tY)_\ell))
           \end{array}}{\begin{array}{c}
                 \tY_\ell  \\
                 g(\tY)_\ell
            \end{array}} \le \innerError \;. 
    \]
    The desired statement follows from triangle inequality.
    \end{proof}
    We then conclude the proof of right strongness of our non-malleable extractor exactly as we obtained left strongness. 
\end{proof}

%\begin{remark}
%    We remark that the above analysis will also work when the given \outerExt is a $t$-times non-malleable extractor. Assuming the entropy requirements are met, we will get a bound on the statistical distance to uniform conditioned on $\tY_\ell, g_1(\tY)_\ell, g_2(\tY)_\ell, \ldots, g_t(\tY)_\ell$, where $\blocks{g}{t}$ are the $t$ tampering functions.
%\end{remark}
\begin{remark}\label{multitamper1}
    We remark that one can apply the above compiler to multi-tampering non-malleable extractors as a $\outerExt$. Briefly speaking $t$-tampering non-malleable extractor guarantees that extraction output remains uniform even given not one but $t$ tampering outputs:
    \begin{equation*}
        \distCond{E(X, Y)}{\unif{m}}{E(f_1(X), g_1(Y)),...,E(f_t(X), g_t(Y))} \leq \eps.
    \end{equation*}
    As a result, compiled extractor will also be $t$-tamperable.
    The proof is almost identical, there are only two differences:
    \begin{enumerate}
        \item To ensure the reduction to split state tampering it is not sufficient to reveal $\tY_\ell$ and $g_1(\tY)_\ell$, but also all other tamperings: $g_2(\tY)_\ell, \ldots, g_t(\tY)_\ell$. This will have an impact of the calculations of entropy requirement.
        \item Notice that when considering the collision resistance adversary has now $t$ chances instead of $1$, but since the attempts are non-adaptive we can easily bound the collision probability by $t\cdot \collisionError$, this impacts the error calculations.
    \end{enumerate}
\end{remark}

\section{Collision resistance of Extractors}
\label{sec:coll-res}
\subsection{Generic Collision Resistance for Seeded Extractors}
\label{sec:coll-res-generic}
\begin{lemma}\label{lma:CRSeededExt}
    Let $\extractorParam{\ext}{(n, k)}{(d, d)}{m}{\eps}$ be a strong seeded extractor. Then there exists a strong seeded extractor $\extractorParam{\crTreExt}{(n, k)}{(d + z, d + z)}{m - 2\log(1/\collisionError)}{\eps + \treError + \sqrt{\collisionError}}$ with collision probability $\collisionError$ and $z = O(\log(1/\collisionError)\log^2(\log(1/\collisionError))\log(1/\treError))$.
\end{lemma}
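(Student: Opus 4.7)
The plan is to instantiate the two-component construction already sketched in Section~\ref{sec:overview}. First I would build a ``cheap'' auxiliary extractor $h \colon \{0,1\}^n \times \{0,1\}^z \to \{0,1\}^{m_h}$ that is simultaneously a seeded extractor and collision resistant, with small output $m_h = \Theta(\log(1/\collisionError))$ and seed length $z$ as claimed. The Nisan--Wigderson/Trevisan recipe gives this: let $\hat{X} = \ecc(X)$ for an asymptotically good binary error-correcting code with relative distance $\delta$, parse the seed $Z$ via a weak design into blocks $Z_1, \dots, Z_{m_h}$, and set $h(X, Z) = \hat{X}(Z_1) \concat \cdots \concat \hat{X}(Z_{m_h})$. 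Collision resistance follows directly from the distance of the code: for any fixed-point-free $f$, each bit agrees with probability at most $1-\delta$ over a uniform $Z_i$, and the design's near-independence makes the joint collision probability at most $(1-\delta)^{m_h} \le \collisionError$ once $m_h = \Theta(\log(1/\collisionError))$. The extractor property of $h$ with error $\treError$ is the standard RRV analysis (Lemma~\ref{lma:tre}), which determines the seed length $z$ displayed in the statement.

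Next, writing $\ext_t(X, S)$ for $\ext(X, S)$ truncated to $m - 2\log(1/\collisionError) - m_h$ bits, I would set
\[
\crTreExt(X,\, S \concat Z) \;:=\; \ext_t(X, S) \concat h(X, Z),
\]
so that the total output length is $m - 2\log(1/\collisionError)$, and the ``missing'' $2\log(1/\collisionError) + m_h$ bits are precisely the slack needed to keep enough residual entropy in $X$. To prove strong extraction, I would argue in two stages. By the hypothesis on $\ext$, the tuple $(\ext_t(X, S), S)$ is $\eps$-close to $(U_{m - 2\log(1/\collisionError) - m_h}, S)$. After conditioning on $(\ext_t(X, S), S)$ -- which carries at most $m - 2\log(1/\collisionError) - m_h$ bits of information about $X$, since $S$ is independent of $X$ -- Lemma~\ref{lem:avgminH} gives $\avgCondMinEnt{X}{\ext_t(X,S), S} \ge k - m + 2\log(1/\collisionError) + m_h$. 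Picking the min-entropy requirement of $h$ so that it is exceeded by at least $\log(1/\sqrt{\collisionError})$ bits, Lemma~\ref{lma:avgCaseMinExtraction} then yields that $h(X, Z)$ is $(\treError + \sqrt{\collisionError})$-close to uniform given $(\ext_t(X, S), S, Z)$. A triangle inequality with the $\eps$-closeness above delivers the claimed total error $\eps + \treError + \sqrt{\collisionError}$; strongness is preserved since the full seed $(S, Z)$ is the conditioning variable throughout.

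For collision resistance of $\crTreExt$ itself: for any fixed-point-free $f$ one has $\crTreExt(X, S\concat Z) \neq \crTreExt(f(X), S\concat Z)$ whenever $h(X, Z) \neq h(f(X), Z)$, and this latter event already occurs with probability at least $1 - \collisionError$ by the construction of $h$ above, so the composed collision probability is at most $\collisionError$ as required.

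The main obstacle I anticipate is the simultaneous balancing of four competing demands on $h$: $m_h$ must be large enough for the ECC-based collision probability to drop below $\collisionError$; small enough (together with the $2\log(1/\collisionError)$ slack) that $X$'s residual entropy after the $\ext$ output is still above $h$'s min-entropy requirement; the seed length $z$ must match the claimed bound, which forces a careful setting of the weak-design parameters as a function of $m_h$ and $\treError$; and the hidden $\log(1/\sqrt{\collisionError})$ slack demanded by Lemma~\ref{lma:avgCaseMinExtraction} must fit inside the $2\log(1/\collisionError)$ budget. Once this bookkeeping is done, every remaining step reduces to triangle inequality and the cited lemmas.
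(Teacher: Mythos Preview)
Your proposal follows essentially the same construction and argument as the paper: truncate $\ext$, append a short Trevisan-style piece $h(X,Z)=\hat X(Z_1)\concat\cdots\concat\hat X(Z_{m_h})$ built from an error-correcting code, derive collision resistance from the code distance, and derive extraction from the residual-entropy argument via Lemma~\ref{lem:avgminH} and Lemma~\ref{lma:avgCaseMinExtraction}. One small point to tighten: your claim that ``the design's near-independence makes the joint collision probability at most $(1-\delta)^{m_h}$'' does not follow for a general weak design, since overlapping blocks destroy independence of the collision events; the paper sidesteps this by taking the $Z_i$ to be \emph{disjoint} blocks (a trivial weak design), which makes the per-bit collisions genuinely independent and gives the product bound directly. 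With that choice your argument goes through exactly as in the paper.
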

\begin{proof}
    We will first mention \cite{RRV99}'s construction of $\treExt$. The aforementioned construction uses an error correcting code and a weak design, defined respectively as below:

    \begin{lemma}[Error Correcting Code, Lemma 35 of \cite{RRV99}]
        For every $n \in \bN$, and $\delta > 0$, there exists a code $\ecc : \bit^n \to \bit^{\hat{n}}$ where $\hat{n} = poly(n, 1/\delta)$ such that for $x, x' \in \bit^n$ with $x \neq x'$, it is the case that $\ecc(x)$ and $\ecc(x')$ disagree in at least $(\frac{1}{2} - \delta) \hat{n}$ positions.
    \end{lemma}

    \begin{definition}[Weak Design, Definition 6 of \cite{RRV99}]
        A family of sets $\blocks{S}{m} \subseteq [d]$ is a weak $(\ell, \rho)$-design if:
        \begin{enumerate}
            \item For all $i$, $\lvert S_i \rvert = \ell$;
            \item For all $i$,
            \begin{equation*}
                \sum_{j < i} 2^{\lvert S_i \cap S_j \rvert} \leq \rho \cdot (m - 1).
            \end{equation*}
        \end{enumerate}
    \end{definition}
    In particular, any family of disjoint sets $\blocks{S}{m} \subseteq [d]$ with $\lvert S_i \rvert = \ell$ is trivially a weak design as well.

    Extractor $\treExt$ operates in the following way: $X$ is firstly evaluated on an error correcting code $\ecc$ to obtain $\hat{X}$. Then viewing seed bits $Z$ as $Z_1 \concat Z_2 \concat \ldots \concat Z_d$, then the $i^{th}$ bit of $\treExt(X, Z)$ is given as the $(Z_{|S_i})^{th}$ bit of $\hat{X}$ where $Z_{|S_i}$ is understood to specify an $\ell$-bit index $Z_{j_1} \concat Z_{j_2} \concat \ldots \concat Z_{j_\ell}$ for $S_i = \{ j_1, j_2, \ldots, j_\ell \}$. In short, the output is given as:

    \begin{equation*}
        \treExt(X, Z) = \hat{X}(Z_{|S_1}) \concat \hat{X}(Z_{|S_2}) \concat \cdots \concat \hat{X}(Z_{|S_m}).
    \end{equation*}

    The modification is to truncate the output of $\ext(X, S)$ by $t = \frac{5}{2}\log(1/\collisionError)$ bits, and then treating $Z$ as $\frac{4t}{5}$ blocks of $\ell = O(\log^2(t)\log(1/\treError))$ many bits, we concatenate the output with $\frac{4t}{5}$ bits. In short, the output is given as:

    \begin{equation*}
        \crTreExt(X, S \concat Z)_i = \begin{cases}
                                        \ext(X, S)_i &,\text{if } i \leq m - t\\
                                        \hat{X}(Z_{i - (m - t)}) &, \text{if } i > m - t
                                    \end{cases}
    \end{equation*}
    where $Z_j$ denotes the $j^{th}$ block of $Z$.

    To show that $\crTreExt$ is indeed a strong extractor, note that $S$ and $Z$ are independent and furthermore by Lemma \ref{lem:avgminH} $\avgCondMinEnt{X}{\ext(X, S),S} \geq k - m + t \geq t$. Instantiating $\crTreExt$ with a family of disjoint sets, an error correcting code $\ecc$ with minimum distance $(\frac{1}{2} - \frac{\treError}{4m}) \hat{n}$ for inputs of min-entropy $(t, \frac{4t}{5})$ and seed length $O( \log(1/\collisionError)\log^2(t) \log(1/\treError))$, Lemma \ref{lma:avgCaseMinExtraction} implies that:

    \begin{equation*}
        \distCond{ \ext(X, S) \treExt(X, Z) }{ \ext(X, S), \unif{\Omega(t)} }{S, Z} \leq \treError + 2^{-\frac{t}{5}}
    \end{equation*}
    which in turn yields us:
    \begin{equation*}
        \distCond{ \ext(X, S) \concat \treExt(X, Z) }{ \unif{m - O(t)} }{S, Z} \leq \eps + \treError + 2^{-\frac{t}{5}} = \eps + \treError + \sqrt{\collisionError}
    \end{equation*}

    As for the collision probability, note that for any $x$ and fixed-point-free function $f$:
    \begin{align*}
        \Pr[\crTreExt(x, S \concat Z) = \crTreExt(f(x), S \concat Z)] 
        &\leq \Pr\big[\forall i, \ecc(x)(Z_i) = \ecc(f(x))(Z_i)\big]\\
        &\leq \left(\frac{1}{2} + \frac{\treError}{4m}\right)^{2\log(1/\collisionError)}\\
        &\leq \collisionError
    \end{align*}

    Since this bound holds for all possible values $x$, it follows that it holds for any random variable $X$ as well.
\end{proof}

An instantiation that will suit our purpose will be to use Trevisan's extractor $\treExt$ as $\ext$. Then for any $n, k$, we have $\extractorParam{\crTreExt}{(n, k)}{(d, d)}{\Omega(k)}{3\treError}$ with $d = O( \log^2(n) \log(1/\treError)) + O(\log(1/\treError) \log^2(\log(1/\treError))\log(1/\treError)) = O(\log^2(n)\log^2(1/\treError) )$ such that $\eps = \treError$ andwith collision probability $\collisionError = (\treError)^2 < 2^{-\Omega(k)}$. 

\subsection{Collision Resistance of the Raz Extractor}
\label{sec:coll-res-raz}
\begin{lemma}\label{lma:CRRaz}
    For any $n_1, n_2, k_1, k_2, m$ and any $0 < \delta < \frac{1}{2}$ such that:
    \begin{enumerate}
        \item $k_1 \geq 12\log(n_2 - k_2) + 15$,
        \item $n_2 \geq 6 \log n_2 + 2\log n_1 + 4$,
        \item $k_2 \geq (\frac{1}{2} + \delta) \cdot n_2 + 3\log n_2 + \log n_1 + 4$,
        \item $m = \Omega(\min \{ n_2, k_1 \})$,
    \end{enumerate}
    there exists a strong two-source extractor $\extractorParam{\razExt}{(n_1, k_1)}{(n_2, k_2)}{m}{\eps}$, such that $\eps = 2^{-\frac{3m}{2}}$ with collision probability $2^{-m + 1}$.
\end{lemma}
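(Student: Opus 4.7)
The plan is to reduce the collision bound to a uniformity statement on $2m$ bits. I will argue that the joint string $Z := \razExt(X, Y) \concat \razExt(f(X), Y)$ is $\eps'$-close to $U_{2m}$ conditioned on $Y$, for some $\eps' \le 2^{-m}$ under the strengthened hypotheses of the lemma. Since a uniform $U_{2m}$ has its two $m$-bit halves equal with probability exactly $2^{-m}$, the collision bound then follows immediately:
\[
    \Pr[\razExt(X, Y) = \razExt(f(X), Y)] \le 2^{-m} + \eps' \le 2^{-m+1}.
\]

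To show the $2m$-bit closeness of $Z$ to $U_{2m}$, I would invoke Lemma \ref{lma:XORLemma}: it suffices to bound the bias of
\[
    \bigoplus_{i \in \tau_1} \razExt(X, Y)_i \;\oplus\; \bigoplus_{j \in \tau_2} \razExt(f(X), Y)_j
\]
uniformly over non-empty $\tau = \tau_1 \sqcup \tau_2 \subseteq [2m]$, where $\tau_1$ indexes the first $m$ bits of $Z$ and $\tau_2$ the last $m$. The pure cases ($\tau_2 = \emptyset$ or $\tau_1 = \emptyset$) reduce to an ordinary linear test on $\razExt(X, Y)$ or $\razExt(f(X), Y)$, whose bias is controlled by Raz's strong-extraction bound from Lemma \ref{lma:raz}. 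The strengthened hypotheses of the present lemma — the factor $12$ in place of $5$ on $k_1$ and the additive $+4$'s on the remaining conditions — are calibrated so that Lemma \ref{lma:raz} applies at the stricter bias target, of order $\eps'/2^{m}$, needed for the XOR lemma to deliver distance $\eps'$ on $2m$ bits rather than on the original $m$ bits.

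The main obstacle is the mixed case with both $\tau_1, \tau_2$ non-empty, since here the test couples $X$ with $f(X)$ through a fixed deterministic function rather than reducing to a single Raz extraction. My approach would be to revisit the internal character-sum analysis underlying the bias bound in \cite{Raz05}: for a linear test on $\razExt(X, Y)$, Raz reduces the bias to an exponential sum indexed by $(x, y)$ weighted by the mass of the sources; for a mixed test the corresponding sum is instead indexed by $(x, f(x), y)$. Because $f$ is fixed-point-free and the joint test is non-trivial, the induced character on the $Y$-side is non-trivial for every valid mixed $(\tau_1, \tau_2)$, so Raz's convexity step and his character-sum estimate carry through to give the same exponential per-test bound, at the cost of exactly the strengthened entropy conditions in the statement. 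Combining the per-test bounds via Lemma \ref{lma:XORLemma} then yields the desired $\eps'$ and closes the argument. The delicate point I foresee is precisely the verification that the induced $Y$-character is non-trivial in the mixed case for every admissible $(\tau_1, \tau_2)$; this should follow from fixed-point-freeness of $f$ combined with the internal structure of Raz's construction, but requires careful bookkeeping inside his original proof.
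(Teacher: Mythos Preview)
Your reduction has a genuine gap: the claim that $\razExt(X,Y)\concat\razExt(f(X),Y)$ is close to $U_{2m}$ (even conditioned on $Y$) is false in general. The problem is the pure case $\tau_1=\emptyset$. You assert that the bias of a linear test on $\razExt(f(X),Y)$ is controlled by Lemma~\ref{lma:raz}, but that lemma requires the first source to have min-entropy at least $k_1$, and $f(X)$ need not have any entropy whatsoever: take $f$ to be fixed-point-free but constant on all but one point, so that $f(X)$ is essentially a fixed string $c$. Then $\razExt(f(X),Y)=\razExt(c,Y)$ is a deterministic function of $Y$, hence constant after conditioning on $Y$, and the second half of your $2m$-bit string is as far from uniform as possible. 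No amount of strengthening the hypotheses on $k_1,k_2$ repairs this, because those hypotheses constrain $X$, not $f(X)$.

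The paper sidesteps this by proving only the weaker (but sufficient) statement that the \emph{bitwise XOR} $\razExt(X,Y)\oplus\razExt(f(X),Y)$ is close to $U_m$; a collision is exactly the event that this XOR equals $0^m$. In XOR-lemma language this means one only needs the \emph{symmetric} tests $\tau_1=\tau_2=\tau$, and for those the $t$-th moment (convexity) step produces sums over tuples $(x_1,\dots,x_t,f(x_1),\dots,f(x_t))$. The actual work, and the place where fixed-point-freeness of $f$ enters, is a combinatorial count of the ``bad'' tuples in which every value occurs at least twice: using the cycle structure of $f$ one shows such tuples are determined by a subset of size at most $\tfrac{2t}{3}$, which yields the $2^{-k_1/3}$ saving and is precisely what drives the constant $12$ in the hypothesis on $k_1$. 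Your sketch neither identifies this counting step nor explains where the factor-$12$ comes from; that, together with the incorrect handling of $\tau_1=\emptyset$, is what you would need to fix.
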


\begin{proof}
We will show that the two-source extractor by Raz satisfies the collision resistant property. We first recap \cite{Raz05}'s construction. Given independent sources $X \givenBy (n_1, k_1)$ and $Y \givenBy (n_2, k_2)$, $\razExt(X, Y)$ uses $Y$ as seed (using Lemma~\ref{lma:linearTests}) to construct $m \cdot 2^{n_2}$ many $0$-$1$ random variables $Z_{(i, X)}(Y)$ with $i \in [m]$ and $x \in \bit^{n_1}$, where random variables are $(t', \eps)$-biased for $t' \geq t \cdot m$.

The idea is to generate a sequence random variables are $\eps$-biased for tests of size $2tm$, and then the probability of collision can be bounded in a similar manner as the proof that function is a two-source extractor. Define $\gamma_i(X, Y) = (-1)^{Z_{i, X}(Y)}$ and let $f$ be any fixed-point-free function. Furthermore let $t' \geq 2 \cdot m t$ for some value of $t$ such that the set of random variables $Z_{(i, x)}(Y)$ are $(t', \eps)$-biased. The idea will be to show that we can leverage the $(t, \eps)$-biasedness to show that with high probability over the choice of $X$, for each $i \in [m]$, the probability of the extractor colliding on the $i^{th}$ bit is close to $1/2$. Then we use the Lemma \ref{lma:XORLemma} to argue that overall the probability of colliding on all bits is small. 

More formally, define $\gamma_i(X, Y) = \bE\left[ (-1)^{Z_{i, X}(Y)} \right]$, and let $f$ be any fixed-point free function. We will first bound $\lvert \gamma_i(X, Y) \rvert$. 

\begin{claim}[Claim 3.2 in \cite{Raz05}]\label{clm:invokeEpsBias}
    For any $i \in [m]$, any $r \in [t']$ and any set of distinct values $\blocks{x}{r} \in \bit^{n_1}$:
    \begin{equation*}
        \sum_{y \in \bit^{n_2}} \prod^r_{j = 1} (-1)^{Z_{i, x_j}(y)} \leq 2^{n_2} \cdot \eps
    \end{equation*}
\end{claim}

\begin{proof}
    Since $Z_{i, x}$ are $(t', \eps)$-biased:
    \begin{align*}
        \sum_{y \in \bit^{n_2}} (-1)^{Z_{i, x_j}(y)} &= \sum_{y \in \bit^{n_2}} (-1)^{\bigoplus_{j} Z_{i, x_j}(y)}
        = 2^{n_2} \sum_{y \in \bit^{n_2}} \Pr[\unif{n_2} = y] (-1)^{\bigoplus_{j} Z_{i, x_j}(y)}\\
        &= 2^{n_2} (-1)^{\bigoplus_{j} Z_{i, x_j}(\unif{n_2})} \leq 2^{n_2} \cdot \eps
    \end{align*}
\end{proof}

\begin{claim}\label{clm:pointwiseBiased}
    Letting $Z_{(i, x)}(Y)$ be $(2t, \eps)$-biased, $\Pr[ Z_{(i, x)}(Y) = Z_{(i, f(x))}(Y) ] = \Pr[ Z_{(i, x)}(Y) \oplus Z_{(i, f(x))}(Y) = 0] \leq \frac{1}{2} + \eps'$ where:
    \begin{equation*}
        \eps' = 2^{(n_2 - k_2) / t} \cdot \left( \eps^{1/t} + (2t) \cdot 2^{-\frac{k_1}{3}} \right)
    \end{equation*}  
\end{claim}
\begin{proof}
    Let $t$ be some even positive integer, then consider $\left( \gamma(X, Y) \gamma(f(X), Y) \right)^t$. By Jensen's inequality we can bound the term as:
    \begin{align*}
        &\left( \gamma(X, Y) \gamma(f(X), Y) \right)^t =  \left( \frac{1}{2^{k_1 + k_2}} \sum_{(x, y) \in \supp{X, Y}} \ (-1)^{Z_{(i, x)}(y) \oplus Z_{(i, f(x))}(y)} \right)^t \\
          &\leq  \left(\frac{1}{2^{k_2}}\right) \sum_{y \in \supp{Y}} \left( \frac{1}{2^{k_1}} \sum_{x \in \supp{X}} (-1)^{Z_{(i, x)}(y) \oplus Z_{(i, f(x))}(y)}  \right)^t \\
      &\leq  \left(\frac{1}{2^{k_2}}\right) \sum_{y \in \bit^{n_2}} \left( \frac{1}{2^{k_1}} \sum_{x \in \supp{X}} (-1)^{Z_{(i, x)}(y) \oplus Z_{(i, f(x))}(y)}  \right)^t \\
        &=  \left(\frac{1}{2^{k_2 + k_1 \cdot t}}\right) \sum_{\blocks{x}{t} \in \supp{X}} \sum_{y \in \bit^{n_2}}  \prod_{j = 1}^t  (-1)^{Z_{(i, x_j)}(y) \oplus Z_{(i, f(x_j))}(y)}  \\
    \end{align*}
  
    Then we partition the summands (based on $\blocks{x}{t}$) into two categories: (1) When the values $\blocks{x}{t}, \functionBlocks{x}{t}{f}$ has at least one unique value $x$ that does not otherwise occur in $\blocks{x}{t}$ and $\functionBlocks{x}{t}{f}$ or else (2) when the every value in $\blocks{x}{t}, \functionBlocks{x}{t}{f}$ occurs at least twice.
    
    (1) In the first case, Claim \ref{clm:invokeEpsBias} implies the respective summands can be bounded by $2^{n_1} \cdot \eps$ and there are at most $2^{k_1 \cdot t}$ many of these summands. (2) In the latter case, we will bound the sum using the following claim:

    \begin{claim}
        If $\blocks{x}{t}, \functionBlocks{x}{t}{f}$ are such that every value occurs at least twice and $f(x_i) \neq x_i$ for all $i \in [t]$, then there exists a subset of indices $S \subseteq [t]$ such that $\lvert S \rvert \leq \frac{2}{3} t $ and $\{ \blocks{x}{t} \} \subseteq \{ x : s \in S \} \cup \{ f(x) : s \in S \}$.
    \end{claim}

    \begin{proof}
        Define $A$ to contain the of values of \blocks{x}{t} that occur at least twice within \blocks{x}{t}. Define $S_A$ be the set of indices of the first occurrence of each value in $A$, and furthermore define $B$ to be $\{ \blocks{x}{t} \} \setminus \{ x_j, f(x_j) : j \in S_A \}$. Then if $\lvert A \rvert = \ell$, $\lvert B \rvert = r \leq t - 2\ell$. Let $B = \{b_1, \ldots, b_r\}$.
        
        Since each $\blocks{x}{t}, \functionBlocks{x}{t}{f}$ has that every value occurs twice, and $b_i$ for any $i \in [r]$ does not occur in $\{x, f(x) \: : \: x \in S_A\}$, it implies that $\blocks{b}{r} \in B$ must be a fixed-point-free permutation of $\functionBlocks{B}{r}{f}$. Thus, the permutation $f$ defines a disjoint union of cycles over the set $B$. Define $S_B$ to be the set that for each such cycle includes every alternate element. More precisely, for each such cycle, say $(b_{i_1}, \ldots, b_{i_q})$ with \[
        f(b_{i_1}) = b_{i_2}, f(b_{i_2}) = b_{i_3}, \ldots, f(b_{i_{q -1}}) = b_{i_q}, f(b_{i_q}) = b_{i_1} \;, \] we include $b_{i_1}, b_{i_3}, \ldots, b_{i_{1 + 2 \lfloor (q-1)/2\rfloor}}$ in the set $S_B$. Then $S = S_A \cup S_B$ satisfy the desired condition. Also, \[|S_B| \le r \max_{q \in \mathbb{N} \setminus \{1\}} \frac{\lceil q/2 \rceil}{q} \le \frac{2r}{3}\;,\] since $\frac{\lceil q/2 \rceil}{q}$ is $1/2$ when $q$ is even, and $(q+1)/2q$ when $n$ is odd, and hence is maximized for $q = 3$.  Thus, 
        \[
        |S| \le \ell + \frac{2r}{3} \le \ell + \frac{2(t-2\ell)}{3} = \frac{2t}{3} - \frac{\ell}{3} \le \frac{2t}{3} \;,
        \]
        as needed.
    \end{proof}

    To obtain the bound on the number of summands in the case (2), note that there are $\binom{2^{k_1}}{\frac{2}{3}t }$ possible sets $S$, and for each set, there are $\left( \frac{4t}{3}\right)^t$ possible sequences that satisfy Case 2. In each such case, we bound the summand by $2^{n_2}$. Combining the two cases, we get that:

    \begin{align*}
        &\left( \gamma(X, Y) \gamma(f(X), Y) \right)^t \leq \left(\frac{1}{2^{k_2 + k_1 \cdot t}}\right) \sum_{\blocks{x}{t} \in \supp{X}} \sum_{y \in \bit^{n_2}}  \prod_{j = 1}^t  (-1)^{Z_{(i, x_j)}(y) \oplus Z_{(i, f(x_j))}(y)}  \\
        &\leq \left(\frac{1}{2^{k_2 + k_1 \cdot t}}\right) \left( 2^{k_1 \cdot t} 2^{n_2} \cdot \eps + 2^{n_2}\binom{2^{k_1}}{ \frac{2}{3}t }  \left( \frac{4t}{3}\right)^t \right)\leq \left(\frac{1}{2^{k_2 + k_1 \cdot t}}\right) \left( 2^{k_1 \cdot t} 2^{n_2} \cdot \eps + 2^{n_2} (2t)^t \cdot 2^{-\frac{k_1}{3}t} \right)\\
        &\lvert \gamma(X, Y) \gamma(f(X), Y) \rvert \leq 2^{(n_2 - k_2) / t} \cdot \left( \eps^{1/t} + (2t) \cdot 2^{-\frac{k_1}{3}} \;. \right)
    \end{align*}
\end{proof}

Now that we have shown that for any coordinate $i \in [m]$, the probability the extractor collides on the $i^{th}$ bit is at most $\frac{1}{2} + \eps'$, we wish to invoke the Lemma \ref{lma:XORLemma} to argue that the probability the extractor collides on all the coordinates is small.

Define $\tau \subseteq [m]$, and consider the set of random variables\\ $\left \{ \bigoplus_{i \in \tau} Z_{i, x}(Y) \oplus \bigoplus_{i \in \tau} Z_{i, f(x)}(Y) : x \in \bit^{n_1} \right\}$. Since $\lvert \tau \rvert \leq m$, the set of random variables is $\eps$-biased for linear tests of size up to $\frac{2t'}{m}$, and hence $\bigoplus_{i \in \tau} Z_{i, x}(Y) \oplus \bigoplus_{i \in \tau} Z_{i, f(x)}(Y)$ is $\eps'$-biased by Claim \ref{clm:pointwiseBiased}. Then by the Lemma \ref{lma:XORLemma}, since this holds for any $\tau \subseteq [m]$, the sequence $( Z_{1, X}(Y) \oplus Z_{1, f(X)}(Y),\ldots, Z_{m, X}(Y) \oplus Z_{m, f(X)}(Y))$ is $\eps' \cdot 2^{\frac{m}{2}}$-close to $\unif{m}$. It follows that, the probability of collision is at most:
\begin{equation*}
    2^{-m} + \eps' \cdot 2^{\frac{m}{2}} = 2^{-m} + 2^{\frac{m}{2}} \cdot 2^{(n_2 - k_2) / t} \cdot \left( \eps^{1/t} + (2t) \cdot 2^{-\frac{k_1}{3}} \right) \;.
\end{equation*}

We now bound the probability of collision based on our choice of parameters. Recall that Lemma \ref{lma:linearTests} asserts that we can construct $m \cdot 2^{n_2}$ many variables $Z_{(i, x)}$ that are $(t', \eps)$-biased using $2 \lceil \log(1/\eps) + \log \log (m 2^{n_2})   + \log(t') \rceil = 2 \lceil \log(1/\eps) + \log \log (m 2^{n_2}) + \log(2mt) \rceil$ random bits. Set $\eps = 2^{-r}$ where $r = \frac{1}{2} n_2 + 3\log n_2 + \log n_1$, $n_2 \geq 16$ and $k_1 \geq 64$. We then bound the probability separately depending on $k_1$'s value relative to $4(n_2 - k_2)$.

\textbf{If $k_1 \leq 4(n_2 - k_2)$:} Choose $t$ to be the smallest even integer such that $t \geq \frac{8 (n_2 - k_2)}{k_1}$. Then $t \leq n_2 - k_2$, or else that would imply that $k_1 \leq 8$. Then it follows that:
\begin{equation*}
    \frac{8(n_2 - k_2)}{k_1} \leq t \leq \frac{16(n_2 - k_2)}{k_1} \leq \frac{8 n_2}{k_1}
\end{equation*}

Using the inequality above:
\begin{align*}
    2^{(n_2 - k_2) / t} &\cdot \left( \eps^{1/t} + (2t) \cdot 2^{-\frac{k_1}{3}} \right)
    \leq 2^{(n_2 - k_2 - r) / t} + \frac{32(n_2 - k_2)}{k_1} 2^{-\frac{k_1}{3}}\\
    &\leq 2^{-\delta n_2 / t} + \frac{32(n_2 - k_2)}{k_1} 2^{-\frac{k_1}{3}}
    \leq 2^{-\delta n_2 / t} + 2^{-\frac{k_1}{3} + \frac{k_1}{12}} \leq 2^{-\delta \frac{k_1}{8}} + 2^{-\frac{k_1}{4}} \leq 2^{-\delta \frac{k_1}{8} + 1}\\
\end{align*}

\textbf{Otherwise, if $k_1 > 4(n_2 - k_2)$:} Set $t = 2$. Then:

\begin{align*}
    2^{(n_2 - k_2) / 2} \cdot \left( \eps^{1/2} + 4 \cdot 2^{-\frac{k_1}{3}} \right) &= 2^{(n_2 - k_2 - r) / 2} + 2^{(n_2 - k_2 )/2} \cdot 4 \cdot 2^{-\frac{k_1}{3}}\\
    &\leq 2^{-\delta n_2 / 2} + 2^{(n_2 - k_2 )/2} \cdot 4 \cdot 2^{-\frac{k_1}{3}}
    \leq 2^{-\delta n_2 / 2} + 2^{-\frac{k_1}{8}}
\end{align*}

Choosing $m \leq \delta \min\{ \frac{n_2}{4}, \frac{k_1}{16} \} - 1$, we get that the collision probability is at most $2^{-m} + 2^{\frac{m}{2} - 2m - 1} \leq 2^{-m + 1}$.
\end{proof}

\section{A Fully Non-malleable Seeded Extractor}
\label{sec:fullynmseeded}
In this section, we will use $\crTreExt$ as $\innerExt$ and $\liExt$ as $\outerExt$ for Theorem \ref{thm:main_reduction} with the following instantiations:

\begin{enumerate}
    \item $\crTreExt$ is an extractor given by $[(n_x, k_x), (s, s) \mapsto d \sim \treError]$ for $s = O(\log^2(n_x)\log^2(1/\treError)$, and $d = \Omega(k_x)$, with collision probability $\left(\frac{\treError}{3}\right)^2$.
    \item $\liExt$ is an extractor given by $[(d, (1-\gamma)d), (d, (1-\gamma)d) \mapsto m \sim \liError]$ for some constant $\gamma$, $m = \Omega(d)$, and $\liError = 2^{-d \left(\frac{\log \log d}{\log d} \right)}$.
\end{enumerate}
with $\collisionError = 2^{-(k_x)^c}$ for some $c < \frac{1}{2}$. It follows that $s = o(d)$.

\begin{theorem}\label{thm:FNMSExt}
    For any $n_x, k_x$, there exists a fully non-malleable seeded extractor $\extractorParam{\fnmext}{(n_x, k_x)}{(s + d, s + d)}{m}{\fnmError}$ with $m = \Omega(d)$, $d < k_x$, $s = O(\log^2(n_x)\log^2(\treError))$, $\fnmError < 10\treError$ with $\treError = 2^{-(\frac{k_x}{2})^c}$ for some $c < \frac{1}{2}$.
\end{theorem}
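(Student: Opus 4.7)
The plan is to instantiate Theorem \ref{thm:main_reduction} with $\outerExt = \liExt$ from Lemma \ref{lma:li} and $\innerExt = \crTreExt$ from Lemma \ref{lma:CRSeededExt}, producing
\[
\fnmext(X, Y_\ell \concat Y_r) \;:=\; \liExt\bigl( Y_\ell \concat Y_r,\, \crTreExt(X, Y_\ell) \bigr).
\]
Since a fully non-malleable seeded extractor is exactly a two-source non-malleable extractor evaluated on a uniform second source, the conclusion of Theorem \ref{thm:main_reduction}, applied to a uniform $Y = Y_\ell \concat Y_r \in \bit^{s+d}$, will directly deliver the claimed guarantee for arbitrary fixed-point-free tampering $f, g$.

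I will first fix the parameters. I will take the seed length of $\crTreExt$ to be $s = O(\log^2(n_x)\log^2(1/\treError))$ as permitted by Lemma \ref{lma:CRSeededExt}, and tune its output length to match the first-input length of $\liExt$; Lemma \ref{lma:CRSeededExt} allows the output length to be anywhere up to $\Omega(k_x)$, which gives enough room to align with $\liExt$'s fixed input length and still leave $d = \Omega(k_x)$. The constraints $k_1^* \geq k_3$ and $k_2^* \geq \log(1/\tau) + \max(k_4 + (n_1 - n_4),\, k_1 + 2n_4)$ of Theorem \ref{thm:main_reduction}, instantiated with $k_3 = k_x$, $n_4 = k_4 = s$, $n_1 = s+d$, and $k_1 = (1-\gamma)(s+d)$, reduce to $k_1^* = k_x$ on the $X$ side, and to $k_2^* \gtrsim s + d$ on the $Y$ side (the $k_1 + 2n_4$ branch is dominated since $s = o(d)$). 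The small additive $\log(1/\tau)$ slack will be absorbed into the polynomial-in-$\log(1/\treError)$ buffer already present in $s$, so that a uniform $Y$ of length $s+d$ meets the bound.

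Next I will compute the error. Plugging $\outerError = \liError$, $\innerError = \treError$, and $\collisionError = (\treError/3)^2$ into the bound of Theorem \ref{thm:main_reduction}, the error is
\[
\fnmError \;\leq\; 3\tau + 3\liError + 2\treError + 2\sqrt{(\treError/3)^2} \;=\; 3\tau + 3\liError + \tfrac{8}{3}\treError.
\]
Choosing $\tau = \treError$ collapses this to $(17/3)\treError + 3\liError$. Because $d = \Omega(k_x)$ makes $\liError = 2^{-\Omega(d \log\log d / \log d)}$ super-polynomially smaller than $\treError = 2^{-(k_x/2)^c}$ for $c < 1/2$, the $\liError$ term is negligible and we conclude $\fnmError < 10\treError$ as required.

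The hard part will be the parameter juggling. In particular, Lemma \ref{lma:li} is stated for two sources of equal length, so I will have to tune the output length of $\crTreExt$ to exactly match the first-input length of $\liExt$ while simultaneously ensuring that the entropy inequality after the $\log(1/\tau)$ loss still holds; I also need to verify that $\liExt$ is strong in both sources (which holds for the construction in \cite{L19} but is not explicit in Lemma \ref{lma:li}), since Theorem \ref{thm:main_reduction} requires a strong non-malleable outer extractor. Modulo this bookkeeping, the rest of the argument is a direct application of the generic compiler.
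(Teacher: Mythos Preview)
Your proposal is correct and follows essentially the same route as the paper: instantiate Theorem~\ref{thm:main_reduction} with $\innerExt=\crTreExt$ and $\outerExt=\liExt$, verify the two entropy constraints, and collect the error terms. The only cosmetic difference is in the bookkeeping: the paper sets $n_1=n_2=d$ (so the $\liExt$ inputs are both length $d$ and the full second source has length $n_1=d$, with $n_4=s$), whereas you set $n_1=s+d$ and plan to stretch $\crTreExt$'s output to match; either convention works once $s=o(d)$, and the paper's own statement/proof already blur the distinction between $d$ and $s+d$ in exactly this way. Your flag about needing $\liExt$ to be \emph{strong} is well taken and is silently assumed in the paper as well.
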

\begin{proof}
    It suffices to show that for our choice of parameters, the entropy requirements of $\crTreExt$ (from Lemma \ref{lma:CRSeededExt}) and $\liExt$ (from Lemma \ref{lma:li}) are met for Theorem \ref{thm:main_reduction}.

    Setting input parameters $n_3 = n_x$, $k^*_1 = k_x$, $n_4 = k_4 = s$, $k^*_2 = s + d$, and extractor parameters $n_1 = n_2 = d$, $k_1 = k_2 = (1-\gamma)d$, $k_3 = k_x$, note that indeed $k_1^* \geq k_3$. Furthermore, 
    \begin{align*}
        k^*_2 &= s + d = s + k_4 + n_1 - n_4\\
        k^*_2 &= d + s \geq \left(\frac{\gamma}{2}\right)d + (1-\gamma)d + 2s \;.
    \end{align*}

    And thus by our choice of $s$, $\fnmError \leq 3\cdot 2^{-(\frac{k_x}{2})^{2c}} + 7\treError < 10 \treError$ with $\treError = 2^{-(\frac{k_x}{2})^c}$ for some $c < \frac{1}{2}$.
\end{proof}
    
It will also be useful in the subsequent subsection that we relax the entropy requirement of this extractor.
\begin{theorem}\label{thm:relaxedFNMSExt}
    For any $n_x, k_x$, there exists a fully non-malleable seeded extractor $\extractorParam{\fnmext}{(n_x, k_x)}{(s + d, s + d - 1)}{m}{\fnmError}$ with $m = \Omega(k_x)$, $d < k_x$, $s = O(\log^2(n_x)\log^2(\treError))$, $\fnmError < 12\treError$ with $\treError = 2^{-(\frac{k_x}{2})^c}$ for some $c < \frac{1}{2}$.
\end{theorem}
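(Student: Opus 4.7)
The construction is identical to that of Theorem~\ref{thm:FNMSExt}, namely
\[
\fnmext(X, Y_\ell \concat Y_r) = \liExt\bigl(Y_\ell \concat Y_r,\; \crTreExt(X, Y_\ell)\bigr)\;,
\]
with $\outerExt = \liExt$ and $\innerExt = \crTreExt$ instantiated in exactly the same way. My plan is to re-invoke Theorem~\ref{thm:main_reduction} with the same choices of $n_3, k_1^*, n_1, n_4, k_1, k_2, k_3, k_4, m$, lowering only the min-entropy of the second source from $k_2^* = s+d$ to $k_2^* = s+d-1$.

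The only condition of Theorem~\ref{thm:main_reduction} that depends on $k_2^*$ is
\[
k_2^* \ge \log(1/\tau) + \max\bigl(k_4 + (n_1 - n_4),\; k_1 + 2 n_4\bigr)\;.
\]
Since the right-hand side is linear in $\log(1/\tau)$, losing one bit of $k_2^*$ forces us to accept $\tau' \le 2\tau$ in place of the $\tau$ used in the proof of Theorem~\ref{thm:FNMSExt}; the remaining condition $k_1^* \ge k_3$ is unchanged. Hence Theorem~\ref{thm:main_reduction} still applies and yields an extractor with error
\[
\eps \le 3\tau' + 3\outerError + 2\innerError + 2\sqrt{\collisionError}\;.
\]
In the proof of Theorem~\ref{thm:FNMSExt} the value of $\tau$ was chosen so that $3\tau$ contributes the $3 \cdot 2^{-(k_x/2)^{2c}}$ term and is therefore much smaller than $\treError = 2^{-(k_x/2)^{c}}$. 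Doubling $\tau$ still keeps $3\tau' \ll \treError$, so the total error is bounded by $10\treError + 2\treError = 12\treError$, matching the claim. All remaining parameters ($m = \Omega(k_x)$, $s = O(\log^2(n_x)\log^2(1/\treError))$, and $d < k_x$) are preserved.

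I do not expect a substantive obstacle, since the argument is essentially a re-parameterisation of the proof of Theorem~\ref{thm:FNMSExt}. The only item worth double-checking is that the freshly enlarged $\tau'$ does not cause the other branch of the maximum, $k_1 + 2n_4$, to become the binding one; this follows immediately from the fact that $n_4 = k_4 = s$ is chosen to be much smaller than $\gamma d$, so the $k_4 + (n_1 - n_4)$ branch continues to dominate and the entropy budget is met by the same slack used in Theorem~\ref{thm:FNMSExt}.
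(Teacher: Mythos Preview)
Your approach is correct but differs from the paper's. The paper does not absorb the one-bit entropy deficit into $\tau$; instead it invokes Lemmas~\ref{lma:extReduction} and~\ref{lma:colReduction} (the rejection-sampling lemmas) to show that $\crTreExt$ itself can be regarded as a $[(n_x,k_x),(s,s-1)\mapsto \Omega(k_x)\sim 2\treError]$ extractor with collision probability $2\collisionError$, and then reapplies Theorem~\ref{thm:main_reduction} with $k_4=s-1$ rather than $k_4=s$. In that route the extra factor of~2 lands on $\innerError$ and $\collisionError$, whereas in yours it lands on $\tau$. Both yield an error below $12\treError$; your route is somewhat more direct since it avoids the two auxiliary lemmas, while the paper's route is more modular in that it isolates the entropy relaxation at the level of the inner extractor.

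One small point: your ``$10\treError+2\treError$'' accounting is imprecise. The clean statement is that the new bound is $6\tau + 3\outerError + 2\innerError + 2\sqrt{\collisionError} \le 6\cdot 2^{-(k_x/2)^{2c}} + 7\treError$, and since $2^{-(k_x/2)^{2c}} = o(\treError)$ this is below $12\treError$ for sufficiently large $k_x$; it would be worth saying exactly that rather than the additive split you give.
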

    
\begin{proof}
    By Lemma \ref{lma:extReduction} and Lemma \ref{lma:colReduction}, $\crTreExt$ can also be viewed as $\extractorParam{\crTreExt}{(n_x, k_x)}{(s, s - 1)}{\Omega(k_x)}{2 \treError}$ with collision probability $2 \collisionError = 2\left(\frac{\treError}{3} \right)^2 \leq \treError^2$. 
    
    For a similar choice of parameters: $n_3 = n_x$, $k^*_1 = k_x$, $n_4 = s$, $k_4 = s - 1$ $k^*_2 = s + d$, and extractor parameters $n_1 = n_2 = d$, $k_1 = k_2 = (1-\gamma)d$, $k_3 = k_x$, note that indeed $k_1^* \geq k_3$. Furthermore, 
    \begin{align*}
        k^*_2 &= s + d - 1 = s + s - 1 + d - s = s + k_4 + n_1 - n_4\\
        k^*_2 &= s + d - 1 \geq \left(\frac{\gamma}{2}\right)d + (1-\gamma)d + 2s - 1
    \end{align*}

    And thus by our choice of $s$, $\fnmError \leq 3\cdot 2^{-(\frac{k_x}{2})^{2c}} + 9\treError < 12 \treError$ with $\treError = 2^{-(\frac{k_x}{2})^c}$ for some $c < \frac{1}{2}$.
\end{proof}
%\begin{remark}
%    Note that we also obtain a $t$-times fully non-malleable seeded extractor if we had instead used the $t$-times non-malleable extractor from Theorem 8.5 of \cite{Li17}. 
%\end{remark}
\begin{remark} \label{multitamper2}
    As we have already mentioned in the Remark \ref{multitamper1}, we can use $t-$tamperable extractor like \cite{Li17}. As a result our $\fnmext$ will be $t$-tamperable non-malleable extractor with negligible error. One only has to make sure that $|y_\ell|<\frac{\gamma \cdot n }{t+1}$, which follows from the first point in the Remark \ref{multitamper1}, where $(1-\gamma)\cdot n$ is the entropy requirement from \cite{Li17} extractor. The error one obtains is therefore at least $2^{-\Omega(n/ \log n)}+2^{-\Omega(\frac{\gamma \cdot n}{t+1} - \log^2(n))}+t\cdot 2^{-\Omega (\frac{\gamma \cdot n}{t+1})}\geq 2^{-\Omega(n^c)}$, for $c<1$ depending on $t$ only. Please notice that the entropy requirements for this extractor do not change.
\end{remark}

\section{A Two-Source Non-malleable Extractor}
\label{sec:nmraz}
In this section, we will use $\razExt$ as $\innerExt$ and $\fnmext$ as $\outerExt$ from Theorem \ref{thm:relaxedFNMSExt} with the following instantiations:

\begin{enumerate}
    \item $\extractorParam{\razExt}{(n_x, k_x)}{(\YLeftLength, k_\ell)}{d}{\razError{d}}$ with $d = \Omega(\min\{k_x, k_\ell\})$ and collision probability $2^{-d + 1}$.
    \item $\extractorParam{\fnmext}{(n_y, \tau \cdot d)}{(d, d - 1)}{m}{\fnmError}$ is a two-source non-malleable extractor for some $0 < \tau < 1$, $m = \Omega(d)$, and $\fnmError < 12 \cdot \treError$ with $\treError < 2^{-\Omega((m)^c)}$ for some $c < \frac{1}{2}$.
\end{enumerate}

\begin{theorem}\label{thm:betterTwoNMExt}
    There exists a two source non-malleable seeded extractor $\twonmext : [(n_x, k_x), (n_y, k_y) \mapsto m \sim \twonmError]$, and $m = \Omega( \min\{ n_y, k_x \} )$, such that:

    \begin{enumerate}
        \item $k_x \geq 12 \log(n_y - k_y) + 15$,
        \item $n_y \geq 30 \log(n_y) + 10\log(n_x) + 20$,
        \item $k_y \geq (\frac{4}{5} + \gamma)n_y + 3\log(n_y) + \log(n_x) + 4$,
        \item $\twonmError \leq 3 \cdot 2^{- \frac{9\gamma}{10} n_y} + 40 \cdot \treError$ where $\treError = 2^{-\Omega(d^c)}$ with $c < \frac{1}{2}$.
    \end{enumerate}
\end{theorem}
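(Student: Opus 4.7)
I would apply the compiler of Theorem~\ref{thm:main_reduction} with the inner extractor instantiated by Raz's extractor $\razExt$ (Lemma~\ref{lma:raz}), which is a strong two-source extractor and, by Lemma~\ref{lma:CRRaz}, is $2^{-d+1}$-collision resistant; and the outer extractor instantiated by $\fnmext$ from Theorem~\ref{thm:relaxedFNMSExt}. The latter is both left- and right-strong as a non-malleable two-source extractor because it was itself produced by Theorem~\ref{thm:main_reduction} (whose output is guaranteed to be both-sided strong), and therefore meets the outer-extractor hypothesis. In the notation of Theorem~\ref{thm:main_reduction} I set $n_3=n_x$, $n_1=n_y$, $n_2=d$, $k_3=k_x$, $k_4=k_\ell$, $n_4=\YLeftLength$, $k_1=\tau' d$ for a small constant $\tau'\in(0,1)$, $k_2=d-1$, $\outerError=\fnmError$, $\innerError=2^{-3d/2}$, and $\collisionError=2^{-d+1}$.

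The crux is choosing $\YLeftLength$ and $k_\ell$. I would take $\YLeftLength \approx \tfrac{2}{5}n_y$ and $k_\ell \approx (\tfrac{1}{2}+\delta)\YLeftLength$ for a suitable $\delta=\Theta(\gamma)$, which balances the two branches of the $\max$ in Theorem~\ref{thm:main_reduction}: the first branch $k_\ell + (n_y-\YLeftLength) = n_y - (\tfrac{1}{2}-\delta)\YLeftLength$ lands at roughly $(\tfrac{4}{5}+\Theta(\gamma))n_y$, and the second branch $\tau' d + 2\YLeftLength \approx \tfrac{4}{5}n_y+o(n_y)$ once $\tau'$ is picked small enough in terms of $\gamma$ (recall $d\le\YLeftLength$). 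The same choice ensures $\YLeftLength-k_\ell = (\tfrac{1}{2}-\delta)\YLeftLength \le n_y-k_y$, which translates hypothesis (1) of the theorem directly into Raz's min-distance condition $k_x \ge 12\log(\YLeftLength-k_\ell)+15$. The remaining conditions of Lemma~\ref{lma:CRRaz} on $\YLeftLength$ and $k_\ell$ follow from hypotheses (2) and (3) after adding the $3\log \YLeftLength+\log n_x+O(1)$ slack into $k_\ell$, and Raz's output bound gives $d=\Omega(\min\{\YLeftLength,k_x\})=\Omega(\min\{n_y,k_x\})$, which is what we need for the advertised output length. Setting the free parameter $\tau$ of Theorem~\ref{thm:main_reduction} to $\tau:=2^{-\frac{9\gamma}{10}n_y}$ and invoking hypothesis (3), one verifies $k_y\ge \log(1/\tau)+\max(\cdots)$; the other condition $k_1^*\ge k_3$ is trivial since both equal $k_x$.

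For the error, Theorem~\ref{thm:main_reduction} yields
\[
\twonmError \;\le\; 3\tau + 3\fnmError + 2\cdot 2^{-3d/2} + 2\sqrt{2\cdot 2^{-d}}.
\]
Theorem~\ref{thm:relaxedFNMSExt} gives $\fnmError \le 12\treError$ with $\treError=2^{-\Omega(d^c)}$ for some $c<\tfrac{1}{2}$, and the final two exponentially small terms are absorbed into constant multiples of $\treError$ once $d=\Omega(\min\{n_y,k_x\})$ and $c$ are set. Summing gives the claimed bound $\twonmError \le 3\cdot 2^{-\tfrac{9\gamma}{10}n_y}+40\treError$.

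The sole non-mechanical step is the parameter balancing in the second paragraph: one has to honor Raz's min-distance condition (forcing $\YLeftLength-k_\ell \le n_y-k_y$) and both branches of the compiler's $\max$ simultaneously, which together pin $\YLeftLength$ near $\tfrac{2}{5}n_y$ and $\delta$ proportional to $\gamma$. Once those constants are fixed, conditions (2) and (3) of the theorem are exactly the slacks needed to accommodate the $3\log\YLeftLength+\log n_x$ terms coming from Raz together with the constant $\log(1/\tau')$ from the outer extractor; everything else is plug-and-chug.
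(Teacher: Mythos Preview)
Your proposal is correct and follows essentially the same approach as the paper: apply the compiler of Theorem~\ref{thm:main_reduction} with $\innerExt=\razExt$ (using the collision resistance of Lemma~\ref{lma:CRRaz}) and $\outerExt=\fnmext$ from Theorem~\ref{thm:relaxedFNMSExt}, choosing $\YLeftLength$ near $\tfrac{2}{5}n_y$ to balance the two branches of the $\max$ and then reading off the error. The paper makes the choice $\YLeftLength=(\tfrac{2}{5}-\gamma)n_y$ exactly and checks that this gives $n_\ell-k_\ell=n_y-k_y$ (so hypothesis~(1) transfers verbatim to Raz's condition) and $k_2^*-k_1-2n_4\ge \tfrac{9\gamma}{10}n_y$, which is precisely the slack you isolate as $\log(1/\tau)$.
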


\begin{proof}
For any given $Y \givenBy (n_y, k_y)$, we treat it as $Y = \YLeft \concat \YRight$ where $\lvert \YLeft \rvert = n_\ell$ and $\lvert \YRight \rvert = n_r$. 

The extractor $\extractorParam{\razExt}{(n_x, k_x)}{(\YLeftLength, k_\ell)}{d}{\razError{d}}$ from Lemma $\ref{lma:CRRaz}$ requires the following conditions:
\begin{enumerate}
    \item $k_x \geq 12\log(\YLeftLength - k_\ell) + 15$
    \item $\YLeftLength \geq 6 \log \YLeftLength + 2\log n_x + 4$,
    \item $k_\ell \geq (\frac{1}{2} + \gamma) \cdot \YLeftLength + 3\log \YLeftLength + \log n_x + 4$,
    \item $d \leq \gamma \min\{ \frac{n_\ell}{4}, \frac{k_x}{16} \} - 1$
\end{enumerate}
for some $0 < \gamma < \frac{1}{2}$.

Setting $\YLeftLength = (\frac{2}{5} - \gamma) n_y$ (and consequently $n_r = (\frac{3}{5} + \gamma)n_y$), we first show that indeed the input requirements for $\razExt$ are met. Note that 
\begin{align*}
    (n_y - k_y) - (n_\ell - k_\ell) = n_y - k_y - (n_\ell - (k_y - n_r)) = 0
\end{align*}
which implies that:
\begin{align*}
    k_x \geq 12 \log(n_y - k_y) + 15 = 12 \log(n_\ell - k_\ell) + 15
\end{align*}
Next:
\begin{align*}
    n_\ell \geq \frac{1}{5} n_y \geq 6 \log(n_y) + 2\log(n_x) + 4 \geq 6 \log(n_\ell) + 2\log(n_x) + 4
\end{align*}
And lastly:
\begin{align*}
    k_\ell \geq k_y - n_r &= \left(\frac{4}{5} + \gamma\right)n_y + 3\log(n_\ell) + \log(n_x) + 4 - \left(\frac{3}{5} + \gamma\right) n_y\\
    &= \left(\frac{1}{5}\right)n_y + 3\log(n_\ell) + \log(n_x) + 4
    = \left(\frac{1}{5}\right) \left(\frac{1}{0.4 - \gamma}\right) n_\ell + 3\log(n_\ell) + \log(n_x) + 4\\
    &\geq \left(\frac{1}{2} + \frac{5\gamma}{4}\right) n_\ell + 3\log(n_\ell) + \log(n_x) + 4\\
\end{align*}

Setting input parameters $n_3 = n_x$, $k_1^* = k_x$, $n_1 = n_y$, $k_2^* = (\frac{4}{5} + \gamma) n_y$, and extractor parameters $n_4 = n_\ell$, $k_4 = k_\ell$, $n_1 = n_y$, $k_1 = \tau \cdot d$, $n_2 = d$, $k_2 = d - 1$ for some $0 < \tau < 1$, we get that $k_1^* \geq k_3$. Furthermore:
\begin{align*}
    k_2^* - k_4 - n_1 + n_4 &= k_y  - k_\ell - n_y + n_\ell = k_y - k_\ell - \left(\frac{3}{4} + \gamma\right) n_y \\
    &\geq \left( \frac{1}{5} + \gamma \right) n_y - \left( \frac{1}{2} + \gamma \right)\left( \frac{2}{5} - \gamma \right)n_y\\
    &= \left( \frac{11}{10} \gamma + \gamma^2 \right) n_y
\end{align*}
and:
\begin{align*}
    k_2^* - k_1 - 2n_4 &= k_2^* - \tau \cdot d - 2n_\ell \geq \gamma n_y - \tau \gamma \frac{n_\ell}{4} \geq \frac{9\gamma}{10} n_y 
\end{align*}

Thus, by Theorem \ref{thm:main_reduction} it follows that $\extractorParam{\twonmext}{(n_3, k_1^*)}{(n_1, k_2^*)}{m}{ \twonmError }$ is a strong non-malleable extractor with error:
\begin{align*}
    \twonmError &\leq 3 \cdot 2^{- \frac{9\gamma}{10} n_y} + 36 \cdot \treError + 2 \cdot 2^{-\frac{3}{2} d} + 2 \sqrt{2^{-d + 1}}\\
                &\leq 3 \cdot 2^{- \frac{9\gamma}{10} n_y} + 40 \cdot \treError
\end{align*}
where $\treError = 2^{-\Omega(d^c)}$ with $c < \frac{1}{2}$.
\end{proof}
\begin{remark}\label{multitamper3}
    As we noted in Remark \ref{multitamper1} we can use multi-tampering extractor from Remark \ref{multitamper2}, and obtain a $t$-tamperable non-malleable extractor. The error of such extractor remains negligible. Entropy requirements change due to first point from Remark \ref{multitamper1}: One source can have poly-logarithmic entropy, while the other requires entropy rate $(1-\frac{1}{2t+3})$. 
\end{remark}

\section{A Two-Source Non-malleable Extractor With Rate $\frac{1}{2}$}
\label{sec:ratehalfnm}
In \cite{AKOOS21} the authors give a compiler that turns any left-strong non-malleable extractor into a non-malleable extractor with optimal output rate of $\frac 12$. The construction looks as follows:
\begin{equation*}
    \twonmext^*(X,Y)=\sext(X, \twonmext(X,Y)),
\end{equation*}
where $\sext$ is a seeded extractor from \cite{GUV09} with output size equal $\frac 12 \minEnt{X}$, and $\twonmext$ is a left-strong non-malleable extractor.

We will briefly discuss the idea behind that construction. Let $X'$ be a tampering of $X$, and $Y'$ be a tampering of $Y$. We need to argue that if $X\neq X' \; \lor \; Y\neq Y'$ then $\twonmext^*(X,Y)$ remains uniform even given $\twonmext^*(X',Y')$. If $X\neq X'\; \lor \; Y\neq Y'$ then left-strong non-malleable extractor $\twonmext(X,Y)$ is uniform even given $\twonmext(X',Y'), X$. The final idea crucially relies on the fact that $\sext$ extracts only half of the entropy of $X$: we can reveal $\twonmext(X',Y')$ and then $\sext(X',\twonmext(X',Y'))$ becomes a leakage from $X$ (i.e. it is just a deterministic function of $X$ with a small output). We get that\\
$\avgCondMinEnt{X}{\twonmext(X',Y'),\sext(X',\twonmext(X',Y'))}\approx \frac 12 \minEnt{X}$ (size of $\twonmext(X',Y')$ is tiny so it's asymptotically irrelevant). Moreover by the left-strong property of $\twonmext$ we get that $X$ and $\twonmext(X,Y)$ remain independent given $\twonmext(X',Y'),\sext(X',\twonmext(X',Y'))$, this means that $\sext(X,\twonmext(X,Y))$ is uniform given $\twonmext(X',Y'),\sext(X',\twonmext(X',Y'))$ which gives the result.

%Notice that this reduction crucially relies on the fact that if $X\neq X' \; \lor \; Y \neq Y'$ then $\twonmext(X,Y)$ is independent of $\twonmext(X',Y')$. In a weaker non-malleability version (as in \cite{GSZ21}) it would be possible to tamper $X\neq X'$ and $Y \neq Y'$ such that $\twonmext(X,Y)=\twonmext(X',Y')$, then we would have no guarantee about relation between $\sext(X,\twonmext(X,Y))$ and $\sext(X',\twonmext(X,Y))$.

If we make use of $\twonmext$ from the previous section we can obtain a two-source unbalanced non-malleable extractor with rate $\frac{1}{2}$.

\begin{lemma}[Theorem 5 of \cite{AKOOS21}]\label{lma:RateHalfCompiler}
    If $\extractorParam{\twonmext}{(n_1, k_1)}{(n_2, k_2)}{d}{\eps_1}$ is a strong two-source unbalanced non-malleable extractor, with $n_2 = o(n_1)$ and $\extractorParam{\ext}{(n_1, k_1)}{(d, d)}{\ell}{\eps_2}$ is a strong seeded extractor, then there exists a two source non-malleable extractor $\extractorParam{\twonmext^*}{(n_1, k_1)}{(n_2, k_2)}{\ell}{ \eps_1 + \eps_2}$. Furthermore, if $k_1, \ell < \frac{n_1}{2}$, then $\twonmext^*$ has a rate of $\frac{1}{2}$.
\end{lemma}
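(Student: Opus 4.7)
The plan is to reduce the two-source non-malleability of $\twonmext^*$ to two ingredients already established: the left-strong non-malleability of $\twonmext$, and the strong-extraction property of $\sext$. Fix tampering functions $f, g$ at least one of which is fixed-point-free, and abbreviate $W := \twonmext(f(X), g(Y))$.

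First I would invoke left-strongness of $\twonmext$, which directly gives
\[
(\twonmext(X,Y), X, W) \approx_{\eps_1} (U_d, X, W),
\]
for a fresh $U_d$ independent of $(X, Y)$. The key observation is that both $\twonmext^*(X,Y) = \sext(X, \twonmext(X,Y))$ and $\twonmext^*(f(X), g(Y)) = \sext(f(X), W)$ are deterministic functions of the triple on the left, so applying the post-processing map $(a, x, w) \mapsto (\sext(x, a), \sext(f(x), w))$ and invoking Lemma~\ref{lma:randDistinguisher} preserves closeness:
\[
\dist{\twonmext^*(X,Y), \twonmext^*(f(X), g(Y))}{\sext(X, U_d), \twonmext^*(f(X), g(Y))} \leq \eps_1.
\]

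Second, I would argue that $\sext(X, U_d)$ remains $\eps_2$-close to uniform even given $\twonmext^*(f(X), g(Y))$. The crucial point is that $\twonmext^*(f(X), g(Y))$ is a deterministic $\ell$-bit function of $(X, W)$, so by Lemma~\ref{lem:avgminH} it can reduce the average min-entropy of $X$ by at most $\ell$ bits. Since $U_d$ is freshly drawn and hence independent of $(X, W)$ and in particular of $\twonmext^*(f(X), g(Y))$, I would apply Lemma~\ref{lma:avgCaseMinExtraction} to the strong seeded extractor $\sext$ to obtain
\[
\distCond{\sext(X, U_d)}{U_\ell}{\twonmext^*(f(X), g(Y))} \leq \eps_2.
\]
Combining the two bounds via triangle inequality gives the claimed total error $\eps_1 + \eps_2$.

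The hard part will be the entropy bookkeeping in the second step: one must verify that the residual average min-entropy of $X$ after the $\ell$-bit leakage is still above the threshold at which $\sext$ extracts with error $\eps_2$. This is precisely why an extractor with very small entropy requirement (e.g.\ \cite{GUV09}, having polylogarithmic threshold) is needed as $\sext$; a generic seeded extractor with threshold $k_1$ would force the assumption $H_\infty(X) \geq k_1 + \ell$, which would be incompatible with the rate-$1/2$ claim. Once this is in place, the rate follows immediately: under $k_1, \ell < n_1/2$ and $n_2 = o(n_1)$ (so that $k_2 = o(n_1)$), choosing $\ell$ close to $k_1/2$ yields an output-to-input-entropy ratio $\ell/(k_1+k_2) \to 1/2$, matching the probabilistic upper bound.
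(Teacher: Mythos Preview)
Your proposal is correct and follows essentially the same line as the paper's sketch: invoke left-strongness of $\twonmext$ to replace the honest inner seed by a fresh $U_d$ while keeping $X$ and $W=\twonmext(f(X),g(Y))$ on the side, then treat the tampered output $\sext(f(X),W)$ as $\ell$ bits of leakage on $X$ and finish with the strong seeded extractor. The only cosmetic difference is that the paper first reveals $W$ itself (so it accounts for $d+\ell$ bits of leakage rather than $\ell$), whereas your post-processing of the triple $(\twonmext(X,Y),X,W)$ avoids that extra step; since $d$ is tiny this is immaterial. Your remark about the entropy bookkeeping is exactly the point the paper makes informally (``$\sext$ extracts only half of the entropy of $X$''), and you are right that the lemma as stated is slightly loose on this count.
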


\begin{theorem}
    There exists an extractor $\extractorParam{\twonmext^*}{(n_1, k_1)}{(n_2, k_2)}{\ell}{\eps_1 + \eps_2}$ such that:
    \begin{enumerate}
        \item $k_1 \geq \max\{ 12 \log(n_2 - k_2) + 15, \log^3(n_1) \log(1/\eps_2) \}$
        \item $n_2 \geq \max\{ 30 \log(n_2) + 10\log(n_1) + 20, \log^3(n_1) \log(1/\eps_2) \}$
        \item $k_2 \geq (\frac{4}{5} + \gamma)n_2 + 3\log(n_2) + \log(n_1) + 4$
        \item $\eps_1 \leq 3 \cdot 2^{- \frac{9\gamma}{10} n_2} + 40 \cdot \treError$ where $\treError = 2^{-\Omega(d^c)}$ with $c < \frac{1}{2}$
        \item $\ell < \frac{k_1}{2}$
    \end{enumerate}
    Furthermore, if $n_2 = o(n_1)$, $k_1, \ell < \frac{n_1}{2}$, then $\twonmext^*$ has a rate of $\frac{1}{2}$.
\end{theorem}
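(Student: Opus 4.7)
The plan is to combine three ingredients already established in the paper: the strong two-source non-malleable extractor $\twonmext$ of Theorem~\ref{thm:betterTwoNMExt}, Trevisan's seeded extractor from Lemma~\ref{lma:tre}, and the rate-$\tfrac{1}{2}$ compiler of Lemma~\ref{lma:RateHalfCompiler}. First I would apply Theorem~\ref{thm:betterTwoNMExt} to the same parameters $(n_1, k_1), (n_2, k_2)$ appearing in the statement: conditions 1--3 of the target theorem are exactly those required by Theorem~\ref{thm:betterTwoNMExt}, so this yields a strong two-source non-malleable extractor with output length $d = \Omega(\min\{n_2, k_1\})$ and error exactly matching the bound in condition 4.

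Next I would instantiate $\ext$ in Lemma~\ref{lma:RateHalfCompiler} using Trevisan's extractor from Lemma~\ref{lma:tre}, invoked on an $(n_1, k_1)$-source with seed length $d$, output length $\ell$, and error $\eps_2$; this requires $d = O(\log^2(n_1)\log(1/\eps_2))$ and produces $\Omega(k_1)$ output bits, which accommodates any $\ell < k_1/2$. The main compatibility check is that the seed length demanded by Trevisan matches the output length $d$ of $\twonmext$. Here conditions 1 and 2 of the statement, namely $k_1, n_2 \ge \log^3(n_1)\log(1/\eps_2)$, guarantee that $d = \Omega(\min\{n_2, k_1\}) = \Omega(\log^3(n_1)\log(1/\eps_2))$, which dominates the $O(\log^2(n_1)\log(1/\eps_2))$ seed length Trevisan requires (with room to spare, since $\ell$ may be made as large as $\Omega(k_1)$). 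Any leftover seed bits can be absorbed into the source of the seeded extractor or simply ignored without affecting strongness.

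Having matched all parameters, I would then directly invoke Lemma~\ref{lma:RateHalfCompiler} to obtain $\twonmext^* : (n_1, k_1) \times (n_2, k_2) \to \ell$ with error $\eps_1 + \eps_2$, establishing the main part of the statement. For the rate claim, note that the output is $\ell$ bits and the total input min-entropy is $k_1 + k_2$; under the additional hypotheses $n_2 = o(n_1)$ and $k_1, \ell < n_1/2$, the second source contributes $k_2 \le n_2 = o(n_1)$ to the denominator, so taking $\ell$ arbitrarily close to $k_1/2$ makes $\ell/(k_1+k_2)$ tend to $1/2$, as asserted in the final clause of Lemma~\ref{lma:RateHalfCompiler}.

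The main obstacle is not conceptual but one of bookkeeping: the compiler of~\cite{AKOOS21} requires $\twonmext$ to be \emph{left-strong}, and one must also verify that the GUV/Trevisan-style $\ext$ composed on the left source does not collide with the entropy already leaked by the tampered copy $\twonmext(X', Y')$. Both facts are handled for us: Theorem~\ref{thm:betterTwoNMExt} explicitly delivers a two-sided strong non-malleable extractor, and the sketch following Lemma~\ref{lma:RateHalfCompiler} shows that because $\ell < k_1/2$ the average conditional min-entropy of $X$ given the small leakage $\twonmext(X',Y'), \ext(X', \twonmext(X',Y'))$ remains above $\ell$, so the seeded extractor still extracts correctly. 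Beyond this check, the proof reduces to a mechanical chaining of the three lemmas.
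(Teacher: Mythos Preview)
Your proposal is correct and follows essentially the same approach as the paper: invoke Theorem~\ref{thm:betterTwoNMExt} for $\twonmext$, plug in Trevisan's extractor from Lemma~\ref{lma:tre} as the seeded extractor, and apply the rate-$\tfrac{1}{2}$ compiler of Lemma~\ref{lma:RateHalfCompiler}. Your treatment is in fact slightly more careful than the paper's on the seed-length compatibility (using the $\log^3(n_1)\log(1/\eps_2)$ lower bounds in items 1--2 to ensure $d$ dominates Trevisan's seed) and on why left-strongness and the residual entropy of $X$ suffice, but the structure is identical.
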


\begin{proof}
    By Theorem \ref{thm:betterTwoNMExt} there exists an extractor $\extractorParam{\twonmext}{(n_1, k_1)}{(n_2, k_2)}{m}{\eps_1}$ such that:
    \begin{enumerate}
        \item $k_1 \geq 12 \log(n_2 - k_2) + 15$
        \item $n_2 \geq 30 \log(n_2) + 10\log(n_1) + 20$
        \item $k_2 \geq (\frac{4}{5} + \gamma)n_2 + 3\log(n_2) + \log(n_1) + 4$
        \item $\eps_1 \leq 3 \cdot 2^{- \frac{9\gamma}{10} n_2} + 40 \cdot \treError$ where $\treError = 2^{-\Omega(d^c)}$ with $c < \frac{1}{2}$
        \item $m = \Omega( \min\{ n_2, k_1 \} )$
    \end{enumerate}

    Using Lemma \ref{lma:tre}, $\extractorParam{\treExt}{(n_1, k_1)}{(m, m)}{\Omega(k_1)}{\eps_2}$ is a strong seeded extractor with $m = O(\log^2(n_1)\log(1/\eps_2))$. Thus by Lemma \ref{lma:RateHalfCompiler} there exists a two source non-malleable extractor $\extractorParam{\twonmext^*}{(n_1, k_1)}{(n_2, k_2)}{\Omega(k_1)}{\eps_1 + \eps_2}$.

    Furthermore, with $n_2 = o(n_1)$ and $k_1, \ell < \frac{n_1}{2}$, we get that $\twonmext^*$ has a rate of at most $\frac{n_1}{2(n_1 + n_2)} < \frac{1}{2}$.
\end{proof}
% >>>>>>>>>>>>>>>>> End paper

\section{Privacy Amplification against Memory Tampering Active Adversaries.}\label{App:PA}

Imagine Alice and Bob sharing some random but not uniform string $W$, they would like to "upgrade" their random string $W$ to uniformly random string. However Eve is fully controlling a channel between Alice and Bob and can arbitrarily tamper with the messages sent. The Privacy Amplification (PA) protocol guarantees that either Alice and Bob will end up with the same uniform string (unknown to Eve), or at least one of them will abort\footnote{If one of the parties, say Alice, aborts but Bob generates random string $R_B$ then we require $R_B$ to be uniform and unknown to Eve.}.

In \cite{AORSS20} the authors consider a stronger version of PA which they call a \emph{privacy amplification resilient against memory-tampering active adversaries}. In their model, Alice and Bob have access to a shared string $W$ and their local sources of (not necessarily uniform) randomness $A$ and $B$ respectively. At the beginning of the protocol Eve can select one party, say Alice, and corrupt her memory $F(W,A)=(\tilde W, \tilde A)$ (or $F(W,B)=(\tilde W, \tilde B)$ if Eve decides to corrupt Bob). If Eve did not corrupt the memory of any of the parties then the standard PA guarantees follow. On the other hand if Eve decides to corrupt one of the parties then either Alice and Bob agree on a uniformly random string (unknown to Eve) or the non-corrupted party will detect the tampering.

The following two definitions are taken verbatim from $\cite{AORSS20}$.
\begin{definition}[Protocol against memory-tampering active adversaries]
    An \emph{$(r,\ell_1,k_1,\ell_2,\linebreak[1] k_2,m)$-protocol against memory-tampering active adversaries} is a protocol between Alice and Bob, with a man-in-the-middle Eve, that proceeds in $r$ rounds.
    Initially, we assume that Alice and Bob have access to random variables $(W,A)$ and $(W,B)$, respectively, where $W$ is an $(\ell_1,k_1)$-source (the \emph{secret}), and $A$, $B$ are $(\ell_2,k_2)$-sources (the \emph{randomness tapes}) independent of each other and of $W$. The protocol proceeds as follows:
    \begin{description}
        \item In the first stage, Eve submits an arbitrary function $F:\{0,1\}^{\ell_1}\times\{0,1\}^{\ell_2}\to \{0,1\}^{\ell_1}\times\{0,1\}^{\ell_2}$ and chooses one of Alice and Bob to be corrupted, so that either $(W,A)$ is replaced by $F(W,A)$ (if Alice is chosen), or $(W,B)$ is replaced by $F(W,B)$ (if Bob is chosen).
    
   \item  In the second stage, Alice and Bob exchange messages $(C_1,C_2,\dots,C_r)$ over a non-authenticated channel, with Alice sending the odd-numbered messages and Bob the even-numbered messages, and Eve is allowed to replace each message $C_i$ by $C'_i$ based on $(C_1,C'_1,\dots,C_{i-1},C'_{i-1},C_i)$ and independent random coins, so that the recipient of the $i$-th message observes $C'_i$.
    Messages $C_i$ sent by Alice are deterministic functions of $(W,A)$ and $(C'_2,C'_4,\dots,C'_{i-1})$, and messages $C_i$ sent by Bob are deterministic functions of $(W,B)$ and $(C'_1,C'_3,\dots,C'_{i-1})$.
    
 \item    In the third stage, Alice outputs $S_A\in\{0,1\}^m\cup\{\bot\}$ as a deterministic function of $(W,A)$ and $(C'_2,C'_4,\dots)$, and Bob outputs $S_B\in\{0,1\}^m\cup\{\bot\}$ as a deterministic function of $(W,B)$ and $(C'_2,C'_4,\dots)$.

    \end{description}
    \end{definition}

\begin{definition}[Privacy amplification protocol against memory-tampering active adversaries]
    An \emph{$(r,\ell_1,k_1,\ell_2,k_2,m,\eps,\delta)$-privacy amplification protocol against memory-tampering active adversaries} is an $(r,\ell_1,k_1,\ell_2,k_2,m)$-protocol against memory-tampering active adversaries with the following additional properties:
    \begin{itemize}
        \item {\bf If Eve is passive:} In this case, $F$ is the identity function and Eve only wiretaps.
        Then, $S_A=S_B\neq \bot$ with $S_A$ satisfying
        \begin{equation}\label{eq:sa}
            S_A,C\approx_\eps U_m,C,
        \end{equation}
        where $C=(C_1,C'_1,C_2,C'_2,\dots,C_r,C'_r)$ denotes Eve's view.
        
        \item {\bf If Eve is active:} Then, with probability at least $1-\delta$ either $S_A=\bot$ or $S_B=\bot$ (i.e., one of Alice and Bob detects tampering), or $S_A=S_B\neq\bot$ with $S_A$ satisfying~\eqref{eq:sa}.
    \end{itemize}
\end{definition}

One building block of our extension is MAC:
\begin{definition}\label{def: mac}
 A family of functions $ \mac:\{0,1\}^{\gamma}\times \{0,1\}^{\tau}\rightarrow\{0,1\}^{\delta},\mathtt{Verify}: \{0,1\}^{\gamma} \times \{0,1\}^{\delta}\times \{0,1\}^{\tau} \rightarrow \{0,1\} $ is said to be a $ \mu -$secure one time message authentication code if 
	\begin{enumerate}
		\item For $ k_{a}\in_R\{0,1\}^{\tau},\ \forall \; m\in\{0,1\}^{\gamma} $, $ \Pr[\mathtt{Verify}(m,\mac_{k_{a}}(m),k_{a})=1]=1 $,\\[1mm]
		where for any $(m,t)$, $\mathtt{Verify}(m,t,k_{a}):=$
		$\begin{cases}
		1\text{ if }\ \mac(m,k_{a})=t\\
		0\text{ otherwise}
		\end{cases}$
		\item For any $ m\neq m',t,t' $, $ \Pr\limits_{k_{a}}[\mac(m,k_{a})=t|\mac(m',k_{a})=t']\leq \mu $, where $ k_{a}\in_R\{0,1\}^{\tau}$.
	\end{enumerate}
 \end{definition}

\begin{lemma}\label{lemma-mac}\cite{JKS93,DKKRS12}
	For any $ \gamma,\eps >0 $ there is an efficient $ \eps- $secure one time $\mac$ with $ \delta\leq (\log(\gamma)+\log(\dfrac{1}{\eps})) $, $ \tau\leq 2\delta $, where $ \tau,\gamma,\delta $ are key, message, tag length respectively.
\end{lemma}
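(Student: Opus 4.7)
}
The plan is to instantiate the MAC by the standard polynomial-evaluation (``Wegman--Carter style'') construction over a binary extension field, and then verify the two properties of Definition~\ref{def: mac} by a direct counting argument on roots of low-degree polynomials. Concretely, set $\delta := \lceil \log \gamma + \log(1/\eps) \rceil$ and work in the field $\bF := \bF_{2^\delta}$. A key is a pair $k_a = (a,b) \in \bF \times \bF$, so $\tau = 2\delta$. A message $m \in \bit^\gamma$ is parsed as a sequence of $L := \lceil \gamma/\delta \rceil$ coefficients $(m_0, \ldots, m_{L-1}) \in \bF^L$, identified with the polynomial $p_m(X) := \sum_{i=0}^{L-1} m_i X^i \in \bF[X]$. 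The MAC and verifier are
\[
    \mac(m, (a,b)) := p_m(a) + b, \qquad \mathtt{Verify}(m, t, (a,b)) := \indicate{p_m(a) + b = t}.
\]
Correctness (property~1) is immediate from the definitions, so nothing more needs to be said there.

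For the security property (property~2), I will fix arbitrary $m \neq m'$ in $\bit^\gamma$ and arbitrary tags $t, t' \in \bit^\delta$, and bound
\[
    \Pr_{(a,b)}\!\left[\, p_m(a) + b = t \,\middle|\, p_{m'}(a) + b = t' \,\right].
\]
The plan is to first observe that, since $b$ is uniform in $\bF$ and independent of $a$, conditioning on $p_{m'}(a) + b = t'$ is equivalent to fixing $b = t' - p_{m'}(a)$ and leaves the marginal of $a$ uniform over $\bF$. Substituting this into the numerator event reduces the probability to
\[
    \Pr_{a \from \bF}\!\left[\, p_m(a) - p_{m'}(a) = t - t' \,\right] = \Pr_{a \from \bF}\!\left[\, q(a) = 0 \,\right],
\]
where $q(X) := p_m(X) - p_{m'}(X) - (t-t')$. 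Because $m \neq m'$, the polynomial $q$ is nonzero and has degree at most $L-1 \le \lceil \gamma/\delta \rceil$, so it has at most $L$ roots in $\bF$, yielding a probability of at most $L / 2^\delta \le \lceil \gamma/\delta \rceil / 2^\delta$. The final step is then the purely arithmetic check that our choice of $\delta$ makes this quantity at most $\eps$: indeed $2^\delta \ge \gamma/\eps \ge (\gamma/\delta)/\eps$, which gives the claim, while trivially $\tau = 2\delta$.

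There is no real obstacle in this proof; the only ``delicate'' point is the independence argument used to reduce the conditional probability to an unconditional root-counting probability, which is handled cleanly by observing that the pad $b$ perfectly masks $p_{m'}(a)$ and therefore a single observed tag leaks no information about $a$. Efficiency of the construction is clear since arithmetic in $\bF_{2^\delta}$ and evaluation of a degree-$L$ polynomial can be performed in time polynomial in $\gamma$ and $\log(1/\eps)$.
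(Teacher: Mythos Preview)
The paper does not prove this lemma; it simply cites it from~\cite{JKS93,DKKRS12}. So there is no ``paper's approach'' to compare against, and the overall route you chose --- the polynomial-evaluation MAC over $\bF_{2^\delta}$ with an additive pad --- is exactly the standard construction behind those citations.

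That said, there is a genuine bug in your instantiation. You define $p_m(X) = \sum_{i=0}^{L-1} m_i X^i$, so the message contributes a constant term $m_0$. Now take $m \neq m'$ that differ only in the first block, i.e.\ $m_0 \neq m'_0$ but $m_i = m'_i$ for $i \geq 1$. Then $p_m(X) - p_{m'}(X) = m_0 - m'_0$ is a nonzero constant, and your polynomial $q(X) = (m_0 - m'_0) - (t - t')$ is identically zero whenever the adversary picks $t = t' + (m_0 - m'_0)$. In that case the forgery succeeds with probability $1$, not $\eps$. So the sentence ``Because $m \neq m'$, the polynomial $q$ is nonzero'' is false as written.

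The standard fix is trivial: shift the exponents and set $p_m(X) := \sum_{i=0}^{L-1} m_i X^{i+1}$, so that $p_m - p_{m'}$ has zero constant term and degree at least $1$ whenever $m \neq m'$. Then $q(X) = (p_m - p_{m'})(X) - (t-t')$ has the same nonzero leading coefficient as $p_m - p_{m'}$, hence is a nonzero polynomial of degree at most $L$, and the root-counting argument goes through with $L/2^\delta \leq \eps$ exactly as you computed. With this one-line change your proof is correct.
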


In the \cite{AORSS20} protocol Alice and Bob exchange the random strings $A$ and $B$ and then locally compute $R=\twonmext(A\concat B,W)$. They then split $R$ into $3$ parts, Alice sends the first part to Bob to prove she has gotten the right output, Bob then sends the second part to Alice to do the same. If this phase was successful then last part of $R$ is the shared uniform string. Figure \ref{fig:ITPA} illustrates the protocol.

\begin{figure*}[h]
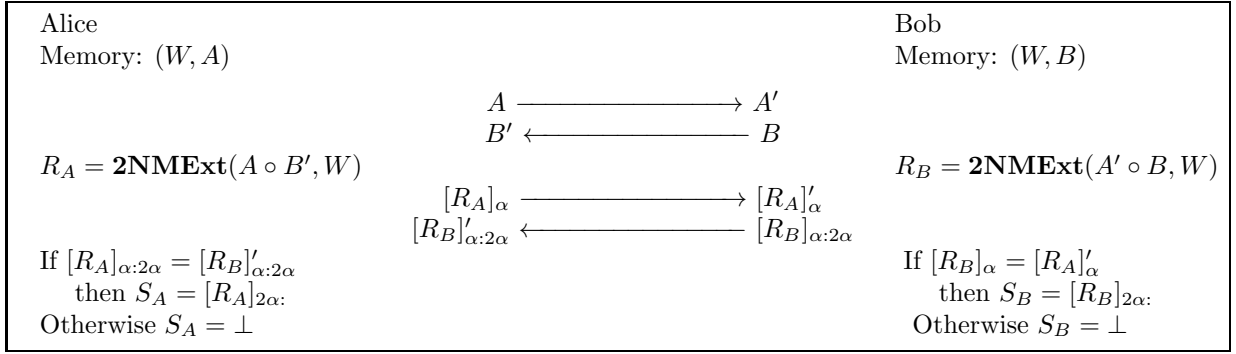

\centering
\fbox{
\begin{minipage}{15.8cm}
\begin{tabular}{lcl}
% \C &[Common Input: $(p,q,g,\mbox{H},\mbox{H}')$] &  \S \\
Alice & ~ &  Bob \\
Memory: $(W, A)$ & ~ &  
Memory: $(W, B)$ \vspace{0.2cm}\\
%\textbf{Offline} & ~ & \textbf{Offline} \\
%\textbf{for} $i=1$ \textbf{to} $v$ \textbf{do} & ~ & 
 %     \textbf{for} $i=1$ \textbf{to} $w$ \textbf{do} \\

%$One Side$ & ~ & ~ \\
 ~ & $A$ $\xrightarrow{\quad \quad \quad \quad  \quad \quad \quad \quad}$ $A'$ ~\\
 %~ & ~ & \textbf{Online} \\
 ~ & $B'$ $\xleftarrow{\quad \quad \quad \quad  \quad \quad \quad \quad}$ $B$ ~\\
$R_A=\twonmext(A\concat B',W)$  ~ &  ~ &  $R_B=\twonmext(A'\concat B,W)$\\
 ~ & $[R_A]_\alpha$ $\xrightarrow{\quad \quad \quad \quad  \quad \quad \quad \quad}$ $[R_A]'_\alpha$ ~\\
 ~ & $[R_B]'_{\alpha : 2\alpha}$ $\xleftarrow{\quad \quad \quad  \quad \quad  \quad \quad \quad}$ $[R_B]_{ \alpha : 2\alpha}$ ~\\
If $[R_A]_{\alpha : 2\alpha}=[R_B]'_{\alpha : 2\alpha}$ & ~ & ~If $[R_B]_{\alpha}=[R_A]'_{\alpha}$ \\
$\quad \text{  then  } S_A= [R_A]_{ 2\alpha:}$ & ~ & ~$\quad \text{  then  } S_B= [R_B]_{2\alpha:}$ \\
Otherwise $S_A= \bot$ & ~& ~ Otherwise $S_B= \bot$ \\
\end{tabular}

\end{minipage}

}
\caption{Verbatim from \cite{AORSS20}. Privacy amplification protocol against memory-tampering active adversaries. In the above, for an $n$-bit string $x$ we define $[x]_i=(x_1,x_2,\dots,x_i)$, $[x]_{i:j}=(x_{i+1},\dots,x_j)$, and $[x]_{j:}=(x_{j+1},\dots,x_n)$.}
\label{fig:ITPA}
\end{figure*}

Since one of the sources of randomness might be faulty, even if the original $A, B$ were uniform, one requires a left-strong non-malleable extractor $\twonmext$ to remain secure for the first source with entropy below $0.5$, the construction of such an extractor prior to this work was unknown\footnote{Authors of \cite{AORSS20} proceed to construct a computational non-malleable extractors with parameters that would allow for this protocol to go through.}.

The above protocol obtains very short output compared to entropy of $W$, whereas ideally we would like to obtain something close to entropy of $W$.
If Alice and Bob have access to uniform randomness, one can extend this protocol to output almost as many bits as $W$'s entropy (see Figure \ref{fig:ITPA2}). After the execution of the \cite{AORSS20} protocol we have the additional guarantee (see proof of Theorem 6, point (b)) that if $S_A\neq \bot $ and $S_B\neq \bot $ then we know that $S_A = S_B$ and are close to uniform and moreover Eve did not tamper with $W$ of either of the parties (this is only achieved with standard notion of non-malleability, not the one from~\cite{GSZ21}). If Alice and Bob have access to some extra uniform bits (if $A$ and $B$ were uniform to start with then we could cut them in half $A=A_1\concat A_2$ and $B=B_1\concat B_2$, use the first half to run the original protocol by \cite{AORSS20} and save the other half for later) then we can continue the protocol (in the spirit of~\cite{DW09}): 
Alice will send $A_2,\sigma_A$ to Bob, where $\sigma_A$ is a Message Autentication Code of $A_2$ with first half of $S_A$ as a key. Bob will do the same: send $B_2, \sigma_B$ to Alice using other half of $S_B$ as a MAC key. There is a one final problem, we know that one of $A_2$ or $B_2$ is uniform but we don't know which (Eve could have left $W$ unchanged but could have tampered with random coins $A$ and $B$), moreover one of them might depend on $W$. Notice that $A_2$ and $B_2$ will remain independent, and one of them is independent of $W$ and uniform. Therefore $A_2+B_2$ is uniform and independent of $W$. Now all we have to do is plug in $W$ and $A_2+B_2$ into seeded extractor $\sext(W,A_2+B_2)$ and we can extract almost whole entropy out of $W$ (and the output remains hidden from the view of Eve).

%First notice that if $\sext$ is a seeded extractor then one of the $\sext(W,A_2)$ or $\sext(W,B_2)$ has to be uniform. If the output size of $\sext$ is below half of $W$'s entropy then $\sext(W,A_2)$ and $\sext(W,B_2)$ will be independent. The argument is straightforward: say that $A_2$ has entropy, then $\avgCondMinEnt{W}{\sext(W,B_2),B_2}$ is at least half (since entropy of $W$ doesn't drop by more then the size of the leakage), Thus $W$ has enough entropy to make $\sext(W,A_2)$ uniform even given $\sext(W,B_2),B_2$.

\begin{figure*}[h]
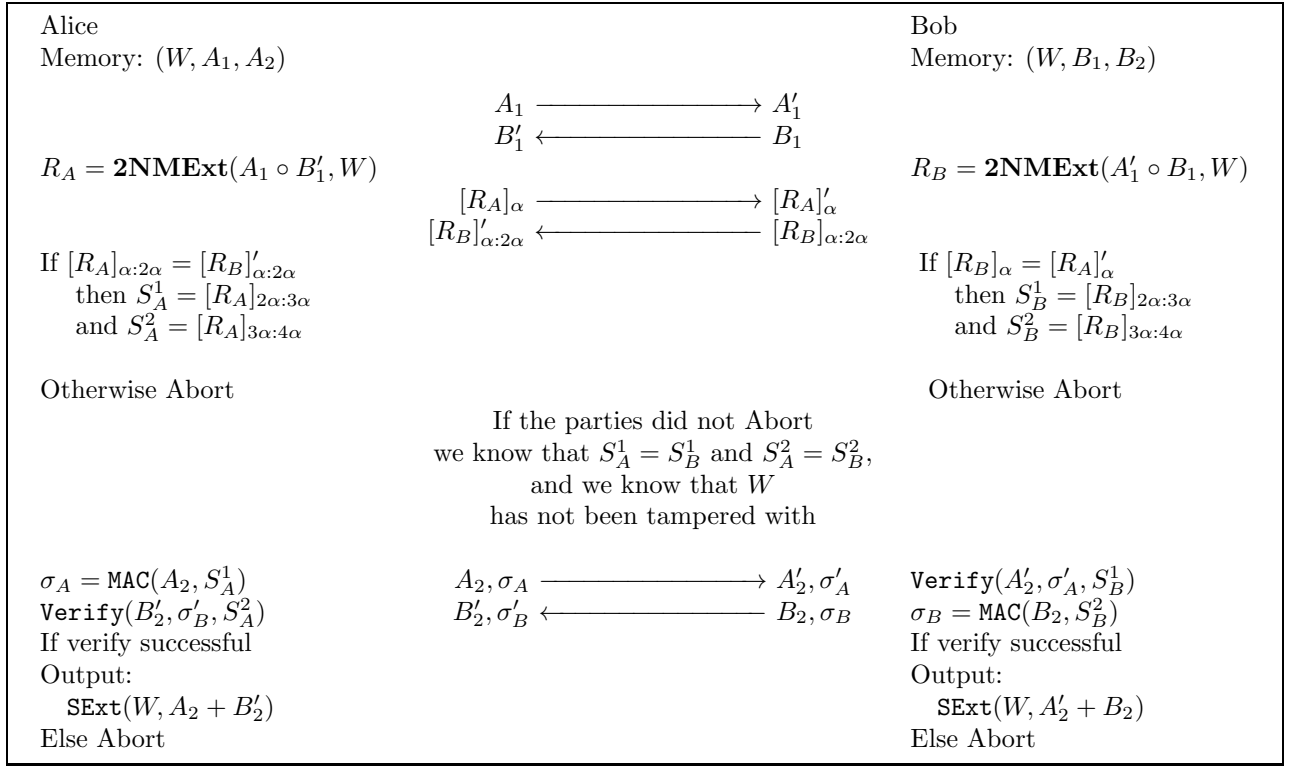

\centering
\fbox{
\begin{minipage}{16.5cm}%{15.5cm}
\begin{tabular}{lcl}
% \C &[Common Input: $(p,q,g,\mbox{H},\mbox{H}')$] &  \S \\
Alice & ~ &  Bob \\
Memory: $(W, A_1, A_2)$ & ~ &  
Memory: $(W, B_1, B_2)$ \vspace{0.2cm}\\
%\textbf{Offline} & ~ & \textbf{Offline} \\
%\textbf{for} $i=1$ \textbf{to} $v$ \textbf{do} & ~ & 
 %     \textbf{for} $i=1$ \textbf{to} $w$ \textbf{do} \\

%$One Side$ & ~ & ~ \\
 ~ & $A_1$ $\xrightarrow{\quad \quad \quad \quad  \quad \quad \quad \quad}$ $A'_1$ ~\\
 %~ & ~ & \textbf{Online} \\
 ~ & $B'_1$ $\xleftarrow{\quad \quad \quad \quad  \quad \quad \quad \quad}$ $B_1$ ~\\
$R_A=\twonmext(A_1\concat B'_1, W)$  ~ &  ~ &  $R_B=\twonmext(A'_1\concat B_1, W)$\\
 ~ & $[R_A]_\alpha$ $\xrightarrow{\quad \quad \quad \quad  \quad \quad \quad \quad}$ $[R_A]'_\alpha$ ~\\
 ~ & $[R_B]'_{\alpha : 2\alpha}$ $\xleftarrow{\quad \quad \quad  \quad \quad  \quad \quad \quad}$ $[R_B]_{ \alpha : 2\alpha}$ ~\\
If $[R_A]_{\alpha : 2\alpha}=[R_B]'_{\alpha : 2\alpha}$ & ~ & ~If $[R_B]_{\alpha}=[R_A]'_{\alpha}$ \\
$\quad \text{  then  } S^1_A= [R_A]_{ 2\alpha: 3\alpha}$ & ~ & ~$\quad \text{  then  } S^1_B= [R_B]_{2\alpha: 3\alpha}$ \\
$\quad \text{  and  } S^2_A= [R_A]_{ 3\alpha: 4\alpha}$ & ~ & ~$\quad \text{  and  } S^2_B= [R_B]_{3\alpha: 4\alpha}$ \\
%Otherwise $S^1_A=S^2_A= \bot$ & ~& ~ Otherwise $S^1_B=S^2_B= \bot$ \\
~ & & ~ \\
Otherwise Abort & ~& ~ Otherwise Abort \\
~ &If the parties did not Abort  & ~\\
~ &we know that $S^1_A=S^1_B$ and $S^2_A=S^2_B$,  & ~\\
~ &and we know that $W$ & ~\\
~ &has not been tampered with  & ~\\
~ & & ~ \\
%If parties did not abort\\ 
%then we continue:\\
$\sigma_A= \mac(A_2,S^1_A)$ & $A_2, \sigma_A$ $\xrightarrow{\quad \quad \quad \quad  \quad \quad \quad \quad}$ $A'_2, \sigma'_A$ & $\mathtt{Verify}(A'_2,\sigma'_A, S^1_B)$\\
 %~ & ~ & \textbf{Online} \\
 $\mathtt{Verify}(B'_2,\sigma'_B, S^2_A)$ & $B'_2, \sigma'_B$ $\xleftarrow{\quad \quad \quad \quad  \quad \quad \quad \quad}$ $B_2, \sigma_B$ & $\sigma_B=\mac(B_2,S^2_B)$\\
 If verify successful & & If verify successful\\
 Output: & & Output: \\
 %\quad $\sext(W,A_2)+\sext(W,B'_2)$ & & \quad $\sext(W,A'_2)+\sext(W,B_2)$\\
 \quad $\sext(W,A_2+B'_2)$ & & \quad $\sext(W,A'_2+B_2)$\\
 
 Else Abort & & Else Abort\\
\end{tabular}

\end{minipage}

}
\caption{Extension of the original PA protocol. $R$ is split into $4$ parts instead of $3$. Here $\mac$ is a standard information theoretic message authentication code (MAC). And $\sext$ is any seeded extractor. When party Aborts it stops responding and the final output is $\bot$. }
\label{fig:ITPA2}
\end{figure*}
%Let us copy here the construction from the Figure \ref{fig:ITPA2}

Let us analyse the protocol described in Figure \ref{fig:ITPA2} (we copy the figure below). Let $\twonmext$ be a $[(\ell_1, k_1-2\ell_2-2\gamma-1), (2\cdot\ell_2, \ell_2-\gamma-1)  \mapsto 4\alpha \sim \epsilon]$ strong non-malleable extractor for some parameter $\gamma>0$. Let shared secret $W\in \{0,1\}^{\ell_1}$ have min-entropy $k_1$, let $A_1,A_2,B_1,B2 \in \{0,1\}^{\ell_2}$ be uniform random variables. 
If Eve is passive the security is straight forward thus we will only consider the case of active Eve. We will follow the original proof \cite{AORSS20} very closely.
Let us focus on the case where Alice is the one with corrupted memory $F(W,(A_1,A_2))= \tilde W, (\tilde A_1, \tilde A_2)$. Since randomness $(\tilde A_1,\tilde A_2)$ is controlled by the adversary we can simply reveal $(\tilde a_1, \tilde a_2)=(\tilde A_1,\tilde A_2)$ it along with original randomness $(a_1,a_2)=(A_1,A_2)$, this makes $\tilde W$ only a function of $W$, let's denote it as $\tilde W= f(W)$, moreover let us denote $B'_1=g(B_1)$. As in the original paper we define  $\cL=\{w:f(w)=w\}$ and $\cR=\{b_1:g(b)=b\}$.

In the proof of Theorem 6 in \cite{AORSS20} in point (2.b) authors prove that if $\Pr(W\notin \cL \lor B_1\notin \cR \lor a_1=\tilde a_1 ) > 2^{-\gamma}$ then $\Pr(S_B\neq \bot\;|\; W\notin \cL \lor B_1 \notin \cR)<\epsilon + 2^{-\alpha}$, thus Bob will abort.

The only case left to analyse is the point (2.a) where $W\in \cL \land B_1\in \cR \land a_1=\tilde a_1$. We assume that  
$\Pr(W\in \cL \land B_1\in \cR \land a_1=\tilde a_1) > 2^{-\gamma}$ (else this case happens with negligible probability). Authors argue that $W$ has enough entropy and thus $R_A$ is $\epsilon$ close to uniform. 
If $[R_A]'_\alpha=[R_A]_\alpha$ and $[R_B]'_{\alpha:2\alpha}=[R_B]_{\alpha:2\alpha}$, then $S^1_A\concat S^2_A=S^1_B\concat S^2_B \neq \bot$ and $S^1_A\concat S^2_A$ is $\epsilon$ close to uniform given Eve's view.
Now we know that $S^1_A\concat S^2_A=S^1_B\concat S^2_B \neq \bot$ and $\tilde W= W$ so we can follow with the analysis of the extension:
First of all the $\avgCondMinEnt{W}{A_1,A_2,\tilde A_1, \tilde A_2, W\in \cL}> k_1-2\ell_2-\gamma$ (where $|A_i|=\ell$, and $\gamma$ penalty comes from probability of the event $W\in \cL$). Now notice that by the security of MAC either $\Pr((A_2\neq A'_2 \lor B_2\neq B'_2) \land \text{ neither Alice or Bob Aborts}) <2\cdot 2^{-\Omega(\alpha)}$.

Further observe that even if Eve controls $A_2$, and $ A_2$ has no entropy and it might depend on $W$, still $B_2$ is uniform and independent of $(A_2)$. Thus $A_2+B_2$ is uniform\footnote{Technically speaking Eve can abort protocol by tampering with $A_2$ or $B_2$, Alice and Bob will simply abort. However $A_2$ and $B_2$ are no longer fully uniform conditioned on the event that Eve let them through. This is not a problem, by Lemma \ref{lma:extReduction}, this only doubles extraction epsilons.} and independent of $W$. Now we have uniform independent seed, all we have to do is extract:

Let $\sext:\{0,1\}^{\ell_1}\times\{0,1\}^{\ell_2}\rightarrow \{0,1\}^{ 0.999\cdot(k_1-2\ell_2-\gamma)}$ is a strong seeded extractor\footnote{Constant $0.999$ is just a placeholder for any constant less then $1$. By \cite{GUV09} we know that such explicit extractor exists.} with the error $2^{-\Omega(\ell_2)}$. 
Since $W$ has enough entropy  $\sext(W,A_2+B_2)$ is $2^{-\Omega(\ell_2)}$ close to uniform given the view of Eve. The analysis for Eve corrupting Bob is symmetrical. Thus we obtain the following:

%Then
%$\avgCondMinEnt{W}{A_1,A_2,\tilde A_1, \tilde A_2, W\in \cL, \sext(W,A_2)}> k_1-2\ell-\gamma - 0.49 \cdot (k_1-2\ell-\gamma) = 0.51 \cdot (k_1-2\ell-\gamma)$, thus $W$ has enough entropy to extract\footnote{Technically speaking Eve can abort protocol by tampering with $A_2$ or $B_2$, Alice and Bob will simply abort. However $A_2$ and $B_2$ are no longer fully uniform conditioned on the event that Eve let them through. This is not a problem, by Lemma \ref{lma:extReduction}, this only double epsilons.} and $\sext(W,B_2)$ is $2^{-\Omega(\ell_2)}$ close to uniform given $\sext(W,A_2)$ and the view of Eve. The analysis for Eve corrupting Bob is symmetrical. Thus we obtain the following:

\begin{theorem}
Let $\twonmext$ be a $[(\ell_1, k_1-2\ell_2-2\gamma-1), (2\cdot\ell_2, \ell_2-\gamma-1)  \mapsto 4\alpha \sim \epsilon]$ strong non-malleable extractor.
Then, there exists an $(r=6,\ell_1,k_1,2\cdot\ell_2,2\cdot \ell_2,0.999\cdot(k_1-2\ell-\gamma) ,2^{-\Omega(\ell_2)} ,\delta=\eps+2^{-\alpha}+2\cdot 2^{-\gamma}+2^{-\Omega(\alpha)})$-privacy amplification protocol against memory-tampering active adversaries.
\end{theorem}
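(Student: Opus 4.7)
The plan is to follow the analysis of the base privacy amplification protocol from the proof of Theorem~6 of~\cite{AORSS20} to handle the first six messages up through $S^1$ and $S^2$, and then to layer the MAC-and-seeded-extractor extension on top. When Eve is passive, the claim follows immediately from the $\eps$-extraction property of $\twonmext$ and the $2^{-\Omega(\ell_2)}$-extraction property of $\sext$. When Eve is active, by symmetry we may assume Alice's memory is corrupted to $F(W, (A_1, A_2)) = (\tilde W, \tilde A_1, \tilde A_2)$; we reveal both $(A_1, A_2)$ and $(\tilde A_1, \tilde A_2)$, which collapses $\tilde W$ to a deterministic function $f(W)$ and Eve's tampering of $B_1$ to a deterministic function $g(B_1)$. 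Setting $\cL = \{w : f(w) = w\}$ and $\cR = \{b_1 : g(b_1) = b_1\}$ places us in exactly the set-up of~\cite{AORSS20}.

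The first case is $\Pr(W \notin \cL \lor B_1 \notin \cR \lor a_1 \ne \tilde a_1) > 2^{-\gamma}$; here the existing argument of~\cite{AORSS20} establishes $\Pr(S_B \ne \bot \mid W \notin \cL \lor B_1 \notin \cR) < \eps + 2^{-\alpha}$, so Bob aborts and the extension phase produces no non-$\bot$ output. Otherwise we condition on $W \in \cL \land B_1 \in \cR \land a_1 = \tilde a_1$: $W$ retains min-entropy at least $k_1 - \gamma$, and the strong non-malleability of $\twonmext$ places $R_A$ within $\eps$ of uniform given Eve's view; if both consistency checks on $[R_A]_\alpha$ and $[R_B]_{\alpha : 2\alpha}$ pass, Alice and Bob agree on the same $S^1 \concat S^2$, which is $\eps$-close to uniform given Eve's view, so in particular $S^1$ and $S^2$ are fresh independent MAC keys and $W = \tilde W$ on this event.

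For the extension, I would invoke the $2^{-\Omega(\alpha)}$-security of the MAC (Lemma~\ref{lemma-mac}) twice to conclude that, conditioned on neither party aborting, $A'_2 = A_2$ and $B'_2 = B_2$ except with probability $2 \cdot 2^{-\Omega(\alpha)}$. The crucial observation is that even though Eve's memory tampering could make $A_2$ arbitrarily correlated with $W$, the honest party Bob's tape $B_2$ remains uniform and independent of $(W, A_2)$; hence $A_2 + B_2$ is uniform and independent of $W$. Because conditioning on ``neither party aborts'' may slightly bias $B_2$, I would appeal to Lemma~\ref{lma:extReduction} to absorb this at the cost of a constant factor in the extractor error. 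Since $\avgCondMinEnt{W}{A_1, A_2, \tilde A_1, \tilde A_2, W \in \cL} \ge k_1 - 2\ell_2 - \gamma$, the strong seeded extractor $\sext$ from~\cite{GUV09} outputs $0.999 \cdot (k_1 - 2\ell_2 - \gamma)$ bits that are $2^{-\Omega(\ell_2)}$-close to uniform given Eve's full view. Summing the contributions yields $\delta = \eps + 2^{-\alpha} + 2 \cdot 2^{-\gamma} + 2^{-\Omega(\alpha)}$.

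The main obstacle is verifying that $B_2$ still has its full entropy conditioned on the entire first-phase transcript and the no-abort event: since $A_2$ may have been chosen by Eve as an arbitrary function of $W$, we cannot afford any additional leakage about $B_2$ through the first six messages. Here I would use the fact that the messages exchanged in the first phase depend only on $(W, A_1, B_1)$, and that $B_2$ is sampled independently of all of these, so revealing the transcript does not disturb the distribution of $B_2$ given $W$; the only subtlety is the no-abort conditioning, which Lemma~\ref{lma:extReduction} handles at the promised error.
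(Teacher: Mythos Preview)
Your proposal is correct and follows essentially the same route as the paper: reveal $(A_1,A_2,\tilde A_1,\tilde A_2)$ to reduce to deterministic $f,g$, invoke the case analysis of Theorem~6 in \cite{AORSS20} over $\cL,\cR$, use the MAC to force $A'_2=A_2$ and $B'_2=B_2$, observe that $A_2+B_2$ is uniform and independent of $W$ because the honest party's $B_2$ is fresh, absorb the no-abort conditioning via Lemma~\ref{lma:extReduction}, and finish with $\sext$. Your discussion of why $B_2$ retains full entropy after the first-phase transcript is actually more explicit than the paper's, which relegates that point to a footnote.
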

And thus when we plug in our extractor and some example parameters we get:
\begin{corollary}
 For shared secret $W$ with $|W|=n$ and  $\minEnt{W}>0.803 \cdot n$ and $|A_i|=|B_i|=0.001n$ we get privacy amplification protocol that outputs $0.8 \cdot n$ uniform bits, and has a security  $2^{-\Omega(\sqrt{n})}$.
\end{corollary}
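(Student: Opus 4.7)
The plan is to instantiate the preceding theorem with the stated numerical parameters and verify that the resulting non-malleable two-source extractor from Theorem~\ref{thm:betterTwoNMExt} can play the role of $\twonmext$. Concretely, setting $\ell_1 = n$, $\ell_2 = 0.001\, n$, and $k_1 > 0.803\, n$, the theorem above demands a strong non-malleable extractor with parameters
\[
\bigl[(n,\ k_1-2\ell_2-2\gamma-1),\ (2\ell_2,\ \ell_2-\gamma-1)\ \mapsto\ 4\alpha\ \sim\ \eps\bigr].
\]
With these numbers the first input has length $n$ and min-entropy at least $0.801\, n - 2\gamma - 1$ (rate $> 0.8$), while the second input has length $0.002\, n$ and min-entropy $\approx 0.001\, n$. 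This is exactly the asymmetric regime Theorem~\ref{thm:betterTwoNMExt} handles.

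The first step is therefore to apply Theorem~\ref{thm:betterTwoNMExt} with its low-entropy source $X$ instantiated by $A_1\concat B'_1$ (of length $0.002\, n$, entropy $\Omega(n)$) and its rate-$0.8$ source $Y$ instantiated by $W$ (of length $n$, entropy $> 0.801\, n$). I would take $\gamma = \Theta(\sqrt{n})$ so that $2^{-\gamma} = 2^{-\Omega(\sqrt n)}$ and the $O(\gamma)$ entropy losses are negligible compared with the rate-$0.8$ slack. The three numerical conditions of Theorem~\ref{thm:betterTwoNMExt} then need to be checked: $k_x \approx 0.001\, n$ easily exceeds $12\log(n_y - k_y) + 15 = O(\log n)$; the inequality $n_y = n \geq 30 \log n + 10 \log(0.002\, n) + 20$ is immediate; and $k_y > 0.801\, n - 2\gamma - 1 \geq (4/5 + \gamma')\, n + O(\log n)$ holds with $\gamma' = \Theta(10^{-3})$ for all sufficiently large $n$. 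The output length is $m = \Omega(\min\{n_y, k_x\}) = \Omega(n)$, so we may set $\alpha = \Theta(n)$, and the error is $\eps \leq 3 \cdot 2^{-(9\gamma'/10)\, n} + 40 \cdot 2^{-\Omega(n^{c})}$ for some $c < 1/2$, which simplifies to $2^{-\Omega(\sqrt n)}$.

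The second step is to plug these numbers into the theorem above. The output length is $0.999 \cdot (k_1 - 2\ell_2 - \gamma) > 0.999 \cdot (0.803\, n - 0.002\, n - o(n)) > 0.8\, n$ for sufficiently large $n$, as claimed. The security bound is $\max\{2^{-\Omega(\ell_2)},\ \eps + 2^{-\alpha} + 2 \cdot 2^{-\gamma} + 2^{-\Omega(\alpha)}\}$; the first term is $2^{-\Omega(n)}$, while inside the second, all of $\eps,\ 2^{-\alpha},\ 2^{-\Omega(\alpha)}$ are at most $2^{-\Omega(\sqrt n)}$ by our choices, and $2 \cdot 2^{-\gamma} = 2^{-\Omega(\sqrt n)}$ by design. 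All four contributions are therefore $2^{-\Omega(\sqrt n)}$, yielding the claimed security.

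The only subtle point, and the place to be careful, is the choice of $\gamma$: one needs $\gamma$ simultaneously large enough to make $2^{-\gamma}$ negligible but small enough that $2\ell_2 + 2\gamma + 1$ does not eat into the $0.803 - 0.800 = 0.003$ slack between $\minEnt{W}$ and the $(4/5 + \gamma')\, n$ requirement of Theorem~\ref{thm:betterTwoNMExt}. Taking $\gamma = \Theta(\sqrt n)$ (in fact any $\gamma = o(n)$ that grows at least linearly in $\sqrt n$) satisfies both constraints and matches the advertised $2^{-\Omega(\sqrt n)}$ security; this is the main bookkeeping step but poses no conceptual difficulty.
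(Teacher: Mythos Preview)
Your proposal is correct and follows exactly the approach the paper takes: the corollary is stated in the paper without proof, preceded only by ``And thus when we plug in our extractor and some example parameters we get,'' so the entire argument is the parameter instantiation you carry out. Your verification of the numerical conditions of Theorem~\ref{thm:betterTwoNMExt} and the output/security bookkeeping is more detailed than anything in the paper itself; the choice $\gamma=\Theta(\sqrt n)$ is precisely the right balance to make $2^{-\gamma}$ match the stated security while keeping the $O(\gamma)$ entropy losses below the $0.003\,n$ slack.
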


\bibliographystyle{splncs04}
\bibliography{writeup}

\end{document}